\newtheorem{thm}{Theorem}[section]
\newtheorem{prop}[thm]{Proposition}
\newtheorem{lemma}[thm]{Lemma}
\newtheorem{cor}[thm]{Corollary}
\newtheorem{definition}[thm]{Definition}
\newtheorem{remark}[thm]{Remark}
\newtheorem{example}[thm]{Example}
\numberwithin{equation}{section}
\def\bR{\mathbb{R}}
\def\bC{\mathbb{C}}
\def\bN{\mathbb{N}}
\def\bM{\mathbb{M}}
\def\cH{\mathcal{H}}
\def\cB{\mathcal{B}}
\def\BH{\cB(\cH)}
\def\cP{\mathcal{P}}
\def\cC{\mathcal{C}}
\def\cD{\mathcal{D}}
\def\Tr{\mathrm{Tr}\,}
\def\Sp{\mathrm{spec}\,}
\def\eps{\varepsilon}
\def\<{\langle}
\def\>{\rangle}
\def\ffi{\varphi}
\def\tr{\mathrm{tr}}
\def\Re{\mathrm{Re}\,}
\def\Im{\mathrm{Im}\,}
\def\sa{\mathrm{sa}}
\def\ext{\mathrm{ext}}
\def\fS{\frak{S}}
\def\cA{\mathcal{A}}
\begin{document}
\allowdisplaybreaks

\centerline{\LARGE Quantum $f$-divergences in von Neumann algebras I.}
\medskip
\centerline{\LARGE Standard $f$-divergences}
\bigskip
\bigskip
\centerline{\Large Fumio Hiai\footnote{{\it E-mail address:} hiai.fumio@gmail.com}}

\begin{center}
$^1$\,Tohoku University (Emeritus), \\
Hakusan 3-8-16-303, Abiko 270-1154, Japan
\end{center}

\begin{center}
{\Large Dedicated to the memory of D\'enes Petz}
\end{center}

\medskip
\begin{abstract}
We make a systematic study of standard $f$-divergences in general von Neumann algebras. An
important ingredient of our study is to extend Kosaki's variational expression of the relative
entropy to an arbitary standard $f$-divergence, from which most of the important properties of
standard $f$-divergences follow immediately. In a similar manner we give a comprehensive
exposition on the R\'enyi divergence in von Neumann algebra. Some results on relative
hamiltonians formerly studied by Araki and Donald are improved as a by-product.

\bigskip\noindent
{\it Keywords and phrases:}
Standard $f$-diveregence, relative entropy, R\'enyi divergence, relative modular operator,
standard form, operator convex function, Haagerup's $L^p$-space, relative hamiltonian.

\bigskip\noindent
{\it Mathematics Subject Classification 2010:} 81P45, 81P16, 46L10, 46L53, 94A17
\end{abstract}

\section{Introduction}

The notion of quantum divergences is among the most significant ones in quantum information
theory, with various applications, in particular, to defining important quantum quantities to
descriminate between states of a quantum system. A quantum system is mathematically described,
in most cases, by an operator algebra $\cA$ on a Hilbert space (either finite-dimensional or
infinite-dimensional), and a quantum divergence is generally given as a function
$S(\rho\|\sigma)$ of two states (or more generally, two positive linear functionals) $\rho$
and $\sigma$ on $\cA$. Among various quantum divergences, the most notable is the
\emph{relative entropy} having a long history as the quantum version of the Kullback-Leibler
divergence in classical theory. Indeed, the relative entropy $D(\rho\|\sigma)$ was first
introduced in 1962 by Umegaki \cite{Um} for normal states $\rho,\sigma$ on a semifinite von
Neumann algebra $M$ as follows:
\begin{align}\label{F-1.1}
D(\rho\|\sigma):=\begin{cases}
\tau(d_\rho(\log d_\rho-\log d_\sigma)) & \text{if $s(\rho)\le s(\sigma)$}, \\
+\infty & \text{otherwise},
\end{cases}
\end{align}
where $\tau$ is a semifinite trace on $M$, $d_\rho$ is the density operator of $\rho$ with
respect to $\tau$ and $s(\rho)$ is the support projection of $\rho$. Later in 1970's Araki
\cite{Ar5,Ar2} extended Umegaki's relative entropy, by introducing the \emph{relative modular
operator} $\Delta_{\rho,\sigma}$ for normal states $\rho,\sigma$, to general
von Neumann algebras as
\begin{align}\label{F-1.2}
D(\rho\|\sigma):=\begin{cases}
-\<\xi_\rho,(\log\Delta_{\sigma,\rho})\xi_\rho\>
=\<\xi_\sigma,(\Delta_{\rho,\sigma}\log\Delta_{\rho,\sigma})\xi_\sigma\>
& \text{if $s(\rho)\le s(\sigma)$}, \\
+\infty & \text{otherwise},
\end{cases}
\end{align}
where $\xi_\rho$ is the vector representative of $\rho$ in the standard representation of $M$
(see Section 2.1 below). A remarkable progress on the relative entropy was made when
Kosaki \cite{Ko} gave a variational expression of $D(\rho\|\sigma)$ and showed that all
important properties of $D(\rho\|\sigma)$ immediately follow from the expression.

A more general form of quantum divergences was considered by Kosaki \cite{Ko0} to generalize
the Wigner-Yanase-Dyson-Lieb concavity, and later discussed in more detail by Petz
\cite{Pe,Pe0} with name \emph{quasi-entropy}. The reader can refer to \cite{OP} for details
on the relative entropy and quasi-entropies. The \emph{standard $f$-divergences}
$S_f(\rho\|\sigma)$ studied in, e.g., \cite{HMPB,HiMo} in the finite-dimensional case are a
special case of quasi-entropies but a natural class of quantum divergences generalizing the
classical $f$-divergences. A significant property satisfied by quantum divergences mentioned
above is the monotonicity property, that is, the inequality
\begin{align}\label{F-1.3}
S(\rho\circ\Phi\|\sigma\circ\Phi)\le S(\rho\|\sigma)
\end{align}
under positive linear maps $\Phi:\cB\to\cA$ between operator algebras assumed to be
unit-preserving and completely positive (or more weakly a Schwarz map). An important issue in
connection with this property is to prove the \emph{reversibility} of $\Phi$, i.e., the
existence of a recovery map $\Psi:\cA\to\cB$ satisfying $\rho\circ\Phi\circ\Psi=\rho$ and
$\sigma\circ\Phi\circ\Psi=\sigma$ under the equality condition in the monotonicity inequality
in \eqref{F-1.3}. In the special case where $\cB$ is a subalgebra of $\cA$ and $\Phi$ is the
injection, the reversibility of $\Phi$ on $\{\rho,\sigma\}$ is called the \emph{sufficiency}
of $\cB$ for $\{\rho,\sigma\}$. This line of research in von Neumann algebras was initiated
by Petz, e.g., \cite{Pe1,Pe3,JP} for the relative entropy and the transition probability
(i.e., the R\'enyi divergence with parameter $1/2$). The extension of the reversibility to
more general standard $f$-divergence, though in the finite-dimensional case, has been done
in \cite{HMPB,Je0,HiMo}.

Our aim in this paper is to propose a new approach to the theory of standard $f$-divergences
in general von Neumann algebras. For this, in Section 2 we first give the definition and some
basic properties of the standard $f$-divergence $S_f(\rho\|\sigma)$ of normal positive linear
functionals $\rho,\sigma$ on a von Neumann algebra, when $f$ is a general convex function on
$(0,\infty)$. The idea of the definition is essentially the same as \eqref{F-1.2}, based on
the relative modular operator $\Delta_{\rho,\sigma}$, but without  any assumption on the
boundary values of $f(t)$ at $t$ zero and infinity. For further discussions we assume that
$f$ is an operator convex function on $(0,+\infty)$. In Section 3 we give a variational
expression of $S_f(\rho\|\sigma)$ by utilizing the integral expression of $f$ and modifying
Kosaki's expression \cite{Ko} of the relative entropy. Our variational expression is a bit
different from Kosaki's one even in the case of the relative entropy. Next in Section 4, we
present various properties of $S_f(\rho\|\sigma)$ such as monotonicity property in
\eqref{F-1.3}, joint lower semicontinuity, joint convexity, etc.\ as straightforward
consequences from the variational expression. In this way, we can study the standard
$f$-divergence in von Neumann algebras along a very streamlined track, which is a special
feature of our presentation.

The quantum \emph{R\'enyi divergence} $D_\alpha(\rho\|\sigma)$ with parameter
$\alpha\in[0,+\infty)\setminus\{1\}$ is of quite use in quantum information as a quantum
version of the classical R\'enyi divergence. In the finite-dimensional (or the matrix) case,
the R\'enyi divergence is defined by
$$
D_\alpha(\rho\|\sigma):={1\over\alpha-1}\log{Q_\alpha(\rho\|\sigma)\over\Tr\rho},
$$
where $Q_\alpha(\rho\|\sigma):=\Tr(\rho^\alpha\sigma^{1-\alpha})$ that is essentially the
standard $f$-divergence with $f(t)=t^\alpha$. So $D_\alpha$ is indeed a variant of standard
$f$-divergences. In these years it has also been widely known that another type of quantum
R\'enyi divervgence, called the \emph{sandwiched R\'enyi divergence} and denoted by
$\widetilde D_\alpha(\rho\|\sigma)$, is equally useful in quantum information, in particular,
in quantum state discrimination, see \cite{MO} for example. The definition of
$\widetilde D_\alpha$ is similar to $D_\alpha$ by replacing $Q_\alpha(\rho\|\sigma)$ with
$\widetilde Q_\alpha(\rho\|\sigma):=\Tr\bigl((\sigma^{(1-\alpha)/2\alpha}\rho
\sigma^{(1-\alpha)/2\alpha})^\alpha\bigr)$, though $\widetilde Q_\alpha$ is no longer in the
class of standard $f$-divergences. Moreover, the so-called \emph{$\alpha$-$z$-R\'enyi
divergence} introduced in \cite{AD} is a two-parameter common generalization of $D_\alpha$
and $\widetilde D_\alpha$. Motivated by the current situation of quantum R\'enyi divergences,
the authors in \cite{BST,Je1,Je2} have recently extended the sandwiched version
$\widetilde D_\alpha$ to the von Neumann algebra setting. Indeed, in these papers, the
quantity $\widetilde Q_\alpha$ is defined in von Neumann algebras by using Araki and Masuda's
$L^p$-spaces \cite{AM} or Haagerup's $L^p$-spaces \cite{Ha2,Te}. Although those papers contain
some discussions on $D_\alpha$ as well, it seems that the expositions on $D_\alpha$ there are
not comprehensive. Thus, in Section 5 we present a thorough exposition on the R\'enyi
divergence in von Neumann algebras, while it is more or less specialization of the results
of Section 4.

This paper has two appendices. In Appendix A we give a brief survey on Haagerup's $L^p$-spaces
and the description of the R\'enyi divergence in terms of them. In Appendix B we revisit the
former results in \cite{Ar3,Do} on relative hamiltonians and their relation to the relative
entropy, and improve them based on Haagerup's $L^p$-spaces and the fact that
$D=\lim_{\alpha\nearrow1}D_\alpha$.

\section{Definition of standard $f$-divergences}

\subsection{Relative modular operators}

Let $M$ be a general von Neumann algebra, and $M_*^+$ be the positive cone of the predual
$M_*$ consisting of normal positive linear functionals on $M$. Throughout the paper, we
consider $M$ in its \emph{standard form} $(M,\cH,J,\cP)$, that is, $M$ is represented on
a Hilbert space $\cH$ with a conjugate-linear involution $J$ and a self-dual cone $\cP$
called the \emph{natural cone}, for which the following hold:
\begin{itemize}
\item[(1)] $JMJ=M'$,
\item[(2)] $JxJ=x^*$,\quad$x\in M\cap M'$ (the center of $M$),
\item[(3)] $J\xi=\xi$,\quad$\xi\in\cP$,
\item[(4)] $xJxJ\cP\subset\cP$,\quad$x\in M$.
\end{itemize}
Recall \cite{Ha} that any von Neumann algebra has a standard form, which is unique in the
sense that if $(M,\cH,J,\cP)$ and $(\tilde M,\tilde\cH,\tilde J,\tilde\cP)$ are two standard
forms and $\Phi:M\to\tilde M$ is a $*$-isomorphism, then there is a unique unitary
$u:\cH\to\tilde\cH$ such that $\Phi(x)=uxu^*$ for $x\in M$, $\tilde J=uJu^*$ and
$\tilde\cP=u\cP$. See \cite{Ha} (also \cite{Ar1,Co}) for more details on the standard form.

Every $\sigma\in M_*^+$ has a unique \emph{vector representative} $\xi_\sigma\in\cP$ so that
$\sigma(x)=\<\xi_\sigma,x\xi_\sigma\>$, $x\in M$. We have the support projections
$s_M(\sigma)\in M$ and $s_{M'}(\sigma)\in M'$ of $\sigma$, that is, $s_M(\sigma)$ is the
orthogonal projection onto $\overline{M'\xi_\sigma}$ and $s_{M'}(\sigma)$ is that onto
$\overline{M\xi_\sigma}$. Note that $s_{M'}(\sigma)=Js_M(\sigma)J$.

For each $\rho,\sigma\in M_*^+$ the operators $S_{\rho,\sigma}$ and $F_{\rho,\sigma}$ are
defined by
$$
S_{\rho,\sigma}(x\xi_\sigma+\eta):=s_M(\sigma)x^*\xi_\rho,\qquad
x\in M,\ \eta\in(1-s_{M'}(\sigma))\cH,
$$
$$
F_{\rho,\sigma}(x'\xi_\sigma+\zeta):=s_{M'}(\sigma)x^{\prime*}\xi_\rho,\qquad
x'\in M',\ \zeta\in(1-s_M(\sigma))\cH.
$$
Then $S_{\rho,\sigma}$ and $F_{\rho,\sigma}$ are closable conjugate-linear operators such
that $S_{\rho,\sigma}^*=\overline F_{\rho,\sigma}$, see \cite[Lemma 2.2]{Ar2}. The
\emph{relative modular operator} $\Delta_{\rho,\sigma}$ introduced in \cite{Ar2} is
$$
\Delta_{\rho,\sigma}:=S_{\rho,\sigma}^*\overline S_{\rho,\sigma},
$$
and the polar decomposition of $\overline S_{\rho,\sigma}$ is given as
\begin{align}\label{F-2.1}
\overline S_{\rho,\sigma}=J\Delta_{\rho,\sigma}^{1/2}.
\end{align}
Recall that the support projection of $\Delta_{\rho,\sigma}$ is $s_M(\rho)s_{M'}(\sigma)$.
We write the spectral decomposition of $\Delta_{\rho,\sigma}$ as
\begin{align}\label{F-2.2}
\Delta_{\rho,\sigma}=\int_{[0,+\infty)}t\,dE_{\rho,\sigma}(t).
\end{align}
When $\rho=\sigma$, $\Delta_{\sigma,\sigma}$ is the \emph{modular operator} $\Delta_\sigma$
for $\sigma$. Note that $\xi_\sigma$ is an eigenvector of $\Delta_\sigma$ with eigenvalue $1$,
so that $\Delta_\sigma\xi_\sigma=\xi_\sigma$.

\subsection{Standard $f$-divergence $S_f(\rho\|\sigma)$}

Let $f:(0,+\infty)\to\bR$ be a convex function. Then the limits
$$
f(0^+):=\lim_{t\searrow0}f(t),\qquad f'(+\infty):=\lim_{t\to+\infty}{f(t)\over t}
$$
exist in $(-\infty,+\infty]$. Below we will understand the expression $bf(a/b)$ for $a=0$ or
$b=0$ in the following way:
\begin{align}\label{F-2.3}
bf(0/b):=f(0^+)b\quad\mbox{for $b\ge0$},\quad
0f(a/0):=\lim_{t\searrow0}tf(a/t)=f'(+\infty)a\quad\mbox{for $a>0$},
\end{align}
where we use the convention that $(+\infty)0:=0$ and $(+\infty)c:=+\infty$ for $c>0$. In
particular, we fix $0f(0/0)=0$.

The next definition is a specialization of the quasi-entropy \cite{Ko0,Pe} with
modifications.

\begin{definition}\label{D-2.1}\rm
For each $\rho,\sigma\in M_*^+$, with the spectral decomposition in \eqref{F-2.2}, we define
the self-adjoint operator $f(\Delta_{\rho,\sigma})$ on $s_M(\rho)s_{M'}(\sigma)\cH$ by
\begin{align}\label{F-2.4}
f(\Delta_{\rho,\sigma}):=\int_{(0,+\infty)}f(t)\,dE_{\rho,\sigma}(t),
\end{align}
and define
\begin{align}\label{F-2.5}
\<\xi_\sigma,f(\Delta_{\rho,\sigma})\xi_\sigma\>
:=\int_{(0,+\infty)}f(t)\,d\|E_{\rho,\sigma}(t)\xi_\sigma\|^2.
\end{align}
We then introduce the {\it standard $f$-divergence} $S_f(\rho\|\sigma)$ of $\rho,\sigma$ by
\begin{align}\label{F-2.6}
S_f(\rho\|\sigma):=\<\xi_\sigma,f(\Delta_{\rho,\sigma})\xi_\sigma\>
+f(0^+)\sigma(1-s_M(\rho))+f'(+\infty)\rho(1-s_M(\sigma)).
\end{align}
\end{definition}

Note that the integral in \eqref{F-2.4} is on $(0,+\infty)$ instead of $[0,+\infty)$. The
left-hand expression of \eqref{F-2.5} should be understood, to be precise, in the sense of a
lower-bounded form (see \cite{RS}), which equals the integral in the right-hand side. We first
give a lemma to justify the above definition.

\begin{lemma}\label{L-2.2}
For every $\rho,\sigma\in M_*^+$, $S_f(\rho\|\sigma)$ is well defined with values in
$(-\infty,+\infty]$.
\end{lemma}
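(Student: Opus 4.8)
The plan is to show that the three terms in \eqref{F-2.6} are each well defined in $(-\infty,+\infty]$ and that no $(+\infty)+(-\infty)$ ambiguity can arise when we add them. The boundary terms $f(0^+)\sigma(1-s_M(\rho))$ and $f'(+\infty)\rho(1-s_M(\sigma))$ are unproblematic: $\sigma(1-s_M(\rho))$ and $\rho(1-s_M(\sigma))$ are nonnegative reals, while $f(0^+),f'(+\infty)\in(-\infty,+\infty]$ exist by convexity, so with the conventions fixed before the definition each of these products lies in $(-\infty,+\infty]$. The real content is the spectral integral $\int_{(0,+\infty)}f(t)\,d\|E_{\rho,\sigma}(t)\xi_\sigma\|^2$ in \eqref{F-2.5}: I must check that its negative part is finite, so that it is a well-defined element of $(-\infty,+\infty]$, and that when this integral equals $+\infty$ the boundary terms cannot be $-\infty$ (and vice versa).

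First I would bound the integrand from below by an affine function. Since $f$ is convex on $(0,+\infty)$, pick any $t_0>0$ and a subgradient $c$ at $t_0$, giving $f(t)\ge f(t_0)+c(t-t_0)=:a+bt$ for all $t>0$, with $a,b\in\bR$. Then
\begin{align}\label{F-aux1}
\int_{(0,+\infty)}f(t)\,d\|E_{\rho,\sigma}(t)\xi_\sigma\|^2
\ge a\int_{(0,+\infty)}d\|E_{\rho,\sigma}(t)\xi_\sigma\|^2
+b\int_{(0,+\infty)}t\,d\|E_{\rho,\sigma}(t)\xi_\sigma\|^2,
\end{align}
provided the two integrals on the right are finite. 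The first is $\|s_M(\rho)s_{M'}(\sigma)\xi_\sigma\|^2\le\|\xi_\sigma\|^2=\sigma(1)<\infty$. For the second, the key identity is $\int_{[0,+\infty)}t\,d\|E_{\rho,\sigma}(t)\xi_\sigma\|^2=\langle\xi_\sigma,\Delta_{\rho,\sigma}\xi_\sigma\rangle$; using \eqref{F-2.1}, $\Delta_{\rho,\sigma}^{1/2}\xi_\sigma=J\overline S_{\rho,\sigma}\xi_\sigma=J(s_M(\sigma)\xi_\rho)$ (taking $x=1$, $\eta=0$ in the definition of $S_{\rho,\sigma}$), so $\langle\xi_\sigma,\Delta_{\rho,\sigma}\xi_\sigma\rangle=\|\Delta_{\rho,\sigma}^{1/2}\xi_\sigma\|^2=\|s_M(\sigma)\xi_\rho\|^2\le\|\xi_\rho\|^2=\rho(1)<\infty$. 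Hence the negative part of the integrand is dominated by $|a|+|b|t$ with finite total mass, so the integral in \eqref{F-2.5} is a well-defined element of $(-\infty,+\infty]$.

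It remains to rule out the indeterminate form $(+\infty)+(-\infty)$ in the sum \eqref{F-2.6}. Here I would argue by cases on the two critical boundary behaviours. If $f(0^+)=-\infty$ — impossible, since $f(0^+)\in(-\infty,+\infty]$ — so that term is never $-\infty$; the only way a term of \eqref{F-2.6} is $-\infty$ is therefore excluded outright for the boundary terms. Thus the sole remaining worry is that the spectral integral is $+\infty$ while... but the boundary terms are never $-\infty$, and I have just shown the spectral integral is never $-\infty$, so the sum of three elements of $(-\infty,+\infty]$ is again in $(-\infty,+\infty]$. The one genuinely delicate point worth spelling out, and the main obstacle, is the justification that \eqref{F-2.5} is legitimately interpreted as a lower-bounded quadratic form in the sense of \cite{RS}: one writes $f=(a+bt)+g(t)$ where $g(t):=f(t)-a-bt\ge0$, observes that $g(\Delta_{\rho,\sigma})^{1/2}\xi_\sigma$ may fail to lie in $\cH$ but the form $\xi_\sigma\mapsto\int g\,d\|E_{\rho,\sigma}\xi_\sigma\|^2\in[0,+\infty]$ is closed and lower bounded, and that the affine part contributes the finite quantity $a\|s_M(\rho)s_{M'}(\sigma)\xi_\sigma\|^2+b\|s_M(\sigma)\xi_\rho\|^2$; adding the two gives exactly the right-hand side of \eqref{F-2.5}, which is therefore well defined in $(-\infty,+\infty]$ independently of the choice of $t_0$. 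With this in hand the lemma follows.
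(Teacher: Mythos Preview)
Your proof is correct and follows essentially the same approach as the paper: both use convexity to obtain an affine lower bound $f(t)\ge a+bt$, then compute $\int_{(0,+\infty)}(a+bt)\,d\|E_{\rho,\sigma}(t)\xi_\sigma\|^2=a\sigma(s_M(\rho))+b\rho(s_M(\sigma))$ via the identity $J\Delta_{\rho,\sigma}^{1/2}\xi_\sigma=s_M(\sigma)\xi_\rho$ from \eqref{F-2.1}. Your version is more explicit about the boundary terms and the exclusion of the $(+\infty)+(-\infty)$ indeterminacy, which the paper leaves implicit.
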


\begin{proof}
From the convexity of $f$, there are $a,b\in\bR$ such that $f(t)\ge a+bt$ for all
$t\in(0,+\infty)$. We have
\begin{align*}
\int_{(0,+\infty)}f(t)\,d\|E_{\rho,\sigma}(t)\xi_\sigma\|^2
&\ge\int_{(0,+\infty)}(a+bt)\,d\|E_{\rho,\sigma}(t)\xi_\sigma\|^2 \\
&=a\|s_M(\rho)s_{M'}(\sigma)\xi_\sigma\|^2+b\|\Delta_{\rho,\sigma}^{1/2}\xi_\sigma\|^2 \\
&=a\|s_M(\rho)\xi_\sigma\|^2+b\|s_M(\sigma)\xi_\rho\|^2 \\
&=a\sigma(s_M(\rho))+b\rho(s_M(\sigma))>-\infty,
\end{align*}
since $J\Delta_{\rho,\sigma}^{1/2}\xi_\sigma=s_M(\sigma)\xi_\rho$ by \eqref{F-2.1}. Hence
$S_f(\rho\|\sigma)\in(-\infty,+\infty]$.
\end{proof}

By the above proof and \eqref{F-2.6} we also see that
\begin{align}\label{F-2.7}
S_{a+bt}(\rho\|\sigma)=a\sigma(1)+b\rho(1).
\end{align}

The following are some basic properties of $S_f(\rho\|\sigma)$:

\begin{prop}\label{P-2.3}
Let $\rho,\sigma\in M_*^+$.
\begin{itemize}
\item[(1)] If $\Phi:\tilde M\to M$ is a $*$-isomorphism between von Neumann algebras, then
$$
S_f(\rho\circ\Phi\|\sigma\circ\Phi)=S_f(\rho\|\sigma).
$$
\item[(2)] In the case $\rho=0$ or $\sigma=0$ or $\rho=\sigma$,
$$
S_f(0\|\sigma)=f(0^+)\sigma(1),\quad S_f(\rho\|0)=f'(+\infty)\rho(1),\quad
S_f(\sigma\|\sigma)=f(1)\sigma(1).
$$
\item[(3)] \emph{Homogeneity:} For every $\lambda\in[0,+\infty)$,
\begin{align}\label{F-2.8}
S_f(\lambda\rho\|\lambda\sigma)=\lambda S_f(\rho\|\sigma).
\end{align}
\item[(4)] \emph{Additivity:} Let $M=M_1\oplus M_2$ be the direct sum of von Neumann algebras
$M_1$ and $M_2$. If $\rho_i,\sigma_i\in(M_i)_*^+$ for $i=1,2$, then
$$
S_f(\rho_1\oplus\rho_2\|\sigma_1\oplus\sigma_2)
=S_f(\rho_1\|\sigma_1)+S_f(\rho_2\|\sigma_2).
$$
\end{itemize}
\end{prop}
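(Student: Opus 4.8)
The plan is to verify each of the four items directly from Definition \ref{D-2.1}, tracking how the relevant objects---the vector representatives $\xi_\sigma$, the support projections $s_M(\cdot)$, $s_{M'}(\cdot)$, and the spectral measure $E_{\rho,\sigma}$---transform under the operation in question. For (1), the $*$-isomorphism $\Phi:\tilde M\to M$ extends, by the uniqueness of the standard form recalled in Section 2.1, to a unitary $u:\tilde\cH\to\cH$ intertwining the two standard forms, with $u\tilde J u^*=J$ and $u\tilde\cP=\cP$. One checks that $u$ carries $\xi_{\rho\circ\Phi}$ to $\xi_\rho$ and $\xi_{\sigma\circ\Phi}$ to $\xi_\sigma$ (both sides lie in the respective natural cones and implement the same functional, so equality follows from uniqueness of the vector representative), hence $u$ conjugates $S_{\rho\circ\Phi,\sigma\circ\Phi}$ to $S_{\rho,\sigma}$ and therefore $\Delta_{\rho\circ\Phi,\sigma\circ\Phi}$ to $\Delta_{\rho,\sigma}$ and $E_{\rho\circ\Phi,\sigma\circ\Phi}$ to $E_{\rho,\sigma}$. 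Since $u$ also matches the support projections, every term in \eqref{F-2.6} is preserved, giving the equality.

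For (2), the three cases are computations with degenerate relative modular operators. When $\rho=0$ we have $\xi_\rho=0$, so $S_{0,\sigma}=0$, hence $\Delta_{0,\sigma}=0$ and $s_M(\rho)=0$; the first term in \eqref{F-2.6} vanishes (the integral is over $(0,+\infty)$), the last term vanishes since $\rho=0$, and the middle term is $f(0^+)\sigma(1-0)=f(0^+)\sigma(1)$. When $\sigma=0$, similarly $\xi_\sigma=0$ kills the first two terms of \eqref{F-2.6}, leaving $f'(+\infty)\rho(1-s_M(\sigma))=f'(+\infty)\rho(1)$ since $s_M(0)=0$. When $\rho=\sigma$, the relative modular operator is the modular operator $\Delta_\sigma$, for which $\xi_\sigma$ is an eigenvector with eigenvalue $1$ as noted after \eqref{F-2.2}; thus $E_{\sigma,\sigma}(\{1\})\xi_\sigma=\xi_\sigma=s_M(\sigma)\xi_\sigma$, so $\<\xi_\sigma,f(\Delta_\sigma)\xi_\sigma\>=f(1)\|\xi_\sigma\|^2=f(1)\sigma(1)$, while the two boundary terms vanish because $1-s_M(\sigma)$ annihilates $\xi_\sigma$.

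For (3), homogeneity, scaling $\rho\mapsto\lambda\rho$, $\sigma\mapsto\lambda\sigma$ (with $\lambda>0$) gives $\xi_{\lambda\rho}=\lambda^{1/2}\xi_\rho$, $\xi_{\lambda\sigma}=\lambda^{1/2}\xi_\sigma$, so by the defining formulas $S_{\lambda\rho,\lambda\sigma}=S_{\rho,\sigma}$ (the scalars cancel) and hence $\Delta_{\lambda\rho,\lambda\sigma}=\Delta_{\rho,\sigma}$ with unchanged spectral measure; the measure $\|E_{\rho,\sigma}(\cdot)\xi_{\lambda\sigma}\|^2=\lambda\|E_{\rho,\sigma}(\cdot)\xi_\sigma\|^2$ scales by $\lambda$, and the two boundary terms scale by $\lambda$ as well, while the support projections $s_M(\lambda\rho)=s_M(\rho)$ etc.\ are unchanged; the case $\lambda=0$ is covered by (2) since $S_f(0\|0)=f(0^+)\cdot0=0=0\cdot S_f(\rho\|\sigma)$ under the conventions of \eqref{F-2.3}. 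For (4), additivity, the standard form of $M_1\oplus M_2$ is the direct sum $(\cH_1\oplus\cH_2,\,J_1\oplus J_2,\,\overline{\cP_1\oplus\cP_2})$, and one verifies that $\xi_{\rho_1\oplus\rho_2}=\xi_{\rho_1}\oplus\xi_{\rho_2}$, $S_{\rho_1\oplus\rho_2,\sigma_1\oplus\sigma_2}=S_{\rho_1,\sigma_1}\oplus S_{\rho_2,\sigma_2}$, hence $\Delta_{\rho_1\oplus\rho_2,\sigma_1\oplus\sigma_2}=\Delta_{\rho_1,\sigma_1}\oplus\Delta_{\rho_2,\sigma_2}$ and the spectral measure, the support projections, and all three terms of \eqref{F-2.6} split as orthogonal direct sums, giving the stated additivity.

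The only genuinely delicate point is (1): one must be careful that the unitary $u$ from the uniqueness clause of the standard form is the \emph{same} unitary that sends $\xi_{\rho\circ\Phi}$ to $\xi_\rho$ and $\xi_{\sigma\circ\Phi}$ to $\xi_\sigma$. This is where the property $u\tilde\cP=\cP$ is essential: it guarantees that $u\xi_{\rho\circ\Phi}\in\cP$, and since $\<u\xi_{\rho\circ\Phi},x\,u\xi_{\rho\circ\Phi}\>=\<\xi_{\rho\circ\Phi},\Phi^{-1}(x)\xi_{\rho\circ\Phi}\>=(\rho\circ\Phi)(\Phi^{-1}(x))=\rho(x)$ for $x\in M$, uniqueness of the natural-cone representative forces $u\xi_{\rho\circ\Phi}=\xi_\rho$, and likewise for $\sigma$. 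Everything else is a matter of pushing the construction of $S_{\rho,\sigma}$, $\Delta_{\rho,\sigma}$, and $E_{\rho,\sigma}$ through $u$; the boundary terms in \eqref{F-2.6} also transform correctly because $u\,s_M(\rho\circ\Phi)\,u^*=s_M(\rho)$ (both are projections onto the corresponding cyclic subspaces, which $u$ matches). Items (2)--(4) are routine once the degenerate cases of the relative modular operator are handled with the care already displayed in the proof of Lemma \ref{L-2.2}.
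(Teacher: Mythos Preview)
Your proof is correct and follows essentially the same approach as the paper's own proof, which likewise appeals to the uniqueness of the standard form for (1), computes directly from \eqref{F-2.6} for (2), uses $\xi_{\lambda\sigma}=\sqrt\lambda\,\xi_\sigma$ and $\Delta_{\lambda\rho,\lambda\sigma}=\Delta_{\rho,\sigma}$ for (3), and the direct-sum decomposition of the standard form for (4). You have simply filled in the details that the paper leaves implicit, including the careful verification in (1) that the unitary $u$ matches the vector representatives via $u\tilde\cP=\cP$.
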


\begin{proof}
(1) is clear from the uniqueness of the standard form stated in Section 2.1.

(2) is seen directly from definition \eqref{F-2.6}.

(3)\enspace
For $\lambda>0$, since $\xi_{\lambda\sigma}=\sqrt\lambda\,\xi_\sigma$ and
$\Delta_{\lambda\rho,\lambda\sigma}=\Delta_{\rho,\sigma}$, equality \eqref{F-2.8} follows.
For $\lambda=0$ both sides are zero from (2).

(4)\enspace
Note that the standard form of $M$ is the direct sum of the standard forms of $M_i$, $i=1,2$.
For $\rho:=\rho_1\oplus\rho_2$ and $\sigma:=\sigma_1\oplus\sigma_2$ we have
$\xi_\rho=\xi_{\rho_1}\oplus\xi_{\rho_2}$, $\xi_\sigma=\xi_{\sigma_1}\oplus\xi_{\sigma_2}$
and $\Delta_{\rho,\sigma}=\Delta_{\rho_1,\sigma_1}\oplus\Delta_{\rho_2,\sigma_2}$, from which
the result immediately follows.
\end{proof}

The additivity in (4) above will be improved in Section 4 (see Corollary \ref{C-4.4}\,(2)).

In our definition of $S_f(\rho\|\sigma)$ the parameter function $f$ is a convex function on
$(0,+\infty)$, not on $[0,+\infty)$. This is reasonable as the next proposition shows that
$S_f(\rho\|\sigma)$ is symmetric between $\rho$ and $\sigma$ under exchanging $f$ with its
transpose $\widetilde f$ defined by
$$
\widetilde f(t):=tf(t^{-1}),\qquad t\in(0,+\infty).
$$

\begin{prop}\label{P-2.4}
For every $\rho,\sigma\in M_*^+$,
$$
S_f(\rho\|\sigma)=S_{\widetilde f}(\sigma\|\rho).
$$
\end{prop}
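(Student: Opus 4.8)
The plan is to reduce everything to the well-known relation between the relative modular operators $\Delta_{\rho,\sigma}$ and $\Delta_{\sigma,\rho}$ together with the definition \eqref{F-2.6}. The key identity is that on the range of $s_M(\rho)s_{M'}(\sigma)$ (equivalently, on the range of $s_{M'}(\rho)s_M(\sigma)$ after applying $J$), one has $\Delta_{\sigma,\rho}=\Delta_{\rho,\sigma}^{-1}$ in the sense of the functional calculus restricted to the common support, and moreover $J$ intertwines the two in the appropriate way; this is classical and follows from $S_{\sigma,\rho}^*=\overline F_{\sigma,\rho}$ and the relation $\overline S_{\rho,\sigma}$, $\overline S_{\sigma,\rho}$ being adjoints up to $J$. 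Concretely, I would use the spectral fact that if $\Delta_{\rho,\sigma}=\int_{(0,\infty)}t\,dE_{\rho,\sigma}(t)$ on $s_M(\rho)s_{M'}(\sigma)\cH$, then $\Delta_{\sigma,\rho}=\int_{(0,\infty)}t\,dE_{\sigma,\rho}(t)$ on $s_M(\sigma)s_{M'}(\rho)\cH$ with $dE_{\sigma,\rho}(t)$ obtained from $dE_{\rho,\sigma}(s)$ under $s\mapsto s^{-1}$ conjugated by $J$; in particular, for a Borel function $g$,
\begin{align*}
\<\xi_\rho,g(\Delta_{\sigma,\rho})\xi_\rho\>=\<\xi_\sigma,\widetilde{g}(\Delta_{\rho,\sigma})\xi_\sigma\>
\end{align*}
where $\widetilde g(t)=g(t^{-1})$, by pushing forward the spectral measure $\|E_{\sigma,\rho}(\cdot)\xi_\rho\|^2$ to $\|E_{\rho,\sigma}(\cdot)\xi_\sigma\|^2$ under $t\mapsto t^{-1}$. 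This is really the substance of the proposition.

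Next I would match the integrand. We want $S_{\widetilde f}(\sigma\|\rho)=\<\xi_\rho,\widetilde f(\Delta_{\sigma,\rho})\xi_\rho\>+\widetilde f(0^+)\rho(1-s_M(\sigma))+\widetilde f\,'(+\infty)\sigma(1-s_M(\rho))$ to equal $S_f(\rho\|\sigma)$. For the main term, applying the spectral substitution above with $g=\widetilde f$ gives $\<\xi_\rho,\widetilde f(\Delta_{\sigma,\rho})\xi_\rho\>=\int_{(0,\infty)}\widetilde f(t^{-1})\,d\|E_{\rho,\sigma}(t)\xi_\sigma\|^2$, but here one must be slightly careful: the measure $d\|E_{\sigma,\rho}(t)\xi_\rho\|^2$ on $(0,\infty)$ pushes forward under $t\mapsto t^{-1}$ to a measure whose total mass and weighting involve the factor $t$ (since $\|\Delta_{\sigma,\rho}^{1/2}\xi_\rho\|^2=\|s_M(\rho)\xi_\sigma\|^2$ by the analogue of \eqref{F-2.1}). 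Working this out, $\widetilde f(t^{-1})=t^{-1}f(t)$, and the Jacobian-type factor supplies exactly the compensating $t$, so that $\int_{(0,\infty)}\widetilde f(t^{-1})\,d\|E_{\sigma,\rho}(t^{-1})\xi_\rho\|^2=\int_{(0,\infty)}f(t)\,d\|E_{\rho,\sigma}(t)\xi_\sigma\|^2$. The cleanest way to see this is via the identity $\<\xi_\rho,g(\Delta_{\sigma,\rho})\xi_\rho\>=\<\Delta_{\rho,\sigma}^{1/2}\xi_\sigma,\,(t^{-1}g(t^{-1}))(\Delta_{\rho,\sigma})\,\Delta_{\rho,\sigma}^{1/2}\xi_\sigma\>$ type manipulation, or directly from $J\Delta_{\rho,\sigma}^{1/2}\xi_\sigma=s_M(\sigma)\xi_\rho$ together with $\Delta_{\sigma,\rho}=J\Delta_{\rho,\sigma}^{-1}J$ on the relevant subspace.

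For the boundary terms, I would invoke the elementary facts about transposes of convex functions: $\widetilde f(0^+)=\lim_{t\searrow0}tf(t^{-1})=\lim_{s\to\infty}f(s)/s=f'(+\infty)$ and $\widetilde f\,'(+\infty)=\lim_{t\to\infty}\widetilde f(t)/t=\lim_{t\to\infty}f(t^{-1})=f(0^+)$. Hence $\widetilde f(0^+)\rho(1-s_M(\sigma))=f'(+\infty)\rho(1-s_M(\sigma))$ and $\widetilde f\,'(+\infty)\sigma(1-s_M(\rho))=f(0^+)\sigma(1-s_M(\rho))$, which are precisely the last two summands in \eqref{F-2.6} for $S_f(\rho\|\sigma)$. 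Adding the three pieces yields the claim. The main obstacle is the bookkeeping in the second step: one must handle the change of variables $t\mapsto t^{-1}$ in the spectral measure carefully, keeping track of the mass that $\Delta_{\rho,\sigma}$ and $\Delta_{\sigma,\rho}$ respectively assign, and making sure the supports $s_M(\rho)s_{M'}(\sigma)$ versus $s_M(\sigma)s_{M'}(\rho)$ are correctly interchanged by $J$; once the measure-theoretic identity $d\|E_{\sigma,\rho}(t^{-1})\xi_\rho\|^2 = t\,d\|E_{\rho,\sigma}(t)\xi_\sigma\|^2$ is pinned down, the rest is routine.
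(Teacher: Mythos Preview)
Your proposal is correct and follows essentially the same route as the paper: reduce to the equality of the integral terms via the identities $\Delta_{\rho,\sigma}=J\Delta_{\sigma,\rho}^{-1}J$ and $J\Delta_{\sigma,\rho}^{1/2}\xi_\rho=s_M(\rho)\xi_\sigma$, derive the measure relation $d\|E_{\sigma,\rho}(t^{-1})\xi_\rho\|^2=t\,d\|E_{\rho,\sigma}(t)\xi_\sigma\|^2$, and match the boundary terms using $\widetilde f(0^+)=f'(+\infty)$, $\widetilde f\,'(+\infty)=f(0^+)$. One caution: your first displayed identity $\<\xi_\rho,g(\Delta_{\sigma,\rho})\xi_\rho\>=\<\xi_\sigma,g(t^{-1})(\Delta_{\rho,\sigma})\xi_\sigma\>$ is not correct as written (it omits the weight $t$), but you catch and fix this in the next paragraph, so the final argument is sound.
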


\begin{proof}
Since $\widetilde f(0^+)=f'(+\infty)$ and $\widetilde f'(+\infty)=f(0^+)$, it suffices to
prove that
\begin{align}\label{F-2.9}
\<\xi_\sigma,f(\Delta_{\rho,\sigma})\xi_\sigma\>
=\<\xi_\rho,\widetilde f(\Delta_{\sigma,\rho})\xi_\rho\>.
\end{align}
Recall \cite[Theorem 2.4]{Ar2} that
$$
\Delta_{\rho,\sigma}=J\Delta_{\sigma,\rho}^{-1}J
$$
together with
$$
s_M(\rho)s_{M'}(\sigma)=Js_{M'}(\rho)JJs_M(\sigma)J=Js_M(\sigma)s_{M'}(\rho)J.
$$
Hence, since $J\Delta_{\sigma,\rho}^{1/2}\xi_\rho=s_M(\rho)\xi_\sigma$, one has
\begin{align*}
\|E_{\rho,\sigma}((0,t))\xi_\sigma\|^2
&=\|JE_{\sigma,\rho}((t^{-1},+\infty))JJ\Delta_{\sigma,\rho}^{1/2}\xi_\rho\|^2
=\|E_{\sigma,\rho}((t^{-1},+\infty))\Delta_{\sigma,\rho}^{1/2}\xi_\rho\|^2 \\
&=\int_{(t^{-1},+\infty)}s\,d\|E_{\sigma,\rho}(s)\xi_\rho\|^2
=\int_{(0,t)}s^{-1}\,d\|E_{\sigma,\rho}(s^{-1})\xi_\rho\|^2,
\end{align*}
so that $d\|E_{\rho,\sigma}(t)\xi_\sigma\|^2=t^{-1}d\|E_{\sigma,\rho}(t^{-1})\xi_\rho\|^2$ on
$(0,+\infty)$. Therefore,
\begin{align*}
\int_{(0,+\infty)}f(t)\,d\|E_{\rho,\sigma}(t)\xi_\sigma\|^2
&=\int_{(0,+\infty)}f(t)t^{-1}\,d\|E_{\sigma,\rho}(t^{-1})\xi_\rho\|^2 \\
&=\int_{(0,+\infty)}tf(t^{-1})\,d\|E_{\sigma,\rho}(t)\xi_\rho\|^2,
\end{align*}
which means \eqref{F-2.9}.
\end{proof}

\begin{example}\label{E-2.5}\rm
Let $M$ be an abelian von Neumann algebra such that $M\cong L^\infty(\Omega,\mu)$, where
$(\Omega,\mathcal{A},\mu)$ is a $\sigma$-finite measure space. Let $\rho,\sigma\in M_*^+$,
which correspond to $\phi,\psi\in L^1(\Omega,\mu)^+$ so that
$\rho(x)=\int_\Omega x\phi\,d\mu$, $\sigma(x)=\int_\Omega x\psi\,d\mu$ for
$x\in L^\infty(\Omega,\mu)$. The standard form of $L^\infty(\Omega,\mu)$ is
$(L^\infty(\Omega,\mu),L^2(\Omega,\mu),\xi\mapsto\overline\xi,L^2(\Omega,\mu)^+)$, where
$x\in L^\infty(\Omega,\mu)$ is represented on $L^2(\Omega,\mu)$ as the multiplication
operator $\xi\mapsto x\xi$ for $\xi\in L^2(\Omega,\mu)$. It is straightforward to find that
$\Delta_{\phi,\phi}$ is the multiplication of $1_{\{\psi>0\}}(\phi/\psi)$, which is the
Radon-Nikodym derivative $d\rho/d\sigma$ (restricted on the support of $\sigma$) in the
classical sense. We then have
$$
S_f(\rho\|\sigma)=\int_{\{\phi>0\}\cap\{\psi>0\}}\psi f(\phi/\psi)\,d\mu
+f(0^+)\int_{\{\phi=0\}}\psi\,d\mu+f'(+\infty)\int_{\{\psi=0\}}\phi\,d\mu,
$$
which equals the classical $f$-divergence
$S_f(\phi\|\psi)=\int_\Omega\psi f(\phi/\psi)\,d\mu$ under the convention that
$\psi(\omega)f(0/\psi(\omega))=f(0^+)\psi(\omega)$ for $\psi(\omega)\ge0$ and
$0f(\phi(\omega)/0)=\lim_{t\searrow0}tf(\phi(\omega)/t)=f'(+\infty)\phi(\omega)$ for
$\phi(\omega)>0$.
\end{example}

\begin{example}\label{E-2.6}\rm
Let $M=\BH$, i.e., a factor of type I, where $\cH$ is an arbitrary Hilbert space. The standard
form of $\BH$ is $(\BH,\cC_2(\cH),J=*,\cC_2(\cH)_+)$, where $\cC_2(\cH)$ is the
Hilbert-Schmidt class with the Hilbert-Schmidt inner product, $\cC_2(\cH)_+$ is the set of
positive operators in $\cC_2(\cH)$, and the representation of $M=\BH$ on $\cC_2(\cH)$ is
the left multiplication. Then $M'$ is the right multiplication of $M=\BH$ on $\cC_2(\cH)$. For
$\rho,\sigma\in\BH_*^+$ we have the vector representatives
$D_\rho^{1/2},D_\sigma^{1/2}\in\cC_2(\cH)_+$, where $D_\rho$ and $D_\sigma$ are the positive
trace-class operators such that $\rho(X)=\Tr D_\rho X$ and $\sigma(X)=\Tr D_\sigma X$ for
$X\in\BH$. Let
$$
D_\rho=\sum_{a\in\Sp D_\rho,\,a>0}aP_a,\qquad
D_\sigma=\sum_{b\in\Sp D_\sigma,\,b>0}bQ_b
$$
be the spectral decompositions of $D_\rho,D_\sigma$, where $\sum_{a>0}$ and $\sum_{b>0}$ are
finite or countable sums, and $P_a$ and $Q_b$ are finite-dimensional orthogonal projections.
Then the relative modular operators $\Delta_{\rho,\sigma}$ on $\cC_2(\cH)$ is given as
\begin{align}\label{F-2.10}
\Delta_{\rho,\sigma}=L_{D_\rho}R_{D_\sigma^{-1}}
=\sum_{a>0,\,b>0}ab^{-1}L_{P_a}R_{Q_b},
\end{align}
where $L_{[-]}$ and $R_{[-]}$ denote the left and the right multiplications and $D_\sigma^{-1}$
is the generalized inverse of $D_\sigma$. The last expression in the above gives the spectral
decomposition of $\Delta_{\rho,\sigma}$. The proof of this is easy as follows:

If $\Phi\in P_a\cH$ and $\Omega\in Q_b\cH$ for $a,b>0$, then
\begin{align*}
S_{\rho,\sigma}(|\Phi\>\<\Omega|)
&=S_{\rho,\sigma}(b^{-1/2}|\Phi\>\<\Omega|D_\sigma^{1/2})
=b^{-1/2}s_M(\sigma)|\Omega\>\<\Phi|D_\rho^{1/2} \\
&=a^{1/2}b^{-1/2}|\Omega\>\<\Phi|, \\
F_{\rho,\sigma}(|\Omega\>\<\Phi|)
&=F_{\rho,\sigma}(b^{-1/2}D_\sigma^{1/2}|\Omega\>\<\Phi|)
=b^{-1/2}s_{M'}(\sigma)D_\rho^{1/2}|\Phi\>\<\Omega| \\
&=a^{1/2}b^{-1/2}|\Phi\>\<\Omega|s_M(\sigma)
=a^{1/2}b^{-1/2}|\Phi\>\<\Omega|,
\end{align*}
which imply that
$$
\Delta_{\rho,\sigma}(|\Phi\>\<\Omega|)=ab^{-1}|\Phi\>\<\Omega|.
$$
Since the range of $L_{P_a}R_{Q_b}$ is the span of $|\Phi\>\<\Omega|$ for $\Phi\in P_a\cH$
and $\Omega\in Q_b\cH$,
$$
\sum_{a,b>0,\,ab^{-1}=c}L_{Pa}R_{Q_b}
$$
is the spectral projection of $\Delta_{\rho,\sigma}$ corresponding to the eigenvalue $c>0$.

Moreover, it follows from \eqref{F-2.10} that the definition of $S_f(\rho\|\sigma)$ in
\eqref{F-2.6} is rewritten as
$$
S_f(\rho\|\sigma)=\sum_{a>0}\sum_{b>0}bf(ab^{-1})\Tr P_aQ_b
+f(0^+)\Tr(I-D_\rho^0)D_\sigma+f'(+\infty)\Tr D_\rho(I-D_\sigma^0),
$$
which coincides with an expression in \cite[Proposition 3.2]{HiMo} when $\dim\cH<+\infty$.
\end{example}

\begin{remark}\label{R-2.7}\rm
Let $f:[0,+\infty)\to\bR$ be a continuous function. For $\rho,\sigma\in M_*^+$ and $k\in M$,
the {\it quasi-entropy} $S_f^k(\rho\|\sigma)$ was introduced in \cite{Pe} by
\begin{align}
S_f^k(\rho\|\sigma)&:=\<k\xi_\sigma,f(\Delta_{\rho,\sigma})k\xi_\sigma\>
=\int_{[0,+\infty)}f(t)\,d\|E_{\rho,\sigma}(t)k\xi_\sigma\|^2 \nonumber\\
&\ =\int_{(0,+\infty)}f(t)\,d\|E_{\rho,\sigma}(t)k\xi_\sigma\|^2
+f(0^+)\<k\xi_\sigma,(1-s_M(\rho)s_{M'}(\sigma))k\xi_\sigma\>. \label{F-2.11}
\end{align}
Comparing \eqref{F-2.11} with \eqref{F-2.5} and \eqref{F-2.6} we note that
$$
S_f(\rho\|\sigma)=S_f^{k=1}(\rho\|\sigma)+f'(+\infty)\rho(1-s_M(\sigma)).
$$
In particular, when $M=\BH$ with $\dim\cH<+\infty$ and $f(0^+)<+\infty$, the quasi-entropy
\eqref{F-2.11} with $k=1$ has finite values for all $\rho,\sigma$, which is improper as a
standard $f$-divergence. For example, when $f(t):=t\log t$ so that $f(0^+)=0$ and
$f'(+\infty)=+\infty$, one can easily check that for $\rho=\Tr(D_\rho\,\cdot)$ and
$\sigma=\Tr(D_\sigma\,\cdot)$ with density operators $D_\rho,D_\sigma$, expression
\eqref{F-2.11} with $k=1$ is
$$
\Tr D_\rho(\log D_\rho-{\log^+}D_\sigma),
$$
where ${\log^+}t:=\log t$ ($t>0$), ${\log^+}0:=0$. On the other hand,
$S_{t\log t}(\rho\|\sigma)$ in \eqref{F-2.6} coincides with the usual \emph{relative entropy}
$$
D(\rho\|\sigma):=\begin{cases}
\Tr D_\rho(\log D_\rho-\log D_\sigma), & s_M(\rho)\le s_M(\sigma), \\
+\infty, & s_M(\rho)\not\le s_M(\sigma).
\end{cases}
$$
\end{remark}

\section{Variational expression of standard $f$-divergences}

In this section we extend the variational expression of the relative entropy given in
\cite{Ko} to standard $f$-divergences. The extended expression will be quite useful in the
next section to verify various properties of standard $f$-divergences.

Throughout this and the next sections, we assume that a function $f:(0,+\infty)\to\bR$ is
\emph{operator convex}, i.e., the operator inequality
$$
f(\lambda A+(1-\lambda)B)\le\lambda f(A)+(1-\lambda)f(B),\qquad0\le\lambda\le1
$$
holds for every invertible $A,B\in\BH^+$ of any $\cH$. Also, a function $h:(0,+\infty)\to\bR$
is said to be \emph{operator monotone} if $A\le B$ $\implies$ $h(A)\le h(B)$ for every
invertible $A,B\in\BH^+$ of any $\cH$. It is well-known that an operator monotone function
$h$ on $(0,+\infty)$ is automatically \emph{operator concave} (i.e., $-h$ is operator convex).
For general theory on operator monotone and operator convex functions, see, e.g.,
\cite{Bh,Hi}.

Recall \cite{LR} (see also \cite[Theorem 5.1]{FHR} for a more general form) that the operator
convex function $f$ has an integral expression
\begin{align}\label{F-3.1}
f(t)=a+b(t-1)+c(t-1)^2+\int_{[0,+\infty)}{(t-1)^2\over t+s}\,d\mu(s),
\qquad t\in(0,+\infty),
\end{align}
where $a,b\in\bR$, $c\ge0$ and $\mu$ is a positive measure on $[0,+\infty)$ with
\begin{align}\label{F-3.2}
\int_{[0,+\infty)}{1\over 1+s}\,d\mu(s)<+\infty,
\end{align}
and moreover $a,b,c$ and $\mu$ are uniquely determined. Letting $d:=\mu(\{0\})\ge0$ we also write
\begin{align}\label{F-3.3}
f(t)=a+b(t-1)+c(t-1)^2+d\,{(t-1)^2\over t}+\int_{(0,+\infty)}{(t-1)^2\over t+s}\,d\mu(s),
\quad t\in(0,+\infty).
\end{align}
One can easily verify that
\begin{align}
f(0^+)&=a-b+c+(+\infty)d+\int_{(0,+\infty)}s^{-1}\,d\mu(s), \label{F-3.4}\\
f'(+\infty)&=b+(+\infty)c+d+\int_{(0,+\infty)}d\mu(s). \label{F-3.5}
\end{align}

For each $n\in\bN$ we define
\begin{align}
f_n(t)&:=a+b(t-1)+c\,{n(t-1)^2\over t+n}+d\,{(t-1)^2\over t+(1/n)} \nonumber\\
&\qquad+\int_{[1/n,n]}{(t-1)^2\over t+s}\,d\mu(s),
\qquad t\in(0,+\infty). \label{F-3.6}
\end{align}
We then have

\begin{lemma}\label{L-3.1}
For each $n\in\bN$, $f_n$ is operator convex on $(0,+\infty)$, $f_n(0^+)<+\infty$,
$f_n'(+\infty)<+\infty$ and
$$
f_n(0^+)\,\nearrow\,f(0^+),\quad f_n'(+\infty)\,\nearrow\,f'(+\infty),
\quad f_n(t)\,\nearrow\,f(t)
$$
as $n\to\infty$ for all $t\in(0,+\infty)$.
\end{lemma}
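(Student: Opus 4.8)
The plan is to verify each claim of Lemma~\ref{L-3.1} by inspecting the four summands of $f_n$ in \eqref{F-3.6} separately and comparing them term by term with the corresponding pieces of $f$ in \eqref{F-3.3}. First I would recall the standard fact that for each fixed $s\ge0$ the function $t\mapsto(t-1)^2/(t+s)$ is operator convex on $(0,+\infty)$; this handles the integral term $\int_{[1/n,n]}(t-1)^2/(t+s)\,d\mu(s)$ (an integral of operator convex functions against a finite positive measure, since $\mu$ is finite on the compact set $[1/n,n]$ by \eqref{F-3.2}), the term $d\,(t-1)^2/(t+1/n)$ (the case $s=1/n$), and — after rewriting $n(t-1)^2/(t+n)$ as $(t-1)^2/((t/n)+1)$, i.e., the value at $s=1$ of $u\mapsto(u-1)^2/(u+1)$ composed with the affine substitution $u=t/n$ up to scaling, or more directly noting $c\,n(t-1)^2/(t+n)$ is $c$ times an operator convex function — the quadratic-replacement term. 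Adding the affine part $a+b(t-1)$, which is operator convex (in fact operator affine), gives operator convexity of $f_n$.

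Next I would compute the boundary values. Since each summand of $f_n$ is a rational function that is bounded near $t=0$ (the only potentially singular term $d(t-1)^2/(t+1/n)$ has a strictly positive denominator at $t=0$), $f_n(0^+)<+\infty$; explicitly, using \eqref{F-3.4}-style computation, $f_n(0^+)=a-b+cn/(n+1)\cdot 1\cdot\big(\text{wait, evaluate}\big)$ — more cleanly, I would just substitute $t\searrow0$ into \eqref{F-3.6} to get $f_n(0^+)=a-b+c\cdot\frac{n}{n}\cdot\frac{1}{1}$... let me instead say: plug $t=0$ termwise to obtain $f_n(0^+)=a-b+c\cdot\frac{n}{n}+d\cdot\frac{1}{1/n}+\int_{[1/n,n]}\frac{1}{s}\,d\mu(s)=a-b+c+dn+\int_{[1/n,n]}s^{-1}\,d\mu(s)$, which is finite. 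Similarly, dividing \eqref{F-3.6} by $t$ and letting $t\to+\infty$ gives $f_n'(+\infty)=b+c\cdot 1+d+\int_{[1/n,n]}d\mu(s)$ (using $\lim_{t\to\infty}\frac{n(t-1)^2}{t(t+n)}=1$), again finite, since $c$ appears with a bounded coefficient rather than the coefficient $+\infty$ it carries in \eqref{F-3.5}.

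Finally, for the monotone convergence claims: as $n$ increases, $[1/n,n]\nearrow(0,+\infty)$ so the $\mu$-integral pieces increase to the full integral by the monotone convergence theorem; the coefficient $n/(n+\cdots)$ attached to $c$ — precisely, for fixed $t>0$, $n(t-1)^2/(t+n)=(t-1)^2/(1+t/n)\nearrow(t-1)^2$ as $n\to\infty$, matching the $c(t-1)^2$ term in \eqref{F-3.3}; and $d(t-1)^2/(t+1/n)\nearrow d(t-1)^2/t$ as $1/n\searrow0$. Hence $f_n(t)\nearrow f(t)$ pointwise on $(0,+\infty)$. The same termwise monotone limits, now read at the ``boundary'' via the explicit formulas just derived, give $f_n(0^+)\nearrow f(0^+)$ by comparison with \eqref{F-3.4} (noting $dn\nearrow(+\infty)d$, interpreted as $0$ if $d=0$ and $+\infty$ if $d>0$) and $f_n'(+\infty)\nearrow f'(+\infty)$ by comparison with \eqref{F-3.5} (noting $c\nearrow c$ trivially but the coefficient structure changes: in $f_n'(+\infty)$ the $c$-term contributes exactly $c$, whereas in $f'(+\infty)$ it contributes $(+\infty)c$ — so I must be careful, and indeed the correct statement is that $f_n'(+\infty)$ increases, and its limit is $b+d+\int d\mu(s)+\lim_n c = b+c+d+\int d\mu$ if... ). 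The one genuine subtlety, and the step I expect to need the most care, is exactly this bookkeeping of the $c$-term at $+\infty$: I must check that $\lim_{t\to\infty} \frac{f_n(t)}{t}$ really does increase to $\lim_{t\to\infty}\frac{f(t)}{t}$ despite the apparent mismatch of coefficients, which works because $c\,n(t-1)^2/(t+n)\nearrow c(t-1)^2$ uniformly enough that $f'(+\infty)=+\infty$ precisely when $c>0$, forcing $f_n'(+\infty)=b+c+d+\int_{[1/n,n]}d\mu\nearrow+\infty$ in that case, and otherwise ($c=0$) the coefficients agree; monotonicity in $n$ of $f_n'(+\infty)$ itself follows from monotonicity in $n$ of $f_n(t)/t$ for each large $t$, which reduces to the already-established pointwise monotonicity of the summands.
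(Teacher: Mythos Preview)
Your overall approach is the same as the paper's --- compute $f_n(0^+)$ and $f_n'(+\infty)$ explicitly and compare termwise with \eqref{F-3.4} and \eqref{F-3.5} --- but you make a computational error that then generates the entire confused final paragraph.

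The error is in the $c$-term of $f_n'(+\infty)$. You claim $\lim_{t\to\infty}\dfrac{n(t-1)^2}{t(t+n)}=1$, but in fact
\[
\lim_{t\to\infty}\frac{n(t-1)^2}{t(t+n)}
=n\cdot\lim_{t\to\infty}\frac{(1-1/t)^2}{1+n/t}=n,
\]
so the correct formula is
\[
f_n'(+\infty)=b+nc+d+\int_{[1/n,n]}d\mu(s),
\]
which is exactly \eqref{F-3.8} in the paper. With this in hand there is no ``mismatch'' to repair: $nc\nearrow(+\infty)c$ matches the $c$-contribution in \eqref{F-3.5}, the integral increases to $\int_{(0,+\infty)}d\mu(s)$ by monotone convergence, and monotonicity in $n$ is manifest term by term. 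Your attempted fix (``$f_n'(+\infty)=b+c+d+\int_{[1/n,n]}d\mu\nearrow+\infty$ when $c>0$'') is false in general --- take $c>0$ and $\mu=0$, so your expression is the constant $b+c+d$ --- which is why the argument felt shaky.

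Two minor points: your justification of operator convexity of the $c$-term is garbled; the clean statement is simply that $c\,n(t-1)^2/(t+n)$ is $cn\ge0$ times the operator convex function $(t-1)^2/(t+s)$ at $s=n$. And your formula for $f_n(0^+)$ is correct (it agrees with \eqref{F-3.7}), so that part needs no repair.
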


\begin{proof}
By definition \eqref{F-3.6} it is immediate to see that $f$ is an operator convex function
on $(0,+\infty)$ and
\begin{align}
f_n(0^+)&=a-b+c+nd+\int_{[1/n,n]}s^{-1}\,d\mu(s), \label{F-3.7}\\
f_n'(+\infty)&=b+nc+d+\int_{[1/n,n]}d\mu(s). \label{F-3.8}
\end{align}
It follows from \eqref{F-3.2} that \eqref{F-3.7} and \eqref{F-3.8} are finite, which
increase, by the monotone convergence theorem, to \eqref{F-3.4} and \eqref{F-3.5},
respectively. Moreover, for any $t\in(0,+\infty)$, since
$$
{n(t-1)^2\over t+n}\ \nearrow\ (t-1)^2,\qquad
{(t-1)^2\over t+(1/n)}\ \nearrow\ {(t-1)^2\over t}\qquad\mbox{as $n\nearrow\infty$},
$$
we have $f_n(t)\nearrow f(f)$ from the monotone convergence theorem again.
\end{proof}

\begin{lemma}\label{L-3.2}
For every $\rho,\sigma\in M_*^+$,
$$
S_{f_n}(\rho\|\sigma)\ \nearrow\ S_f(\rho\|\sigma)\qquad\mbox{as $n\nearrow\infty$}.
$$
\end{lemma}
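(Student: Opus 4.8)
The plan is to prove the monotone convergence $S_{f_n}(\rho\|\sigma)\nearrow S_f(\rho\|\sigma)$ by applying the monotone convergence theorem separately to each of the three terms in the defining formula \eqref{F-2.6}, using Lemma \ref{L-3.1} as the input. Write $S_{f_n}(\rho\|\sigma)=\<\xi_\sigma,f_n(\Delta_{\rho,\sigma})\xi_\sigma\>+f_n(0^+)\sigma(1-s_M(\rho))+f_n'(+\infty)\rho(1-s_M(\sigma))$. By Lemma \ref{L-3.1}, $f_n(0^+)\nearrow f(0^+)$ and $f_n'(+\infty)\nearrow f'(+\infty)$ in $(-\infty,+\infty]$, and since $\sigma(1-s_M(\rho))\ge0$ and $\rho(1-s_M(\sigma))\ge0$, the last two terms increase to $f(0^+)\sigma(1-s_M(\rho))$ and $f'(+\infty)\rho(1-s_M(\sigma))$ respectively, with the convention $(+\infty)\cdot0=0$ handled exactly as fixed before Definition \ref{D-2.1}.

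The main term requires a little more care. By \eqref{F-2.5}, $\<\xi_\sigma,f_n(\Delta_{\rho,\sigma})\xi_\sigma\>=\int_{(0,+\infty)}f_n(t)\,d\|E_{\rho,\sigma}(t)\xi_\sigma\|^2$, and by Lemma \ref{L-3.1} we have $f_n(t)\nearrow f(t)$ pointwise on $(0,+\infty)$. The obstacle is that monotone convergence in the classical form applies to nonnegative (or uniformly lower-bounded) integrands, whereas the $f_n$ need only satisfy a common affine lower bound. To handle this, I would fix real numbers $a_0,b_0$ with $f(t)\ge a_0+b_0 t$ for all $t>0$ (from convexity of $f$); since $f_n\le f$, the same bound holds for every $f_n$, so $g_n(t):=f_n(t)-(a_0+b_0t)\ge0$ and $g_n\nearrow g:=f-(a_0+b_0t)\ge0$ pointwise. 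Now apply the monotone convergence theorem to $\int g_n\,d\|E_{\rho,\sigma}(t)\xi_\sigma\|^2\nearrow\int g\,d\|E_{\rho,\sigma}(t)\xi_\sigma\|^2$ (both possibly $+\infty$), and then add back the affine part, which integrates to the finite quantity $a_0\sigma(s_M(\rho))+b_0\rho(s_M(\sigma))$ exactly as computed in the proof of Lemma \ref{L-2.2}. This gives $\<\xi_\sigma,f_n(\Delta_{\rho,\sigma})\xi_\sigma\>\nearrow\<\xi_\sigma,f(\Delta_{\rho,\sigma})\xi_\sigma\>$ in $(-\infty,+\infty]$.

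Finally, summing the three monotone convergences: each of the three sequences is nondecreasing in $n$ and converges (possibly to $+\infty$) to the corresponding term of $S_f(\rho\|\sigma)$, and all three limit terms lie in $(-\infty,+\infty]$ with the first being finite-from-below, so there is no $\infty-\infty$ ambiguity in forming the sum. Hence $S_{f_n}(\rho\|\sigma)\nearrow S_f(\rho\|\sigma)$. The only genuinely delicate point is the bookkeeping around the extended-real conventions — ensuring the affine-subtraction trick is legitimate and that adding three increasing sequences whose limits may be $+\infty$ still yields an increasing sequence with the expected limit — but this is routine once the common lower bound is in place.
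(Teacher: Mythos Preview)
Your approach is essentially the paper's own proof (decompose via \eqref{F-2.6}, invoke Lemma \ref{L-3.1}, apply monotone convergence), only spelled out with more care about the applicability of monotone convergence to the integral term. One logical slip: from $f_n\le f$ and $f(t)\ge a_0+b_0t$ you \emph{cannot} conclude $f_n(t)\ge a_0+b_0t$; the inequality goes the wrong way. The common affine lower bound you need does exist, but for a different reason: since $f_n\nearrow f$ (Lemma \ref{L-3.1}), you have $f_n\ge f_1$ for all $n$, and $f_1$ is convex so admits an affine minorant; alternatively, directly from \eqref{F-3.6} every term after $a+b(t-1)$ is nonnegative, so $f_n(t)\ge a+b(t-1)$ for all $n$. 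With this correction in place your argument goes through and matches the paper's.
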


\begin{proof}
By Definition \ref{D-2.1},
$$
S_{f_n}(\rho\|\sigma)=f_n(0^+)\sigma(1-s(\rho))+f_n'(+\infty)\rho(1-s(\sigma))
+\int_{(0,+\infty)}f_n(t)\,d\|E_{\rho,\sigma}(t)\xi_\sigma\|^2.
$$
By Lemma \ref{L-3.1} and the monotone convergence theorem, $S_{f_n}(\rho\|\sigma)$ increases
to $S_f(\rho\|\sigma)$ as $n\nearrow\infty$.
\end{proof}

\begin{lemma}\label{L-3.3}
For each $n\in\bN$ define an operator monotone function $h_n$ on $[0,+\infty)$ by
\begin{align}\label{F-3.9}
h_n(t)=\int_{(0,+\infty)}{t(1+s)\over t+s}\,d\nu_n(s),\qquad t\in[0,+\infty),
\end{align}
where $\nu_n$ is a finite positive measure supported on $[1/n,n]$ given by
\begin{align}\label{F-3.10}
d\nu_n(s):=c(1+n)\delta_n+d(1+n)\delta_{1/n}+1_{[1/n,n]}(s){1+s\over s}\,d\mu(s)
\end{align}
with the point masses $\delta_n$ at $n$ and $\delta_{1/n}$ at $1/n$. Then $f_n$ defined in
\eqref{F-3.6} is written as
\begin{align}\label{F-3.11}
f_n(t)=f_n(0^+)+f_n'(+\infty)t-h_n(t),\qquad t\in(0,+\infty).
\end{align}
\end{lemma}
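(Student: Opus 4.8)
The plan is to reduce \eqref{F-3.11} to an elementary partial‑fraction identity and then read off $f_n(0^+)$ and $f_n'(+\infty)$ using the formulas \eqref{F-3.7} and \eqref{F-3.8} already obtained in the proof of Lemma \ref{L-3.1}. First I would record the identity: for every $s>0$,
\begin{align*}
{(t-1)^2\over t+s}={1\over s}+t-{1+s\over s}\cdot{t(1+s)\over t+s},\qquad t\in(0,+\infty),
\end{align*}
which is checked at once by multiplying both sides by $s(t+s)$ — both sides then equal $st^2-2st+s$. (Equivalently, this is polynomial division of $(t-1)^2$ by $t+s$ followed by a partial‑fraction step.)

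Next I would apply this identity to each term of \eqref{F-3.6}. With $s=n$, after multiplying by $cn$:
\begin{align*}
c\,{n(t-1)^2\over t+n}=c+cn\,t-c(1+n)\cdot{t(1+n)\over t+n}.
\end{align*}
With $s=1/n$, after multiplying by $d$:
\begin{align*}
d\,{(t-1)^2\over t+(1/n)}=dn+d\,t-d(1+n)\cdot{t(1+(1/n))\over t+(1/n)}.
\end{align*}
Integrating the identity over $s\in[1/n,n]$ against $\mu$:
\begin{align*}
\int_{[1/n,n]}{(t-1)^2\over t+s}\,d\mu(s)=\int_{[1/n,n]}s^{-1}\,d\mu(s)+\mu([1/n,n])\,t-\int_{[1/n,n]}{1+s\over s}\cdot{t(1+s)\over t+s}\,d\mu(s),
\end{align*}
where $\mu([1/n,n])<+\infty$ because ${1\over1+s}\ge{1\over1+n}$ on $[1/n,n]$ and \eqref{F-3.2} holds; finally $a+b(t-1)=(a-b)+b\,t$.

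Adding these four displays, the constant terms collect to $(a-b)+c+nd+\int_{[1/n,n]}s^{-1}\,d\mu(s)$, which is exactly $f_n(0^+)$ by \eqref{F-3.7}; the coefficient of $t$ collects to $b+nc+d+\mu([1/n,n])$, which is exactly $f_n'(+\infty)$ by \eqref{F-3.8}; and what remains is
\begin{align*}
-\left[c(1+n){t(1+n)\over t+n}+d(1+n){t(1+(1/n))\over t+(1/n)}+\int_{[1/n,n]}{1+s\over s}\cdot{t(1+s)\over t+s}\,d\mu(s)\right],
\end{align*}
which, by the definition \eqref{F-3.10} of $\nu_n$, is precisely $-\int_{(0,+\infty)}{t(1+s)\over t+s}\,d\nu_n(s)=-h_n(t)$. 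This gives \eqref{F-3.11}. That $h_n$ is operator monotone on $[0,+\infty)$ is immediate: for each fixed $s>0$ the function $t\mapsto{t(1+s)\over t+s}=(1+s)\bigl(1-{s\over t+s}\bigr)$ is operator monotone on $[0,+\infty)$, and $\nu_n$ is a finite positive measure (the two point masses are finite, and $\int_{[1/n,n]}{1+s\over s}\,d\mu(s)<+\infty$ since ${1+s\over s}$ is bounded on $[1/n,n]$ and $\mu|_{[1/n,n]}$ is finite), so the integral defining $h_n$ is operator monotone, with $h_n(0)=0$.

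I do not expect any real obstacle here: the one thing requiring care is the bookkeeping of constants, i.e.\ matching the collected constant term and the collected linear coefficient against the already‑known expressions \eqref{F-3.7}–\eqref{F-3.8} for $f_n(0^+)$ and $f_n'(+\infty)$; once the partial‑fraction identity is in hand, everything else is mechanical.
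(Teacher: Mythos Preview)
Your proof is correct and follows essentially the same approach as the paper: both rely on the partial-fraction identity $\frac{(t-1)^2}{t+s}=\frac{1}{s}+t-\frac{1+s}{s}\cdot\frac{t(1+s)}{t+s}$, specialize it to $s=n$, $s=1/n$, and integrate it against $\mu$ over $[1/n,n]$, then collect the constant and linear terms via \eqref{F-3.7}--\eqref{F-3.8}. Your write-up in fact adds a little extra care (verifying the identity, checking $\nu_n$ is finite, and explaining why $h_n$ is operator monotone) that the paper leaves implicit.
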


\begin{proof}
Compute
\begin{align*}
{n(t-1)^2\over t+n}&=1+nt-(1+n){t(1+n)\over t+n}, \\
{(t-1)^2\over t+(1/n)}&=n+t-(1+n){t\bigl(1+{1\over n}\bigr)\over t+{1\over n}}, \\
{(t-1)^2\over t+s}&={1\over s}+t-{1+s\over s}\cdot{t(1+s)\over t+s}.
\end{align*}
Inserting these into definition \eqref{F-3.6} one can write
\begin{align*}
f_n(t)&=\biggl(a-b+c+nd+\int_{[1/n,n]}s^{-1}\,d\mu(s)\biggr) \\
&\qquad+\biggl(b+nc+d+\int_{[1/n,n]}d\mu(s)\biggr)t \\
&\qquad-c(1+n){t(1+n)\over t+n}-d(1+n){t\bigl(1+{1\over n}\bigr)\over t+{1\over n}} \\
&\qquad-\int_{[1/n,n]}{1+s\over s}\cdot{t(1+s)\over t+s}\,d\mu(s) \\
&=f_n(0^+)+f_n'(+\infty)t-h_n(t)
\end{align*}
thanks to \eqref{F-3.7} and \eqref{F-3.8}.
\end{proof}

Now, let $L$ be a subspace of $M$ containing $1$, and assume that $L$ is dense in $M$ with
respect to the strong* operator topology. Since $h_n(0)=h_n'(+\infty)=0$, the next lemma
follows from \cite[Theorem 2.2]{Ko}.

\begin{lemma}\label{L-3.4}
Let $h_n$ be given in \eqref{F-3.9}. Then for any $\rho,\sigma\in M_*^+$,
\begin{align*}
&\int_{(0,+\infty)}h_n(t)\,d\|E_{\rho,\sigma}(t)\xi_\sigma\|^2 \\
&\quad=\inf_{x(\cdot)}\int_{[1/n,n]}\bigl\{\sigma((1-x(s))^*(1-x(s)))
+s^{-1}\rho(x(s)x(s)^*)\bigr\}(1+s)\,d\nu_n(s),
\end{align*}
where the infimum is taken over all $L$-valued (finitely many values) step functions
$x(\cdot)$ on $(0,+\infty)$.
\end{lemma}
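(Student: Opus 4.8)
The plan is to obtain Lemma~\ref{L-3.4} from Kosaki's variational theorem \cite[Theorem 2.2]{Ko}, the whole point being that $h_n$ meets the two boundary conditions required there. First I would record that $h_n(0)=0$ by \eqref{F-3.9}, and that, since $\nu_n$ is a finite measure supported on the compact interval $[1/n,n]$, dominated convergence gives $h_n'(+\infty)=\lim_{t\to+\infty}\int_{(0,+\infty)}\frac{1+s}{t+s}\,d\nu_n(s)=0$. If \cite[Theorem 2.2]{Ko} is already phrased for an arbitrary operator monotone $h$ on $[0,+\infty)$ with $h(0)=h'(+\infty)=0$ and its L\"owner representing measure, the lemma is then immediate: by \eqref{F-3.9}--\eqref{F-3.10} the representing measure of $h_n$, in the normalization $h(t)=\int\frac{t(1+s)}{t+s}\,d\nu(s)$, is exactly $\nu_n$, and the factor $1+s$ in the integrand yields the weight $(1+s)\,d\nu_n(s)$ in the statement.

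Should only the single-resolvent form of Kosaki's expression be available, I would reduce to it as follows. For each fixed $s>0$, apply it to the operator monotone function $t\mapsto t/(t+s)$, which vanishes at $0$ and has vanishing slope at $+\infty$, to get
\begin{align*}
\<\xi_\sigma,\tfrac{\Delta_{\rho,\sigma}}{\Delta_{\rho,\sigma}+s}\,\xi_\sigma\>
=\inf_{x\in L}\bigl\{\sigma((1-x)^*(1-x))+s^{-1}\rho(xx^*)\bigr\}.
\end{align*}
Restricting the infimum from $M$ to the strong* dense subspace $L$ changes nothing, since $x\mapsto\sigma((1-x)^*(1-x))+s^{-1}\rho(xx^*)$ is strong* continuous on bounded sets and the relevant near-minimizers are bounded.

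Next, by Tonelli's theorem on the product of the finite measures $d\|E_{\rho,\sigma}(t)\xi_\sigma\|^2$ and $d\nu_n(s)$,
\begin{align*}
\int_{(0,+\infty)}h_n(t)\,d\|E_{\rho,\sigma}(t)\xi_\sigma\|^2
&=\int_{[1/n,n]}(1+s)\biggl(\int_{(0,+\infty)}\tfrac{t}{t+s}\,d\|E_{\rho,\sigma}(t)\xi_\sigma\|^2\biggr)d\nu_n(s)\\
&=\int_{[1/n,n]}(1+s)\inf_{x\in L}\bigl\{\sigma((1-x)^*(1-x))+s^{-1}\rho(xx^*)\bigr\}\,d\nu_n(s),
\end{align*}
using the displayed identity of the previous paragraph in the last line. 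It then remains to interchange this infimum with the integral over $s$, which will produce the right-hand side of the lemma.

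That interchange is the main obstacle. The inequality in which the step-function infimum dominates the integrated pointwise infimum is trivial. For the reverse, I would fix $\eps>0$, write $g(s,x):=(1+s)\sigma((1-x)^*(1-x))+\frac{1+s}{s}\,\rho(xx^*)$, note that for fixed $x$ this is Lipschitz in $s$ on $[1/n,n]$, and that any $x$ with $g(s,x)\le g(s,0)+\eps\le(1+n)\sigma(1)+\eps$ obeys a priori bounds on $\sigma((1-x)^*(1-x))$ and $\rho(xx^*)$ depending only on $n$, $\sigma(1)$, $\rho(1)$, so that $|g(s,x)-g(s',x)|\le K_n|s-s'|$ for some $K_n$, for all such $x$ and all $s,s'\in[1/n,n]$. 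Partitioning $[1/n,n]$ into subintervals of length $<\eps/K_n$, choosing a near-minimizing $x_k\in L$ at one point of each, and gluing them into an $L$-valued step function $x(\cdot)$, one obtains $\int_{[1/n,n]}(1+s)\{\sigma((1-x(s))^*(1-x(s)))+s^{-1}\rho(x(s)x(s)^*)\}\,d\nu_n(s)$ exceeding the integrated pointwise infimum by at most a bounded multiple of $\eps\,\nu_n([1/n,n])$; letting $\eps\to0$ finishes. This works precisely because $[1/n,n]$ is compact and bounded away from $0$ and $+\infty$, so the $s$-dependence of near-minimizers is uniformly Lipschitz — which is exactly why the truncation to $[1/n,n]$ was engineered into $f_n$, $h_n$, and $\nu_n$ in \eqref{F-3.6}--\eqref{F-3.10}.
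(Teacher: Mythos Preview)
Your first paragraph is exactly the paper's proof: check $h_n(0)=h_n'(+\infty)=0$ and apply \cite[Theorem 2.2]{Ko}, which is indeed stated for a general operator monotone function on $[0,+\infty)$ with these boundary conditions and with the infimum over $L$-valued step functions, so the lemma is immediate once one observes that $\nu_n$ is the representing measure of $h_n$ in the normalization \eqref{F-3.9}. The remainder of your proposal --- the reduction via the single-resolvent identity followed by a Lipschitz/partition argument to interchange the infimum with the $s$-integral --- is a correct alternative route that essentially reproves Kosaki's theorem in this special case, but it is unnecessary given the form in which \cite[Theorem 2.2]{Ko} is available.
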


\begin{thm}\label{T-3.5}
Let $f$ be an operator convex function on $(0,+\infty)$. For each $n\in\bN$ let $f_n(0^+)$,
$f_n'(+\infty)$ and $\nu_n$ be given in \eqref{F-3.7}, \eqref{F-3.8} and \eqref{F-3.10},
respectively. Then for every $\rho,\sigma\in M_*^+$,
\begin{align}
S_f(\rho\|\sigma)&=\sup_{n\in\bN}\sup_{x(\cdot)}
\biggl[f_n(0^+)\sigma(1)+f_n'(+\infty)\rho(1) \nonumber\\
&\qquad-\int_{[1/n,n]}\bigl\{\sigma((1-x(s))^*(1-x(s)))
+s^{-1}\rho(x(s)x(s)^*)\bigr\}(1+s)\,d\nu_n(s)\biggr], \label{F-3.12}
\end{align}
where the supremum over $x(\cdot)$ is taken over all $L$-valued step functions as the
infimum in Lemma \ref{L-3.4}.
\end{thm}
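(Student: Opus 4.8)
The plan is to assemble the claimed formula from the pieces already prepared. By Lemma~\ref{L-3.2} we have $S_f(\rho\|\sigma)=\sup_{n\in\bN}S_{f_n}(\rho\|\sigma)$, so it suffices to show, for each fixed $n$, that
$$
S_{f_n}(\rho\|\sigma)=\sup_{x(\cdot)}\biggl[f_n(0^+)\sigma(1)+f_n'(+\infty)\rho(1)
-\int_{[1/n,n]}\bigl\{\sigma((1-x(s))^*(1-x(s)))+s^{-1}\rho(x(s)x(s)^*)\bigr\}(1+s)\,d\nu_n(s)\biggr].
$$
First I would unfold the definition of $S_{f_n}(\rho\|\sigma)$ via Definition~\ref{D-2.1} and substitute the decomposition $f_n(t)=f_n(0^+)+f_n'(+\infty)t-h_n(t)$ from Lemma~\ref{L-3.3}, valid on $(0,+\infty)$. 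The integral term splits as
$$
\int_{(0,+\infty)}f_n(t)\,d\|E_{\rho,\sigma}(t)\xi_\sigma\|^2
=f_n(0^+)\|s_M(\rho)s_{M'}(\sigma)\xi_\sigma\|^2+f_n'(+\infty)\|\Delta_{\rho,\sigma}^{1/2}\xi_\sigma\|^2
-\int_{(0,+\infty)}h_n(t)\,d\|E_{\rho,\sigma}(t)\xi_\sigma\|^2,
$$
using that the total mass of $d\|E_{\rho,\sigma}(t)\xi_\sigma\|^2$ on $(0,+\infty)$ is $\|s_M(\rho)s_{M'}(\sigma)\xi_\sigma\|^2=\sigma(s_M(\rho))$ and that $\int_{(0,+\infty)}t\,d\|E_{\rho,\sigma}(t)\xi_\sigma\|^2=\|\Delta_{\rho,\sigma}^{1/2}\xi_\sigma\|^2=\|s_M(\sigma)\xi_\rho\|^2=\rho(s_M(\sigma))$ (this last identity is exactly the computation used in the proof of Lemma~\ref{L-2.2}).

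Next I would combine these with the boundary terms in \eqref{F-2.6}, namely $f_n(0^+)\sigma(1-s_M(\rho))+f_n'(+\infty)\rho(1-s_M(\sigma))$. Since $f_n(0^+)$ and $f_n'(+\infty)$ are finite by Lemma~\ref{L-3.1}, the coefficient of $f_n(0^+)$ collapses to $\sigma(s_M(\rho))+\sigma(1-s_M(\rho))=\sigma(1)$, and similarly the coefficient of $f_n'(+\infty)$ becomes $\rho(1)$. Hence
$$
S_{f_n}(\rho\|\sigma)=f_n(0^+)\sigma(1)+f_n'(+\infty)\rho(1)-\int_{(0,+\infty)}h_n(t)\,d\|E_{\rho,\sigma}(t)\xi_\sigma\|^2.
$$
Now apply Lemma~\ref{L-3.4} to rewrite the last integral as an infimum over $L$-valued step functions $x(\cdot)$; negating turns the infimum into a supremum, which yields precisely the bracketed expression above. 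Taking $\sup_n$ gives \eqref{F-3.12}.

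The routine points to check are the finiteness that licenses rearranging $f_n(0^+)\sigma(s_M(\rho))+f_n(0^+)\sigma(1-s_M(\rho))$ into $f_n(0^+)\sigma(1)$ (clean because both scalars are finite) and the fact that the two suprema, over $n$ and over $x(\cdot)$, may be written as a single iterated supremum as in the statement. The main obstacle is not in this assembly step but is already absorbed into Lemma~\ref{L-3.4}: that lemma is where Kosaki's variational technique \cite[Theorem~2.2]{Ko} for operator monotone functions vanishing at $0$ with zero slope at infinity is invoked, and the hypothesis $h_n(0)=h_n'(+\infty)=0$ (verified from \eqref{F-3.9}) is exactly what makes it applicable. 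So the remaining work is genuinely bookkeeping: track the support projections carefully, confirm that the convention \eqref{F-2.3} is consistent with dropping the $f_n(0^+)$ and $f_n'(+\infty)$ boundary terms into $\sigma(1)$ and $\rho(1)$ respectively, and observe that the monotone limit in Lemma~\ref{L-3.2} commutes with the supremum over step functions since the inner expression is monotone in $n$ through $f_n(0^+)\nearrow$, $f_n'(+\infty)\nearrow$.
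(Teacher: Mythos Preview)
Your argument is correct and follows essentially the same route as the paper's proof: reduce to $S_{f_n}$ via Lemma~\ref{L-3.2}, use the decomposition of Lemma~\ref{L-3.3} to express $S_{f_n}(\rho\|\sigma)=f_n(0^+)\sigma(1)+f_n'(+\infty)\rho(1)-\int_{(0,+\infty)}h_n(t)\,d\|E_{\rho,\sigma}(t)\xi_\sigma\|^2$, and then invoke Lemma~\ref{L-3.4}. The only cosmetic difference is that the paper packages your explicit integral-splitting step into a single appeal to \eqref{F-2.7}; also, your closing remark about monotonicity in $n$ of the bracketed expression is unnecessary, since an iterated supremum $\sup_n\sup_{x(\cdot)}$ needs no monotonicity to be well defined.
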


\begin{proof}
By \eqref{F-3.11} and \eqref{F-2.7} we have
\begin{align*}
S_{f_n}(\rho\|\sigma)
&=f_n(0^+)\sigma(1)+f_n'(+\infty)\rho(1)+S_{-h_n}(\rho\|\sigma) \\
&=f_n(0^+)\sigma(1)+f_n'(+\infty)\rho(1)
-\int_{(0,+\infty)}h_n(t)\,d\|E_{\rho,\sigma}(t)\xi_\sigma\|^2.
\end{align*}
By Lemma \ref{L-3.4} we hence have
\begin{align*}
S_{f_n}(\rho\|\sigma)
&=\sup_{x(\cdot)}\biggl[f_n(0^+)\sigma(1)+f_n'(+\infty)\rho(1) \\
&\qquad-\int_{[1/n,n]}\bigl\{\sigma((1-x(s))^*(1-x(s)))
+s^{-1}\rho(x(s)x(s)^*)\bigr\}(1+s)\,d\nu_n(s)\biggr].
\end{align*}
The result follows by taking $\sup_n$ of both sides of the above and using Lemma \ref{L-3.2}.
\end{proof}

\begin{example}\label{E-12}\rm
Consider $f(t)=-\log t$, whose integral expression in \eqref{F-3.1} is
$$
-\log t=-(t-1)+\int_{(0,+\infty)}{(t-1)^2\over(t+s)(1+s)^2}\,ds.
$$
Hence, in this case,
$$
a=c=d=0,\quad b=-1,\quad d\mu(s)={1\over(1+s)^2}\,ds,
$$
$$
f(0^+)=+\infty,\qquad f'(+\infty)=0.
$$
Moreover, compute
$$
f_n(0^+)=1+\int_{1/n}^n{1\over s(1+s)^2}\,ds={2\over n+1}+\log n,
$$
$$
f_n'(+\infty)=-1+\int_{1/n}^n{1\over(1+s)^2}\,ds=-{2\over n+1},
$$
$$
d\nu_n(s)=1_{[1/n,n]}(s){1\over s(1+s)}\,ds.
$$
For every $\rho,\sigma\in M_*^+$ the \emph{relative entropy} is 
$$
D(\sigma\|\rho)=S_{t\log t}(\sigma\|\rho)=S_{-\log t}(\rho\|\sigma),
$$
for which one can write expression \eqref{F-3.12} as
\begin{align*}
D(\sigma\|\rho)&=\sup_{n\in\bN}\sup_{x(\cdot)}
\biggl[\sigma(1)\log n+(\sigma(1)-\rho(1)){2\over n+1} \\
&\qquad\quad-\int_{[1/n,n]}\bigl\{\sigma((1-x(s))^*(1-x(s)))
+s^{-1}\rho(x(s)x(s)^*)\bigr\}s^{-1}\,ds\biggr].
\end{align*}
This expression is similar to but a bit different from the variational expression
\begin{align*}
D(\sigma\|\rho)&=\sup_{n\in\bN}\sup_{x(\cdot)}\biggl[\sigma(1)\log n \\
&\qquad\quad-\int_{[1/n,+\infty)}\bigl\{\sigma((1-x(s))^*(1-x(s)))
+s^{-1}\rho(x(s)x(s)^*)\bigr\}s^{-1}\,ds\biggr]
\end{align*}
in \cite[Theorem 3.2]{Ko}.
\end{example}

\begin{remark}\rm
The variational expression in \eqref{F-3.12} with the cut-off interval $[1/n,n]$ is natural
when we consider $S_f(\rho\|\sigma)$ for general operator convex functions $f$ on
$(0,+\infty)$ with no assumption on the boundary values $f(0^+)$ and $f'(+\infty)$. This is
more explicitly justified by the fact that our expression is well behaved under taking the
transpose $\widetilde f(t)=tf(t^{-1})$. Indeed, for $f$ given in \eqref{F-3.3}, the integral
expression of $\widetilde f$ is
$$
\widetilde f(t)=\tilde a+\tilde b(t-1)+\tilde c(t-1)^2+\tilde d\,{(t-1)^2\over t}
+\int_{(0,+\infty)}{(t-1)^2\over t+s}\,d\widetilde\mu(s)
$$ 
where
$$
\tilde a=a,\qquad\tilde b=a-b,\qquad\tilde c=d,\qquad\tilde d=c,\qquad
d\widetilde\mu(s)=s\,d\mu(s^{-1}).
$$
Hence one can easily find that $\widetilde f_n(0^+)=f_n'(+\infty)$,
$\widetilde f_n'(+\infty)=f_n(0^+)$, $d\widetilde\nu_n(s)=d\nu_n(s^{-1})$, and the
expression inside the bracket $[\cdots]$ of \eqref{F-3.12} for
$S_{\widetilde f}(\rho\|\sigma)$ is
\begin{align*}
&\widetilde f_n(0^+)\sigma(1)+\widetilde f_n'(+\infty)\rho(1) \\
&\quad-\int_{[1/n,n]}\bigl\{\sigma((1-x(s))^*(1-x(s)))
+s^{-1}\rho(x(s)x(s)^*)\bigr\}(1+s)\,d\widetilde\nu_n(s) \\
&=f_n(0^+)\rho(1)+f_n'(+\infty)\sigma(1) \\
&\quad-\int_{[1/n,n]}\bigl\{\rho((1-y(s))^*(1-y(s)))
+s^{-1}\sigma(y(s)y(s)^*)\bigr\}(1+s)\,d\nu_n(s),
\end{align*}
where $y(s):=1-x(s^{-1})^*$. In this way, the variational expression in \eqref{F-3.12} enjoys
complete invariance under exchanging $(f,\rho,\sigma)$ with $(\widetilde f,\sigma,\rho)$.
\end{remark}

\section{Properties of standard $f$-divergences}

As in \cite{Ko} where the relative entropy was treated, most of the important properties of
standard $f$-divergences can immediately be verified from the variational expression in
Theorem \ref{T-3.5}.

\begin{thm}\label{T-4.1}
Let $f$ be an operator convex function on $(0,+\infty)$. Let
$\rho,\sigma,\rho_i,\sigma_i\in M_*^+$ for $i=1,2$.
\begin{itemize}
\item[(i)] \emph{Joint lower semicontinuity:} The map
$(\rho,\sigma)\in M_*^+\times M_*^+\mapsto S_f(\rho\|\sigma)\in(-\infty,+\infty]$ is jointly
lower semicontinuous in the $\sigma(M_*,M)$-topology.
\item[(ii)] \emph{Joint convexity:} The map in (i) is jointly convex and jointly subadditive,
i.e., for every $\rho_i,\sigma_i\in M_*^+$, $1\le i\le k$,
$$
S_f\Biggl(\sum_{i=1}^k\rho_i\Bigg\|\sum_{i=1}^k\sigma_i\Biggr)
\le\sum_{i=1}^kS_f(\rho_i\|\sigma_i).
$$
\item[(iii)] If $f(0^+)\le0$ and $\sigma_1\le\sigma_2$, then
$S_f(\rho\|\sigma_1)\ge S_f(\rho\|\sigma_2)$. Also, if $f'(+\infty)\le0$ and
$\rho_1\le\rho_2$, then $S_f(\rho_1\|\sigma)\ge S_f(\rho_2\|\sigma)$.
\item[(iv)] \emph{Monotonicity:} Let $N$ be another von Neumann algebra and $\Phi:N\to M$ be
a unital positive linear map that is normal (i.e., if $\{x_\alpha\}$ is an increasing net in
$M_+$ with $x_\alpha\nearrow x\in M_+$, then $\Phi(x_\alpha)\nearrow\Phi(x)$) and is a Schwarz
map (i.e., $\Phi(x^*x)\ge\Phi(x)^*\Phi(x)$ for all $x\in N$). Then
\begin{align}\label{F-4.1}
S_f(\rho\circ\Phi\|\sigma\circ\Phi)\le S_f(\rho\|\sigma).
\end{align}
In particular, if $N$ is a unital von Neumann subalgebra of $M$, then
\begin{align}\label{F-4.2}
S_f(\rho|_N\|\sigma|_N)\le S_f(\rho\|\sigma).
\end{align}
\item[(v)] \emph{Martingale convergence:} If $\{M_\alpha\}$ is an increasing net of unital
von Neumann subalgebras of $M$ such that $\bigl(\bigcup_\alpha M_\alpha\bigr)''=M$, then
$$
S_f(\rho|_{M_\alpha}\|\sigma|_{M_\alpha})\ \nearrow\  S_f(\rho\|\sigma).
$$
\end{itemize}
\end{thm}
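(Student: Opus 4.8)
The plan is to derive all five properties from the variational expression in Theorem~\ref{T-3.5}, which writes $S_f(\rho\|\sigma)$ as a supremum over $n\in\bN$ and over $L$-valued step functions $x(\cdot)$ of affine-linear functionals of the pair $(\rho,\sigma)$; specifically, for each fixed $n$ and $x(\cdot)$, the quantity inside the bracket $[\cdots]$ in \eqref{F-3.12} is of the form $\rho(A)+\sigma(B)$ for fixed self-adjoint $A,B\in M$ (with $A=f_n'(+\infty)1-\int s^{-1}x(s)x(s)^*(1+s)\,d\nu_n(s)$ and similarly for $B$). A supremum of such affine functionals is automatically convex and lower semicontinuous in any topology making evaluation $\omega\mapsto\omega(A)$ continuous; this immediately yields (i) and the convexity half of (ii). For (iii), I would use Proposition~\ref{P-2.4} to reduce the second assertion to the first, and for the first I would note that when $f(0^+)\le0$ the term $f_n(0^+)\sigma(1)$ is $\le f_n(0^+)\sigma_1(1)+$ something nonpositive — actually the cleaner route is to write $\sigma_2=\sigma_1+(\sigma_2-\sigma_1)$ and observe that increasing $\sigma$ in the bracket adds $f_n(0^+)(\sigma_2-\sigma_1)(1)-\int\{\ldots\}(1+s)\,d\nu_n(s)\cdot$ terms; but the simplest argument is via (iv) applied to a suitable conditional-expectation-type map, or directly: monotonicity of $\sigma\mapsto S_f(\rho\|\sigma)$ in $\sigma$ when $f(0^+)\le0$ follows because each bracket functional, as a function of $\sigma$, has the form $f_n(0^+)\sigma(1)-\sigma((1-x(s))^*(1-x(s)))$-integrated, and one checks this is monotone decreasing in $\sigma$ precisely when $f_n(0^+)\le0$ — however $f_n(0^+)\le f(0^+)\le0$, so this works.

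For the monotonicity (iv), which is the heart of the matter, I would argue as follows. Fix the subspace $L=N$ (or a strong*-dense unital subspace); here the key point is choosing, in the variational formula for $S_f(\rho\|\sigma)$, the subspace $L$ to be $\Phi(N)\subset M$, or rather to compare the two suprema. Given an $N$-valued step function $y(\cdot)$ realizing (nearly) the supremum for $S_f(\rho\circ\Phi\|\sigma\circ\Phi)$, set $x(s):=\Phi(y(s))\in M$; this is an $M$-valued step function. Then $(\rho\circ\Phi)((1-y(s))^*(1-y(s)))=\rho(\Phi((1-y(s))^*(1-y(s))))\ge\rho((1-x(s))^*(1-x(s)))$ by the Schwarz property of $\Phi$ applied to $1-y(s)$ (using $\Phi(1)=1$), and similarly $(\rho\circ\Phi)(y(s)y(s)^*)\ge\rho(x(s)x(s)^*)$ via the Schwarz inequality for $\Phi$ applied to $y(s)^*$ — wait, one must be careful: the Schwarz inequality $\Phi(z^*z)\ge\Phi(z)^*\Phi(z)$ gives a lower bound on $\Phi(z^*z)$, so $\rho(\Phi(z^*z))\ge\rho(\Phi(z)^*\Phi(z))$ since $\rho\ge0$; with $z=1-y(s)$ this handles the first term, and with $z=y(s)^*$ (so $z^*z=y(s)y(s)^*$) this handles the second. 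Also $(\sigma\circ\Phi)(1)=\sigma(\Phi(1))=\sigma(1)$. Hence each bracket value for $S_f(\rho\circ\Phi\|\sigma\circ\Phi)$ at $y(\cdot)$ is $\le$ the bracket value for $S_f(\rho\|\sigma)$ at $x(\cdot)=\Phi(y(\cdot))$, and taking suprema gives \eqref{F-4.1}. The inequality \eqref{F-4.2} is the special case $\Phi=$ inclusion. Normality of $\Phi$ is needed to guarantee $\rho\circ\Phi,\sigma\circ\Phi\in N_*^+$; the strong*-density requirement on $L$ in Theorem~\ref{T-3.5} forces a small check that $\Phi(N)$-valued step functions suffice on the right, which follows from normality and Kaplansky density, or one simply takes $L=M$ on the right side since the formula holds for any strong*-dense $L\ni1$ and a fortiori for $L=M$.

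For (v), the martingale convergence, the plan is to combine (iv) with a density/approximation argument. By (iv) applied to the inclusions $M_\alpha\hookrightarrow M_\beta\hookrightarrow M$, the net $S_f(\rho|_{M_\alpha}\|\sigma|_{M_\alpha})$ is increasing in $\alpha$ and bounded above by $S_f(\rho\|\sigma)$, so $\lim_\alpha S_f(\rho|_{M_\alpha}\|\sigma|_{M_\alpha})\le S_f(\rho\|\sigma)$. For the reverse, fix $n$ and an $M$-valued step function $x(\cdot)$ with values in $\bigcup_\alpha M_\alpha$ (which is strong*-dense in $M$, so such $x(\cdot)$ suffice in the supremum \eqref{F-3.12} when we take $L=\bigcup_\alpha M_\alpha$); since $x(\cdot)$ takes only finitely many values, all lying in some $M_\alpha$, the bracket value for $S_f(\rho\|\sigma)$ at $(n,x(\cdot))$ equals the bracket value for $S_f(\rho|_{M_\alpha}\|\sigma|_{M_\alpha})$ at $(n,x(\cdot))$, hence is $\le S_f(\rho|_{M_\alpha}\|\sigma|_{M_\alpha})\le\lim_\beta S_f(\rho|_{M_\beta}\|\sigma|_{M_\beta})$. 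Taking the supremum over $n$ and over such $x(\cdot)$ yields $S_f(\rho\|\sigma)\le\lim_\beta S_f(\rho|_{M_\beta}\|\sigma|_{M_\beta})$, completing the proof. The subadditivity in (ii) follows from the convexity together with the homogeneity \eqref{F-2.8}: $S_f(\sum\rho_i\|\sum\sigma_i)=k\,S_f(\frac1k\sum\rho_i\|\frac1k\sum\sigma_i)\le k\cdot\frac1k\sum S_f(\rho_i\|\sigma_i)$.

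The main obstacle I anticipate is the bookkeeping around the subspace $L$ in Theorem~\ref{T-3.5}: the variational expression is stated for an arbitrary strong*-dense unital subspace $L$, and in (iv) and (v) one must match the $L$ used on the smaller algebra with the one used on $M$. For (iv) this is harmless because one may always enlarge $L$ to all of $M$ on the right-hand side (the supremum only grows, and Theorem~\ref{T-3.5} asserts it equals $S_f(\rho\|\sigma)$ regardless), while on the left one takes $L=N$; the Schwarz-map computation then transfers each test function cleanly. For (v) the crucial input is precisely that $\bigcup_\alpha M_\alpha$ is a legitimate choice of $L$ for computing $S_f(\rho\|\sigma)$ on $M$, which holds because it is unital and strong*-dense by hypothesis $(\bigcup_\alpha M_\alpha)''=M$; granting this, the finiteness of the range of each step function does the rest. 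A secondary subtlety in (iv) is verifying $\rho\circ\Phi\in N_*^+$, i.e.\ normality of the functional, which is exactly guaranteed by the assumed normality of $\Phi$.
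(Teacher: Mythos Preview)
Your proposal is correct and follows essentially the same approach as the paper: both derive all five properties directly from the variational expression \eqref{F-3.12}, using that each bracket is affine and $\sigma(M_*,M)$-continuous for (i)--(ii), that $f_n(0^+)\le f(0^+)\le0$ makes the bracket monotone in $\sigma$ for (iii), pushing test functions through $\Phi$ via the Schwarz inequality for (iv), and choosing $L=\bigcup_\alpha M_\alpha$ so that any step function lies in some $M_\alpha$ for (v). The only cosmetic differences are that the paper handles the second half of (iii) by the symmetric argument rather than invoking Proposition~\ref{P-2.4}, and your exposition lingers over the bookkeeping around $L$ that the paper dispatches in one line.
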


\begin{proof}
To prove (i)--(iv), we apply expression \eqref{F-3.12} with $L=M$. Since $(1+s)d\nu_n(s)$ is
a finite positive measure supported on $[1/n,n]$ (see \eqref{F-3.10}), it is clear that the
function of $(\rho,\sigma)$ inside the bracket $[\cdots]$ in \eqref{F-3.12} is linear and
continuous in the $\sigma(M_*,M)$-topology, so (i) and (ii) are shown. Here note that joint
convexity and joint subadditivity in (ii) are equivalent due to the homogeneity property in
\eqref{F-2.8}.

Assume that $f(0^+)\le0$; then $f_n(0^+)\le f(0^+)\le0$ for all $n\in\bN$. Hence the first
assertion of (iii) is obvious by expression \eqref{F-3.12}, and the second assertion is
similar.

To prove (iv), note first that $\rho\circ\Phi,\sigma\circ\Phi\in N_*^+$ since $\Phi$ is a
normal positive linear map. For any $N$-valued step function $x(\cdot)$ on $(0,+\infty)$,
let $y(s):=\Phi(x(s))$, which is an $M$-valued step function. Since $\Phi$ is a unital
Schwarz map, one has
\begin{align*}
\sigma((1-y(s))^*(1-y(s)))&\le\sigma\circ\Phi((1-x(s))^*(1-x(s))), \\
\rho(y(s)y(s)^*)&\le\rho\circ\Phi(x(s)x(s)^*),
\end{align*}
which implies that the bracket $[\cdots]$ in \eqref{F-3.12} for $\rho\circ\Phi$,
$\sigma\circ\Phi$ and $x(\cdot)$ is dominated by $S_f(\rho\|\sigma)$. Hence inequality
\eqref{F-4.1} follows. When $N$ is a unital von Neumann subalgebra, applying \eqref{F-4.1}
to the injection $\Phi:N\hookrightarrow M$ gives \eqref{F-4.2}.

To prove (v), apply \eqref{F-3.12} with $L=\bigcup_\alpha M_\alpha$. When we restrict
$x(\cdot)$ in \eqref{F-3.12} to $M_\alpha$-valued step functions, we have the expression of
$S_f(\rho|_{M_\alpha}\|\sigma|_{M_\alpha})$. This shows that
$S_f(\rho|_{M_\alpha}\|\sigma|_{M_\alpha})$ is increasing and
$S_f(\rho|_{M_\alpha}\|\sigma|_{M_\alpha})\le S_f(\rho\|\sigma)$. (This also follows from
monotonicity in \eqref{F-4.2}.) Hence it remains to show that
$S_f(\rho\|\sigma)\le\sup_\alpha S_f(\rho|_{M_\alpha}\|\sigma|_{M_\alpha})$. For any
$c<S_f(\rho\|\sigma)$ we choose an $n\in\bN$ and an $L$-valued step function $x(\cdot)$ such
that
\begin{align*}
c&<f_n(0^+)\sigma(1)+f_n'(+\infty)\rho(1) \\
&\quad-\int_{[1/n,n]}\bigl\{\sigma((1-x(s))^*(1-x(s)))
+s^{-1}\rho(x(s)x(s)^*)\bigr\}(1+s)\,d\nu_n(s).
\end{align*}
Since $x(\cdot)$ is $M_\alpha$-valued for some $\alpha$, we have
$c<S_f(\rho|_{M_\alpha}\|\sigma|_{M_\alpha})$, implying the desired conclusion.
\end{proof}

The next corollary shows that $S_f(\rho\|\sigma)$ is strictly positive in some typical
situation.

\begin{cor}\label{C-4.2}
Let $\rho,\sigma\in M_*^+$.
\begin{itemize}
\item[(1)] The \emph{Peierls-Bogolieubov inequality} holds:
\begin{align}\label{F-4.3}
S_f(\rho\|\sigma)\ge\sigma(1)f(\rho(1)/\sigma(1)).
\end{align}
Assume that $f$ is non-linear and $\rho,\sigma\ne0$. Then equality holds in \eqref{F-4.3}
if and only if $\rho=(\rho(1)/\sigma(1))\sigma$.
\item[(2)] \emph{Strict positivity:} Assume that $f$ is non-linear with $f(1)=0$ and
$\rho(1)=\sigma(1)>0$. Then $S_f(\rho\|\sigma)\ge0$, and $S_f(\rho\|\sigma)=0$ $\iff$
$\rho=\sigma$.
\end{itemize}
\end{cor}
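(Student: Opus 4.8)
The plan is to derive Corollary \ref{C-4.2} from the variational expression in Theorem \ref{T-3.5} together with the joint convexity and basic properties already established. For part (1), the Peierls--Bogolieubov inequality \eqref{F-4.3} should follow by specializing the step functions $x(\cdot)$ in \eqref{F-3.12} to \emph{constant} functions $x(s)\equiv\lambda\cdot 1$ for a scalar $\lambda\in\bR$ (note $1\in L$), so that the bracket becomes $f_n(0^+)\sigma(1)+f_n'(+\infty)\rho(1)-\bigl((1-\lambda)^2\sigma(1)+s^{-1}\lambda^2\rho(1)\bigr)$ integrated against $(1+s)\,d\nu_n(s)$. Carrying out the $s$-integration using the definitions \eqref{F-3.9}--\eqref{F-3.10} and \eqref{F-3.7}--\eqref{F-3.8}, and then optimizing over the scalar $\lambda$, one recovers $f_n\bigl(\rho(1)/\sigma(1)\bigr)\sigma(1)$ (after homogenizing via \eqref{F-2.8}, or by a direct computation using $h_n(t)=f_n(0^+)+f_n'(+\infty)t-f_n(t)$ from \eqref{F-3.11} evaluated at $t=\rho(1)/\sigma(1)$); taking $\sup_n$ and invoking Lemma \ref{L-3.1} then gives $S_f(\rho\|\sigma)\ge\sigma(1)f(\rho(1)/\sigma(1))$. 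Alternatively, and perhaps more cleanly, \eqref{F-4.3} is just the monotonicity inequality \eqref{F-4.2} applied to the trivial subalgebra $N=\bC 1$, combined with Proposition \ref{P-2.3}(3) (homogeneity) and (2) evaluated at the scalar states; I would present whichever of these is shortest.

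For the equality case in (1), suppose $f$ is non-linear and $\rho,\sigma\ne0$. Write $\lambda_0:=\rho(1)/\sigma(1)$. The direction ``$\Leftarrow$'' is immediate from homogeneity and Proposition \ref{P-2.3}(2): $S_f(\lambda_0\sigma\|\sigma)=\sigma(1)f(\lambda_0)$. For ``$\Rightarrow$'', the key tool is strict convexity coming from non-linearity of the operator convex $f$: in the integral representation \eqref{F-3.3} we have $c>0$ or $d>0$ or $\mu((0,+\infty))>0$ (otherwise $f$ is affine). I would argue that equality in \eqref{F-4.3} forces equality in each approximating inequality for $S_{f_n}$ for all large $n$, hence (reading the variational expression) the infimum defining $S_{f_n}$ via Lemma \ref{L-3.4} is attained essentially at the constant function $x(s)\equiv\lambda_0$; more robustly, one can use joint convexity: if $\rho\ne\lambda_0\sigma$, consider the two distinct points $(\rho,\sigma)$ and $(\lambda_0\sigma,\sigma)$, whose ``averages'' along the line both have first-argument-to-second-argument ratio $\lambda_0$, and derive a strict inequality from strict convexity of the scalar function $t\mapsto\sigma(1)f(t)$ restricted to the relevant segment — the cleanest implementation is to compare $S_f(\rho\|\sigma)$ with $S_f$ evaluated along the decomposition and use that a genuinely operator convex (hence strictly convex, being non-affine) $f$ gives a strict gap unless $\rho$ is proportional to $\sigma$. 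I expect this strict-convexity bookkeeping to be the main obstacle, since one must upgrade the ``soft'' inequality from the sup-expression to a genuine rigidity statement; a safe route is to invoke the transpose symmetry (Proposition \ref{P-2.4}) and reduce to a normalized situation, then quote the classical fact that equality in Jensen's inequality for a strictly convex function forces the argument to be a.e.\ constant, applied to the spectral measure $d\|E_{\rho,\sigma}(t)\xi_\sigma\|^2$.

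For part (2), strict positivity, assume $f$ non-linear, $f(1)=0$, and $\rho(1)=\sigma(1)>0$. Then \eqref{F-4.3} gives $S_f(\rho\|\sigma)\ge\sigma(1)f(1)=0$. Moreover $\rho(1)/\sigma(1)=1$, so the equality characterization in part (1) applies verbatim: $S_f(\rho\|\sigma)=0=\sigma(1)f(1)$ holds if and only if $\rho=(\rho(1)/\sigma(1))\sigma=\sigma$. Thus (2) is an immediate corollary of (1) and requires no additional work beyond substituting $\lambda_0=1$.

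I would organize the write-up as: first dispose of \eqref{F-4.3} via the constant-$x(\cdot)$ substitution (or via \eqref{F-4.2} with $N=\bC1$); then handle the easy direction of the equality case by homogeneity; then prove the hard direction using strict convexity of $f$ as above; and finally deduce (2) in one line. The only genuinely delicate point is making the non-linearity of the operator convex function $f$ yield the strictness, which I would phrase in terms of the uniqueness of the measure $\mu$ and constants $c,d$ in \eqref{F-3.1}--\eqref{F-3.3}: non-affine $f$ means the scalar function $f$ is not affine on any subinterval, hence strictly convex, and then standard Jensen rigidity closes the argument.
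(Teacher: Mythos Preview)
Your handling of the inequality \eqref{F-4.3} (via \eqref{F-4.2} with $N=\bC 1$), of the easy direction of the equality case, and of part (2) is correct and matches the paper.

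The hard direction of the equality case is where your sketch has a real gap. The joint-convexity idea does not close: if $S_f(\rho\|\sigma)=S_f(\lambda_0\sigma\|\sigma)=\sigma(1)f(\lambda_0)$, then joint convexity only yields $S_f(\alpha\rho+(1-\alpha)\lambda_0\sigma\,\|\,\sigma)\le\sigma(1)f(\lambda_0)$, which combined with \eqref{F-4.3} becomes an equality for every $\alpha$ --- you get no contradiction without a prior \emph{strict} joint-convexity statement, and none is available. The Jensen-rigidity route on the spectral measure $d\|E_{\rho,\sigma}(t)\xi_\sigma\|^2$ can be made to work, but as you describe it two steps are missing: (a) the measure has total mass $\sigma(s_M(\rho))$ and first moment $\rho(s_M(\sigma))$, so Jensen alone only compares $\int f\,d\nu$ with $\sigma(s_M(\rho))f\bigl(\rho(s_M(\sigma))/\sigma(s_M(\rho))\bigr)$; you still need a separate perspective-function inequality to absorb the two boundary terms in \eqref{F-2.6} and reach $\sigma(1)f(\rho(1)/\sigma(1))$, and its equality analysis to force $s_M(\rho)=s_M(\sigma)$; (b) even once you know the spectral measure is a point mass at $\lambda_0$ and the supports agree, you must still invoke \eqref{F-2.1} together with $J\xi_\sigma=\xi_\sigma$ to pass from ``$\xi_\sigma$ is an eigenvector of $\Delta_{\rho,\sigma}$'' to $\xi_\rho=\lambda_0^{1/2}\xi_\sigma$. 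Neither step is mentioned.

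The paper bypasses all of this with a different and cleaner device: for every projection $e\in M$ it applies monotonicity \eqref{F-4.2} to the two-dimensional abelian subalgebra $N=\bC e+\bC e^\perp$. Sandwiching with \eqref{F-4.3} for $\rho|_N,\sigma|_N$ forces the scalar identity
\[
\sigma(1)f\!\bigl(\rho(1)/\sigma(1)\bigr)
=\sigma(e)f\!\bigl(\rho(e)/\sigma(e)\bigr)
+\sigma(e^\perp)f\!\bigl(\rho(e^\perp)/\sigma(e^\perp)\bigr),
\]
and an elementary strict-convexity lemma (Lemma \ref{L-4.3}, which already builds in the boundary conventions \eqref{F-2.3}) gives $\rho(e)=\lambda_0\sigma(e)$. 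Since $e$ is arbitrary, $\rho=\lambda_0\sigma$. This avoids spectral theory entirely and is what you should aim for.
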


\begin{proof}
(1)\enspace
When $N:=\bC1$ in \eqref{F-4.2}, inequality \eqref{F-4.3} arises. If $\rho=k\sigma$ with a
constant $k>0$, then we have $\Delta_{\rho,\sigma}=k\Delta_\sigma$, $\Delta_\sigma$ being
the modular operator for $\sigma$, and hence
$d\|E_{\rho,\sigma}(t)\xi_\sigma\|^2=\sigma(1)d\delta_1(t)$, giving
$S_f(\rho\|\sigma)=f(k)\sigma(1)$. Conversely, assume that $\rho,\sigma\ne0$ and equality
holds in \eqref{F-4.3}. Further, assume that $f$ is non-linear. Since $f$ is operator convex
on $(0,+\infty)$, it is strictly convex there. For every projection $e\in M$, applying
\eqref{F-4.2} to $N:=\bC e+\bC e^\perp$ (where $e^\perp:=1-e$) gives
$$
S_f(\rho\|\sigma)\ge S_f(\rho|_N\|\sigma|_N)=\sigma(e)f(\rho(e)/\sigma(e))
+\sigma(e^\perp)f(\rho(e^\perp)/\sigma(e^\perp)).
$$
From this and \eqref{F-4.3} for $\rho|_N$ and $\sigma|_N$ one has
$$
\sigma(1)f(\rho(1)/\sigma(1))
=\sigma(e)f(\rho(e)/\sigma(e))+\sigma(e^\perp)f(\rho(e^\perp)/\sigma(e^\perp)).
$$
By Lemma \ref{L-4.3} below one has $\rho(e)=k\sigma(e)$ and $\rho(e^\perp)=k\sigma(e^\perp)$
for some $k>0$. Since $k=\rho(1)/\sigma(1)$ follows, we find that
$\rho(e)=(\rho(1)/\sigma(1))\sigma(e)$ for all projections $e\in M$, showing that
$\rho=(\rho(1)/\sigma(1))\sigma$.

(2) is immediately seen from (1).
\end{proof}

The next elementary lemma has been used in the above, whose proof is given for completeness,
since we find no suitable reference.

\begin{lemma}\label{L-4.3}
Let $f:(0,+\infty)\to\bR$ is a strictly convex function (not necessarily operator convex).
Let $a_i,b_i\in[0,+\infty)$ for $i=1,2$ be such that $a_1+a_2>0$ and $b_1+b_2>0$. If
$$
(b_1+b_2)f\biggl({a_1+a_2\over b_1+b_2}\biggr)=b_1f(a_1/b_1)+b_2f(a_2/b_2)
$$
with convention \eqref{F-2.3}, then $(a_1,a_2)=k(b_1,b_2)$ for some $k>0$.
\end{lemma}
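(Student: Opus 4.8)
The statement is a standard "equality in convexity forces collinearity" fact, adapted to the homogeneous two-variable form $b f(a/b)$. The plan is to reduce it to the one-dimensional strict convexity of $f$ on $(0,+\infty)$, handling the boundary cases $a_i=0$ or $b_i=0$ separately via the conventions in \eqref{F-2.3}.

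First I would dispose of the degenerate cases. If $b_1=0$, then $b_2=b_1+b_2>0$, and the equality reads $b_2 f\bigl((a_1+a_2)/b_2\bigr) = 0\cdot f(a_1/0) + b_2 f(a_2/b_2)$, i.e. $b_2\bigl[f\bigl((a_1+a_2)/b_2\bigr) - f(a_2/b_2)\bigr] = f'(+\infty)a_1$. If $a_1>0$ this forces $f'(+\infty)<+\infty$ and, since $f$ is strictly convex, $f\bigl((a_2+a_1)/b_2\bigr) - f(a_2/b_2) < f'(+\infty)\cdot(a_1/b_2)\cdot b_2 = f'(+\infty)a_1$ (the slope of a chord is strictly less than the asymptotic slope $f'(+\infty)$ for a strictly convex $f$), a contradiction; hence $a_1=0$, and then $(a_1,a_2)=k(b_1,b_2)$ with $k=a_2/b_2>0$. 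The case $b_2=0$ is symmetric. So from now on $b_1,b_2>0$.

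With $b_1,b_2>0$, set $t_i:=a_i/b_i\in[0,+\infty)$ and $\lambda:=b_1/(b_1+b_2)\in(0,1)$, so that $(a_1+a_2)/(b_1+b_2) = \lambda t_1 + (1-\lambda)t_2$. The hypothesis becomes $f(\lambda t_1+(1-\lambda)t_2) = \lambda f(t_1) + (1-\lambda)f(t_2)$, where I interpret $f$ at $0$ as $f(0^+)$; note $f(0^+)$ may be $+\infty$, but if either $t_i=0$ and $f(0^+)=+\infty$ the right-hand side is $+\infty$ while the left-hand side is finite when $t_1,t_2$ are not both $0$, so that case is excluded unless $t_1=t_2=0$, which already gives collinearity with $a_1=a_2=0$ — but the hypothesis $a_1+a_2>0$ together with $b_1+b_2>0$ does allow $a_1=a_2=0$? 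No: if $a_1=a_2=0$ then $a_1+a_2=0$, contradicting $a_1+a_2>0$, so not both $t_i$ vanish. Hence whenever some $t_i=0$ we must have $f(0^+)<+\infty$ and $f$ extends to a finite strictly convex function on $[0,+\infty)$. Now the equality $f(\lambda t_1+(1-\lambda)t_2) = \lambda f(t_1)+(1-\lambda)f(t_2)$ with $\lambda\in(0,1)$ and $f$ strictly convex on the relevant interval forces $t_1=t_2$; call the common value $k$. Since not both $a_i$ vanish and $b_i>0$, we have $k=t_1=t_2\ge 0$, and in fact $k>0$ because $a_1+a_2 = k(b_1+b_2)>0$. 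Then $a_i = k b_i$, i.e. $(a_1,a_2)=k(b_1,b_2)$ with $k>0$, as desired.

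The only mildly delicate point — the "main obstacle," such as it is — is bookkeeping the boundary conventions: making sure that the equality hypothesis, read through \eqref{F-2.3}, genuinely forces finiteness of $f(0^+)$ (resp. $f'(+\infty)$) in exactly the cases where a $t_i$ is $0$ (resp. a $b_i$ is $0$), and that strict convexity of $f$ on $(0,+\infty)$ transfers to strict convexity on $[0,+\infty)$ once $f(0^+)$ is known finite (which follows because $t\mapsto f(t)$ is convex with finite limit at $0$, and strict convexity at interior points plus finiteness at the endpoint yields strict convexity on the closed interval — a chord through an endpoint and an interior point still lies strictly above the graph in between). Everything else is the textbook equality case of Jensen's inequality for a strictly convex function.
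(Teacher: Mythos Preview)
Your argument is correct and follows essentially the same approach as the paper's proof: a case analysis reducing everything to the equality case of strict convexity, with the boundary conventions \eqref{F-2.3} forcing finiteness of $f(0^+)$ or $f'(+\infty)$ exactly where needed. Your organization (first dispose of $b_i=0$ via a single ``chord slope $<$ asymptotic slope'' argument, then treat $b_1,b_2>0$ as Jensen's equality case) is slightly more streamlined than the paper's five separate cases, but the content is identical; note the small slip in your displayed inequality --- you want $b_2\bigl[f((a_1+a_2)/b_2)-f(a_2/b_2)\bigr]<f'(+\infty)\,a_1$, not the extra factor of $b_2$ as written --- and in the subcase $b_1=0$, $a_1>0$, $a_2=0$ the equality also forces $f(0^+)<+\infty$, which you use implicitly.
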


\begin{proof}
We may consider the following four cases separately.

{\it Case $a_i,b_i>0$ for $i=1,2$}.\enspace
Since
$$
(b_1+b_2)f\biggl({b_1\over b_1+b_2}\cdot{a_1\over b_1}
+{b_2\over b_1+b_2}\cdot{a_2\over b_2}\biggr)=b_1f(a_1/b_1)+b_2f(a_2/b_2),
$$
the strict convexity of $f$ implies that $a_1/b_1=a_2/b_2$.

{\it Case $a_1=0$ and $b_1,b_2>0$}.\enspace
The assumption means that
$$
(b_1+b_2)f\biggl({a_2\over b_1+b_2}\biggr)=b_1f(0/b_1)+b_2f(a_2/b_2)
=b_1f(0^+)+b_2f(a_2/b_2),
$$
which implies that $f(0^+)<+\infty$. Hence $f$ extends to a strictly convex function on
$[0,+\infty)$, and the above equality gives $a_2/b_2=0$, which is impossible since
$a_1+a_2>0$.

{\it Case $a_1,a_2>0$ and $b_1=0$}.\enspace
This case reduces to the previous case if we replace $f$ with its transpose $\widetilde f$.

{\it Case $a_1=b_1=0$ or $a_2=b_2=0$}.\enspace
The assertion trivially holds in this case.

{\it Case $a_1=b_2=0$}.\enspace
The assumption means that
$$
b_1f(a_2/b_1)=b_1f(0/b_1)+0f(a_2/0)=f(0^+)b_1+f'(+\infty)a_2,
$$
which implies that $f(0^+)<+\infty$ and $f'(+\infty)<+\infty$. Then it is easy to find that
$f(t)<f(0^+)+f'(+\infty)t$ for all $t>0$, which contradicts the above equality.
\end{proof}

For $\sigma\in M_*^+$ and a projection $e\in M$, we write $e\sigma e$ for the restriction of
$\sigma$ to the reduced von Neumann algebra $eMe$. 

\begin{cor}\label{C-4.4}
\begin{itemize}
\item[(1)] If $e\in M$ is a projection such that $s_M(\rho),s_M(\sigma)\le e$, then
\begin{align}\label{F-4.4}
S_f(\rho\|\sigma)=S_f(e\rho e\|e\sigma e).
\end{align}
\item[(2)] If $\rho_i,\sigma_i\in M_*^+$, $i=1,2$, and
$s_M(\rho_1)\vee s_M(\sigma_1)\perp s_M(\rho_2)\vee s_M(\sigma_2)$, then
\begin{align}\label{F-4.5}
S_f(\rho_1+\rho_2\|\sigma_1+\sigma_2)=S_f(\rho_1\|\sigma_1)+S_f(\rho_2\|\sigma_2).
\end{align}
\item[(3)] If $\omega_1,\omega_2\in M_*^+$ and $S_f(\omega_1\|\omega_2)<+\infty$, then for
every $\rho,\sigma\in M_*^+$,
$$
S_f(\rho\|\sigma)=\lim_{\eps\searrow0}S_f(\rho+\eps\omega_1\|\sigma+\eps\omega_2).
$$
In particular, for every $\rho,\sigma,\omega\in M_*^+$,
\begin{align}\label{F-4.6}
S_f(\rho\|\sigma)=\lim_{\eps\searrow0}S_f(\rho+\eps\omega\|\sigma+\eps\omega).
\end{align}
\end{itemize}
\end{cor}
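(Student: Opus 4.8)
The plan is to deduce all three parts from the machinery already established: the variational formula of Theorem~\ref{T-3.5}, the monotonicity, joint subadditivity and joint lower semicontinuity of Theorem~\ref{T-4.1}, and the homogeneity in Proposition~\ref{P-2.3}(3).

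\smallskip
\noindent\emph{Part (1).} Assume $e\neq0$ (otherwise $\rho=\sigma=0$ and both sides vanish). I would prove the two inequalities in \eqref{F-4.4} by applying Theorem~\ref{T-4.1}(iv) to two maps. First, the compression $\Phi:M\to eMe$, $\Phi(x):=exe$, is normal, unital (its value at $1$ is $e$, the unit of $eMe$), and a Schwarz map since $ex^*xe\ge ex^*exe=\Phi(x)^*\Phi(x)$; as $s_M(\rho),s_M(\sigma)\le e$ we have $e\xi_\rho=\xi_\rho$, $e\xi_\sigma=\xi_\sigma$, so $(e\rho e)\circ\Phi=\rho$ and $(e\sigma e)\circ\Phi=\sigma$, and Theorem~\ref{T-4.1}(iv) gives $S_f(\rho\|\sigma)\le S_f(e\rho e\|e\sigma e)$. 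Conversely, pick a normal state $\omega$ on $eMe$ and put $\Psi:eMe\to M$, $\Psi(y):=y+\omega(y)(1-e)$; this is normal, unital, and a Schwarz (indeed completely positive) map, since it factors through the unital $*$-homomorphism $eMe\oplus\bC(1-e)\hookrightarrow M$, the Schwarz inequality amounting to $\omega(y^*y)\ge|\omega(y)|^2$. Because $\rho(1-e)=\sigma(1-e)=0$ one has $\rho\circ\Psi=e\rho e$ and $\sigma\circ\Psi=e\sigma e$, so Theorem~\ref{T-4.1}(iv) gives the reverse inequality, proving \eqref{F-4.4}. (Alternatively \eqref{F-4.4} is immediate from Definition~\ref{D-2.1}: the standard form of $eMe$ sits inside that of $M$ on the reducing subspace $e\,JeJ\,\cH$, which contains $\xi_\rho,\xi_\sigma$ and the support of $\Delta_{\rho,\sigma}$, and there $\Delta_{e\rho e,e\sigma e}$ is the restriction of $\Delta_{\rho,\sigma}$.)

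\smallskip
\noindent\emph{Part (2).} Write $e_i:=s_M(\rho_i)\vee s_M(\sigma_i)$, so $e_1e_2=0$, $e_i\xi_{\rho_i}=\xi_{\rho_i}$, $e_i\xi_{\sigma_i}=\xi_{\sigma_i}$, and $e_2$ kills $\xi_{\rho_1},\xi_{\sigma_1}$ while $e_1$ kills $\xi_{\rho_2},\xi_{\sigma_2}$. The inequality $\le$ in \eqref{F-4.5} is the $k=2$ case of joint subadditivity (Theorem~\ref{T-4.1}(ii)). For $\ge$ I would argue from \eqref{F-3.12}. Given $\eps>0$, by Lemma~\ref{L-3.2} fix $n$ so large that $S_{f_n}(\rho_i\|\sigma_i)$ is within $\eps$ of $S_f(\rho_i\|\sigma_i)$ for $i=1,2$ (``larger than $1/\eps$'' if the value is $+\infty$), and then $M$-valued step functions $y_1,y_2$ so that the bracket $[\,\cdots]$ in \eqref{F-3.12} for $(\rho_i,\sigma_i)$, this $n$, and $y_i$ is within $\eps$ of $S_{f_n}(\rho_i\|\sigma_i)$. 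Set $x(s):=1-e_1(1-y_1(s))e_1-e_2(1-y_2(s))e_2$, again an $M$-valued step function. Using the orthogonality relations, $(1-x(s))\xi_{\sigma_i}=e_i(1-y_i(s))\xi_{\sigma_i}$ and $x(s)^*\xi_{\rho_i}=e_iy_i(s)^*\xi_{\rho_i}$ for $i=1,2$, hence
$$
(\sigma_1+\sigma_2)\bigl((1-x(s))^*(1-x(s))\bigr)\le\sum_{i=1}^2\sigma_i\bigl((1-y_i(s))^*(1-y_i(s))\bigr),
$$
$$
(\rho_1+\rho_2)\bigl(x(s)x(s)^*\bigr)\le\sum_{i=1}^2\rho_i\bigl(y_i(s)y_i(s)^*\bigr),
$$
while the boundary terms split exactly: $(\sigma_1+\sigma_2)(1)=\sigma_1(1)+\sigma_2(1)$, $(\rho_1+\rho_2)(1)=\rho_1(1)+\rho_2(1)$. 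Hence the bracket in \eqref{F-3.12} for $(\rho_1+\rho_2,\sigma_1+\sigma_2)$, $n$, $x$ dominates the sum of the two brackets for $(\rho_i,\sigma_i)$, $n$, $y_i$, and Theorem~\ref{T-3.5} gives $S_f(\rho_1+\rho_2\|\sigma_1+\sigma_2)\ge S_f(\rho_1\|\sigma_1)+S_f(\rho_2\|\sigma_2)-4\eps$ (with the obvious modification when the right side is $+\infty$). Letting $\eps\searrow0$ and combining with $\le$ yields \eqref{F-4.5}.

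\smallskip
\noindent\emph{Part (3).} Joint subadditivity (Theorem~\ref{T-4.1}(ii)) together with homogeneity \eqref{F-2.8} gives
$$
S_f(\rho+\eps\omega_1\|\sigma+\eps\omega_2)\le S_f(\rho\|\sigma)+S_f(\eps\omega_1\|\eps\omega_2)=S_f(\rho\|\sigma)+\eps\,S_f(\omega_1\|\omega_2),
$$
so $\limsup_{\eps\searrow0}S_f(\rho+\eps\omega_1\|\sigma+\eps\omega_2)\le S_f(\rho\|\sigma)$ because $S_f(\omega_1\|\omega_2)<+\infty$. On the other hand $\rho+\eps\omega_1\to\rho$ and $\sigma+\eps\omega_2\to\sigma$ in the $\sigma(M_*,M)$-topology as $\eps\searrow0$, so joint lower semicontinuity (Theorem~\ref{T-4.1}(i)) gives $S_f(\rho\|\sigma)\le\liminf_{\eps\searrow0}S_f(\rho+\eps\omega_1\|\sigma+\eps\omega_2)$. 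Thus the limit exists and equals $S_f(\rho\|\sigma)$. The last statement \eqref{F-4.6} is the special case $\omega_1=\omega_2=\omega$, where $S_f(\omega\|\omega)=f(1)\omega(1)<+\infty$ by Proposition~\ref{P-2.3}(2).

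\smallskip
The only delicate point is the variational estimate in (2): one must choose a single cut-off $n$ that works for both $(\rho_1,\sigma_1)$ and $(\rho_2,\sigma_2)$, accommodate the case $S_f(\rho_i\|\sigma_i)=+\infty$, and check that gluing $y_1,y_2$ into $x$ through the orthogonal projections $e_1,e_2$ costs nothing in the quadratic terms $\sigma_i((1-y_i)^*(1-y_i))$ and $\rho_i(y_iy_i^*)$ — this is exactly where the hypothesis $s_M(\rho_1)\vee s_M(\sigma_1)\perp s_M(\rho_2)\vee s_M(\sigma_2)$ is used. Parts (1) and (3) are then quick consequences of what has already been proved.
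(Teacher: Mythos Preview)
Your proof is correct. Part (3) is identical to the paper's argument. In Part (1) the paper gets $S_f(e\rho e\|e\sigma e)\le S_f(\rho\|\sigma)$ from monotonicity and the reverse inequality from the variational formula (compress any $M$-valued step function $x(\cdot)$ to $y(s)=ex(s)e$ and observe the bracket in \eqref{F-3.12} does not increase); you instead obtain both directions purely from monotonicity via the two Schwarz maps $x\mapsto exe$ and $y\mapsto y+\omega(y)(1-e)$, which is a bit cleaner. In Part (2) the paper takes a different route: with $e:=e_1$ it passes to the unital subalgebra $N=eMe\oplus e^\perp Me^\perp$, uses the direct-sum additivity of Proposition~\ref{P-2.3}(4) together with Part (1), and gets the two inequalities from monotonicity under the injection $N\hookrightarrow M$ and under the compression $x\mapsto exe+e^\perp xe^\perp$. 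Your argument replaces one inequality by joint subadditivity and the other by a direct variational splice of near-optimal step functions $y_1,y_2$ into a single $x$. The paper's route is shorter and highlights that (2) follows formally from (1) and the trivial direct-sum case; yours is self-contained, does not invoke Proposition~\ref{P-2.3}(4) or Part (1), and shows that the variational expression \eqref{F-3.12} already encodes the additivity.
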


\begin{proof}
(1)\enspace
By monotonicity \eqref{F-4.2} we have $S_f(e\rho e\|e\sigma e)\le S_f(\rho\|\sigma)$. For any
$M$-valued step function $x(\cdot)$ on $(0,+\infty)$, let $y(s):=ex(s)e$, which is an
$eMe$-valued step function. Since
\begin{align*}
(e\sigma e)((e-y(s))^*(e-y(s)))&=\sigma(e(1-x(s))^*e(1-x(s))e)
\le\sigma((1-x(s))^*(1-s(s))), \\
(e\rho e)(y(s)y(s)^*)&=\sigma(ex(s)ex(s)^*e)\le\sigma(x(s)x(s)^*),
\end{align*}
the bracket $[\cdots]$ in \eqref{F-3.12} for $x(\cdot)$ is dominated by
$S_f(e\rho e\|e\sigma e)$. Hence equality \eqref{F-4.4} follows.

(2)\enspace
Let $e:=s_M(\rho_1)\vee s_M(\rho_2)$ and so $s_M(\rho_2)\vee s_M(\sigma_2)\le e^\perp$. Let
$\rho:=\rho_1+\rho_2$, $\sigma:=\sigma_1+\sigma_2$, and $N:=eMe\oplus e^\perp Me^\perp$.
Since $\rho|_N=e\rho_1e\oplus e^\perp\rho_2e^\perp$ and
$\sigma|_N=e\sigma_1e\oplus e^\perp\sigma_2e^\perp$, from monotonicity \eqref{F-4.2} and
Proposition \ref{P-2.3}\,(4) one has
$$
S_f(\rho\|\sigma)\ge S_f(e\rho_1e\|e\sigma_1e)
+S_f(e^\perp\rho_2e^\perp\|e^\perp\sigma_2e^\perp)
=S_f(\rho_1\|\sigma_1)+S_f(\rho_2\|\sigma_2),
$$
where we have used (1) for the last equality. On the other hand, consider the map
$$
\Phi: M\,\longrightarrow\,eMe\oplus e^\perp Me^\perp,\quad
\Phi(x):=exe+e^\perp xe^\perp,
$$
which is unital and completely positive (hence a Schwarz map). Since
$\rho=(e\rho_1e\oplus e^\perp\rho_2e^\perp)\circ\Phi$ and
$\sigma=(e\sigma_1e\oplus e^\perp\sigma_2e^\perp)\circ\Phi$, from monotonicity \eqref{F-4.1}
and Proposition \ref{P-2.3}\,(4) one has
$$
S_f(\rho\|\sigma)\le S_f(\rho_1\|\sigma_1)+S_f(\rho_2\|\sigma_2).
$$
Hence equality \eqref{F-4.5} is shown.

(3)\enspace
From joint subadditivity in Theorem \ref{T-4.1}\,(ii) and homogeneity \eqref{F-2.9} one has
$$
S_f(\rho+\eps\omega_1\|\sigma+\eps\omega_2)\le
S_f(\rho\|\sigma)+\eps S_f(\omega_1\|\omega_2)
\,\longrightarrow\,S_f(\rho\|\sigma)
$$
as $\eps\searrow0$. On the other hand, from lower semicontinuity in Theorem \ref{T-4.1}\,(i)
one has
$$
S_f(\rho\|\sigma)\le\liminf_{\eps\searrow0}S_f(\rho+\eps\omega_1\|\sigma+\eps\omega_2),
$$
showing the asserted convergence.
\end{proof}

The additivity in Corollary \ref{C-4.4}\,(2) improves that in Proposition \ref{P-2.3}\,(4);
yet we have used the latter in the above proof of the former. When $M$ is $\sigma$-finite so
that a faithful $\omega\in M_*^+$ exists, we can sometimes reduce arguments on
$S_f(\rho\|\sigma)$ to the case of faithful $\rho,\sigma\in M_*^+$ by using the convergence
property in \eqref{F-4.6}.

The next continuity property is not included in the martingale convergence in Theorem
\ref{T-4.1}, since $eMe$ is not a unital von Neumann subalgebra of $M$.

\begin{thm}\label{T-4.5}
Let $\{e_\alpha\}$ be an increasing net of projections in $M$ such that $e_\alpha\nearrow1$.
Then for every $\rho,\sigma\in M_*^+$,
$$
\lim_\alpha S_f(e_\alpha\rho e_\alpha\|e_\alpha\sigma e_\alpha)=S_f(\rho\|\sigma).
$$
\end{thm}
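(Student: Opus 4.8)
The plan is to prove the two inequalities $\limsup_\alpha S_f(e_\alpha\rho e_\alpha\|e_\alpha\sigma e_\alpha)\le S_f(\rho\|\sigma)$ and $S_f(\rho\|\sigma)\le\liminf_\alpha S_f(e_\alpha\rho e_\alpha\|e_\alpha\sigma e_\alpha)$ separately, working throughout with the variational expression \eqref{F-3.12}. For the first inequality, note that $e_\alpha\rho e_\alpha$ is a positive normal functional on $e_\alpha Me_\alpha$; we may equally regard it as the functional $\rho(e_\alpha\cdot e_\alpha)$ on all of $M$, and by Corollary \ref{C-4.4}\,(1) applied inside $e_\alpha Me_\alpha$ (together with Proposition \ref{P-2.3}\,(1)) the quantity $S_f(e_\alpha\rho e_\alpha\|e_\alpha\sigma e_\alpha)$ is unchanged. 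So it suffices to compare $S_f$ of the functionals $\rho_\alpha:=\rho(e_\alpha\cdot e_\alpha)$ and $\sigma_\alpha:=\sigma(e_\alpha\cdot e_\alpha)$ on $M$ with $S_f(\rho\|\sigma)$. For this I would use the map $\Phi_\alpha(x):=e_\alpha xe_\alpha+(1-e_\alpha)x(1-e_\alpha):M\to e_\alpha Me_\alpha\oplus(1-e_\alpha)M(1-e_\alpha)$, which is unital, normal and completely positive, so that by monotonicity \eqref{F-4.1} and additivity (Proposition \ref{P-2.3}\,(4)),
\begin{align*}
S_f(e_\alpha\rho e_\alpha\|e_\alpha\sigma e_\alpha)
&\le S_f(e_\alpha\rho e_\alpha\|e_\alpha\sigma e_\alpha)
+S_f\bigl((1-e_\alpha)\rho(1-e_\alpha)\big\|(1-e_\alpha)\sigma(1-e_\alpha)\bigr)\\
&=S_f\bigl(\rho\circ\Phi_\alpha\|\sigma\circ\Phi_\alpha\bigr)\le S_f(\rho\|\sigma),
\end{align*}
using here that $f$ is operator convex so that the extra term $S_f((1-e_\alpha)\rho(1-e_\alpha)\|(1-e_\alpha)\sigma(1-e_\alpha))$ is $\ge(1-e_\alpha)$-mass times a value of $f$; actually to be safe one should check this term is bounded below uniformly, but since $\sigma(1-e_\alpha)\to0$ and $\rho(1-e_\alpha)\to0$ and $S_f$ is bounded below by $a\sigma(1-e_\alpha)+b\rho(1-e_\alpha)$ (as in the proof of Lemma \ref{L-2.2}), this term $\to0$ and in particular is $\ge-C$ for small enough $\alpha$; that is enough to conclude $\limsup_\alpha S_f(e_\alpha\rho e_\alpha\|e_\alpha\sigma e_\alpha)\le S_f(\rho\|\sigma)$.

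For the reverse inequality I would argue directly from \eqref{F-3.12} with $L=M$. Fix any $c<S_f(\rho\|\sigma)$; choose $n\in\bN$ and an $M$-valued step function $x(\cdot)$ supported on $[1/n,n]$ with
\begin{align*}
c<f_n(0^+)\sigma(1)+f_n'(+\infty)\rho(1)
-\int_{[1/n,n]}\bigl\{\sigma((1-x(s))^*(1-x(s)))+s^{-1}\rho(x(s)x(s)^*)\bigr\}(1+s)\,d\nu_n(s).
\end{align*}
The strategy is to feed the truncated step function $x_\alpha(s):=e_\alpha x(s)e_\alpha$ — which is $e_\alpha Me_\alpha$-valued — into the variational formula \eqref{F-3.12} for $S_f(e_\alpha\rho e_\alpha\|e_\alpha\sigma e_\alpha)$. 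Writing $e_\alpha\rho e_\alpha$, $e_\alpha\sigma e_\alpha$ as functionals $\rho(e_\alpha\cdot e_\alpha)$, $\sigma(e_\alpha\cdot e_\alpha)$ on $M$ (legitimate by Corollary \ref{C-4.4}\,(1)), the bracketed expression becomes
\begin{align*}
f_n(0^+)\sigma(e_\alpha)+f_n'(+\infty)\rho(e_\alpha)
-\int_{[1/n,n]}\bigl\{\sigma\bigl(e_\alpha(e_\alpha-x_\alpha(s))^*(e_\alpha-x_\alpha(s))e_\alpha\bigr)
+s^{-1}\rho(x_\alpha(s)x_\alpha(s)^*)\bigr\}(1+s)\,d\nu_n(s),
\end{align*}
where the unit of $e_\alpha Me_\alpha$ is $e_\alpha$. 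As $\alpha$ increases, $e_\alpha\nearrow1$ in the strong* topology, so $\sigma(e_\alpha)\to\sigma(1)$, $\rho(e_\alpha)\to\rho(1)$, and since $\nu_n$ is a finite measure on a compact interval and the integrand involves only finitely many fixed values $x(s)$, dominated convergence (each value of $x$ is a fixed bounded operator, so $e_\alpha x e_\alpha\to x$ strong*, hence $\sigma$ and $\rho$ evaluated on the relevant products converge) gives that the whole expression converges to the right-hand side above, which exceeds $c$. Hence $S_f(e_\alpha\rho e_\alpha\|e_\alpha\sigma e_\alpha)>c$ for all sufficiently large $\alpha$, so $\liminf_\alpha S_f(e_\alpha\rho e_\alpha\|e_\alpha\sigma e_\alpha)\ge c$; letting $c\nearrow S_f(\rho\|\sigma)$ finishes the proof.

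The main obstacle I anticipate is bookkeeping rather than anything deep: one must be careful that when restricting to $e_\alpha Me_\alpha$ the "$1$" in the variational formula becomes $e_\alpha$, that $\rho(1-e_\alpha),\sigma(1-e_\alpha)\to0$ (which follows because $\xi_\rho,\xi_\sigma$ are fixed vectors and $e_\alpha\to1$ strongly), and that the convergence of the bracketed functional is uniform enough in the (finitely many) values of $x$; all of these are routine given the finite measure $(1+s)\,d\nu_n(s)$ on the compact interval $[1/n,n]$. A minor subtlety is that $f_n(0^+)$ may be negative, so the term $f_n(0^+)\sigma(e_\alpha)$ need not be monotone in $\alpha$; but this is harmless since we only need convergence of the bracket, not monotonicity. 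Alternatively, the first inequality can be obtained more cleanly by observing $e_\alpha\rho e_\alpha=\rho\circ\iota$ where $\iota$ is not unital, so one really does need the $\Phi_\alpha$ trick (or directly Corollary \ref{C-4.4}\,(1) combined with the monotonicity under the conditional-expectation-type map $\Phi_\alpha$) rather than a naive application of \eqref{F-4.2}.
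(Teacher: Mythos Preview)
Your argument is correct and takes a route genuinely different from the paper's, particularly in the $\liminf$ direction.

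For the upper bound, both approaches use the pinching map $\Phi_\alpha(x)=e_\alpha xe_\alpha+(1-e_\alpha)x(1-e_\alpha)$ together with additivity (the paper reaches it via martingale convergence for the unital subalgebras $M_\alpha:=e_\alpha Me_\alpha+\bC(1-e_\alpha)$). One minor slip: your claim that the remainder $S_f((1-e_\alpha)\rho(1-e_\alpha)\|(1-e_\alpha)\sigma(1-e_\alpha))\to0$ is neither proved nor needed; what you actually use is the lower bound $\ge a\sigma(1-e_\alpha)+b\rho(1-e_\alpha)\to0$, and that already gives $\limsup_\alpha S_f(e_\alpha\rho e_\alpha\|e_\alpha\sigma e_\alpha)\le S_f(\rho\|\sigma)$ when the latter is finite (the infinite case being trivial).

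The substantive difference is in the $\liminf$ direction. The paper first obtains via martingale convergence
\[
S_f(\rho\|\sigma)=\lim_\alpha\Bigl[S_f(e_\alpha\rho e_\alpha\|e_\alpha\sigma e_\alpha)
+\sigma(1-e_\alpha)f\bigl(\rho(1-e_\alpha)/\sigma(1-e_\alpha)\bigr)\Bigr],
\]
and must then show the scalar remainder tends to $0$. This is delicate when $f(0^+)=+\infty$ or $f'(+\infty)=+\infty$ (the ratio $\rho(1-e_\alpha)/\sigma(1-e_\alpha)$ can oscillate), and the paper resolves it by a contradiction argument: if the remainder did not vanish, one extracts a sequence of disjoint ``gaps'' $e_{\beta_k}-e_{\alpha_k}$ whose contributions force $S_f(\rho\|\sigma)=+\infty$. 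Your approach bypasses this analysis entirely by plugging the truncated step functions $x_\alpha(s)=e_\alpha x(s)e_\alpha$ directly into \eqref{F-3.12} on $e_\alpha Me_\alpha$; since $x(\cdot)$ has finitely many values, $(1+s)\,d\nu_n$ is a finite measure on $[1/n,n]$, and $e_\alpha\to1$ strongly with $\rho,\sigma$ normal, the bracketed expression converges to the one for $x(\cdot)$, giving $\liminf_\alpha S_f(e_\alpha\rho e_\alpha\|e_\alpha\sigma e_\alpha)\ge c$ for every $c<S_f(\rho\|\sigma)$. This is more direct and avoids the paper's case split; the paper's argument, on the other hand, makes the role of the scalar ``remainder'' term explicit and is closer in spirit to the martingale machinery of Theorem~\ref{T-4.1}(v).
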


\begin{proof}
By replacing $f$ with $f(t)-(a+bt)$ and noting \eqref{F-2.7}, we may assume that $f(t)\ge0$
for all $t\in(0,+\infty)$. Let $M_\alpha:=e_\alpha Me_\alpha+\bC(1- e_\alpha)$; then
$\{M_\alpha\}$ is an increasing net of von Neumann subalgebras with
$\bigl(\bigcup_\alpha M_\alpha\bigr)''=M$. Hence the martingale convergence in Theorem
\ref{T-4.1} and Corollary \ref{C-4.4}\,(2) imply that
\begin{align}\label{F-4.7}
S_f(\rho\|\sigma)
=\lim_\alpha\biggl[S_f(e_\alpha\rho e_\alpha\|e_\alpha\sigma e_\alpha)
+\sigma(1-e_\alpha)f\biggl({\rho(1-e_\alpha)\over\sigma(1-e_\alpha)}\biggr)\biggr],
\end{align}
increasingly in $\alpha$. First, assume that $S_f(\rho\|\sigma)=+\infty$ and prove that
\begin{align}\label{F-4.8}
\lim_\alpha S_f(e_\alpha\rho e_\alpha\|e_\alpha\sigma e_\alpha)=+\infty.
\end{align}
If $\limsup_\alpha\sigma(1-e_\alpha)f(\rho(1-e_\alpha)/\sigma(1-e_\alpha))<+\infty$, then
\eqref{F-4.8} clearly follows from \eqref{F-4.7}. Assume that
$\limsup_\alpha\sigma(1-e_\alpha)f(\rho(1-e_\alpha)/\sigma(1-e_\alpha))=+\infty$. Then for
any $K>0$ choose an $\alpha_0$ such that
$\sigma(1-e_{\alpha_0})f(\rho(1-e_{\alpha_0})/\sigma(1-e_{\alpha_0}))>K$. Since
$$
\rho(e_\alpha-e_{\alpha_0})\nearrow\rho(1-e_{\alpha_0}),\quad
\sigma(e_\alpha-e_{\alpha_0})\nearrow\sigma(1-{\alpha_0})\qquad
\mbox{as $\alpha_0\le\alpha\to``\infty"$},
$$
we easily see that
$$
\sigma(e_\alpha-e_{\alpha_0})f\biggl({\rho(e_\alpha-e_{\alpha_0})\over
\sigma(e_\alpha-e_{\alpha_0})}\biggr)
\ \longrightarrow\ \sigma(1-e_{\alpha_0})f\biggl({\rho(1-e_{\alpha_0})\over
\sigma(1-e_{\alpha_0})}\biggr)>K.
$$
By monotonicity of $S_f$ and the assumption $f\ge0$ we have for $\alpha\ge\alpha_0$
\begin{align*}
S_f(e_\alpha\rho e_\alpha\|e_\alpha\sigma e_\alpha)
&\ge\sigma(e_\alpha-e_{\alpha_0})f\biggl({\rho(e_\alpha-e_{\alpha_0})\over
\sigma(e_\alpha-e_{\alpha_0})}\biggr)
+\sigma(e_{\alpha_0})f\biggl({\rho(e_{\alpha_0})\over\sigma(e_{\alpha_0})}\biggr) \\
&\ge\sigma(e_\alpha-e_{\alpha_0})f\biggl({\rho(e_\alpha-e_{\alpha_0})\over
\sigma(e_\alpha-e_{\alpha_0})}\biggr),
\end{align*}
which is $>K$ for all sufficiently large $\alpha\ge\alpha_0$. Hence \eqref{F-4.8} follows.

Next, assume that $S_f(\rho\|\sigma)<+\infty$, and prove that
$\lim_\alpha S_f(e_\alpha\rho e_\alpha\|e_\alpha\sigma e_\alpha)=S_f(\rho\|\sigma)$. To do
this, by \eqref{F-4.7} we may prove that
$\lim_\alpha\sigma(1-e_\alpha)f(\rho(1-e_\alpha)/\sigma(1-e_\alpha))=0$. Assume on the
contrary that $\limsup_\alpha\sigma(1-e_\alpha)f(\rho(1-e_\alpha)/\sigma(1-e_\alpha))>\eps>0$
for some $\eps>0$ (here recall that $f\ge0$ was assumed). Choose an $\alpha_1$ such that
$\sigma(1-e_{\alpha_1})f(\rho(1-e_{\alpha_1})/\sigma(1-e_{\alpha_1}))>\eps$.
Then we can choose a $\beta_1>\alpha_1$ such that
$$
\sigma(e_{\beta_1}-e_{\alpha_1})f\biggl({\rho(e_{\beta_1}-e_{\alpha_1})\over
\sigma(e_{\beta_1}-e_{\alpha_1})}\biggr)>\eps.
$$
Next, choose an $\alpha_1>\beta_2$ such that
$\sigma(1-e_{\alpha_2})f(\rho(1-e_{\alpha_2})/\sigma(1-e_{\alpha_2}))>\eps$, and
a $\beta_2>\alpha_2$ such that
$$
\sigma(e_{\beta_2}-e_{\alpha_2})f\biggl({\rho(e_{\beta_2}-e_{\alpha_2})\over
\sigma(e_{\beta_2}-e_{\alpha_2})}\biggr)>\eps.
$$
Repeating the above argument we have $\alpha_1<\beta_1<\alpha_2<\beta_2<\cdots$ in such a
way that
$$
\sigma(e_{\beta_k}-e_{\alpha_k})f\biggl({\rho(e_{\beta_k}-e_{\alpha_k})\over
\sigma(e_{\beta_k}-e_{\alpha_k})}\biggr)>\eps
$$
for all $k\in\bN$. Let $e_{\alpha_k}\nearrow e_\infty$ and $e_0:=1-e_\infty$, and consider a
unital abelian von Neumann subalgebra of $M$
$$
\bigoplus_{k=1}^\infty\bC(e_{\beta_k}-e_{\alpha_k})\oplus
\bigoplus_{k=1}^\infty\bC(e_{\alpha_k}-e_{\beta_{k-1}})\oplus\bC e_0,
$$
where $e_{\beta_0}:=0$. By monotonicity of $S_f$ and Example \ref{E-2.5} together with
$f\ge0$, we have
$$
S_f(\rho\|\sigma)\ge\sum_{k=1}^\infty
\sigma(e_{\beta_k}-e_{\alpha_k})f\biggl({\rho(e_{\beta_k}-e_{\alpha_k})\over
\sigma(e_{\beta_k}-e_{\alpha_k})}\biggr)
+\sigma(e_0)f\biggl({\rho(e_0)\over\sigma(e_0)}\biggr)
=+\infty,
$$
which contradicts the assumption $S_f(\rho\|\sigma)<+\infty$.
\end{proof}

When $f\ge0$ in Theorem \ref{T-4.5}, from the monotonicity of $S_f$ we see that
$S_f(e_\alpha\rho e_\alpha\|e_\alpha\sigma e_\alpha)$ is increasing as $e_\alpha\nearrow1$.
But this is not the case unless $f\ge0$.

\begin{remark}\rm
When $M=B(\cH)$ with $\dim\cH=\infty$, according to Theorem \ref{T-4.5}, one can define the
relative entropy $D(\rho\|\sigma)$ for trace-class operators $\rho,\sigma\ge0$ as
\begin{align}\label{F-4.9}
D(\rho\|\sigma)=\lim_\alpha D(E_\alpha\rho E_\alpha\|E_\alpha\sigma E_\alpha),
\end{align}
where $\{E_\alpha\}$ is an increasing net of finite rank projections with
$E_\alpha\nearrow I$. But it seems that there is no simpler proof other than that of Theorem
\ref{T-4.5} for the existence of the limit in \eqref{F-4.9} and its independence of the
choice of $\{E_\alpha\}$.
\end{remark}

\section{R\'enyi divergences}

We define the notion of R\'enyi divergences $D_\alpha(\rho\|\sigma)$ for $\alpha\ge0$ in the
general von Neumann algebra setting.

\begin{definition}\label{D-5.1}\rm
Let $\rho,\sigma\in M_*^+$. Since $\xi_\sigma\in\cD(\Delta_{\rho,\sigma}^{1/2})$, the domain
of $\Delta_{\rho,\sigma}^{1/2}$, note that $\xi_\sigma\in\cD(\Delta_{\rho,\sigma}^{\alpha/2})$
for any $\alpha\in[0,1]$. We define the quantities $Q_\alpha(\rho\|\sigma)$ for $\alpha\ge0$
as follows: When $0\le\alpha<1$,
\begin{align}\label{F-5.1}
Q_\alpha(\rho\|\sigma):=\|\Delta_{\rho,\sigma}^{\alpha/2}\xi_\sigma\|^2
\ \in[0,+\infty),
\end{align}
and when $\alpha>1$,
\begin{align}\label{F-5.2}
Q_\alpha(\rho\|\sigma):=\begin{cases}
\|\Delta_{\rho,\sigma}^{\alpha/2}\xi_\sigma\|^2 &
\text{if $s_M(\rho)\le s_M(\sigma)$ and
$\xi_\sigma\in\cD(\Delta_{\rho,\sigma}^{\alpha/2})$}, \\
+\infty & \text{otherwise}.
\end{cases}
\end{align}
Moreover, when $\alpha=1$, define $Q_1(\rho\|\sigma):=\rho(1)$. Then for every
$\rho,\sigma\in M_*^+$ with $\rho\ne0$ and for each $\alpha\in[0,+\infty)\setminus\{1\}$,
the \emph{$\alpha$-R\'enyi divergence} $D_\alpha(\rho\|\sigma)$ is defined by
\begin{align}\label{F-5.3}
D_\alpha(\rho\|\sigma):={1\over\alpha-1}\log{Q_\alpha(\rho\|\sigma)\over\rho(1)}.
\end{align}
\end{definition}

In particular, note that $Q_0(\alpha\|\sigma)=\sigma(s_M(\rho))$ and
$D_0(\rho\|\sigma)=-\log\bigl[\sigma(s_M(\rho))/\rho(1)\bigr]$. The next lemma shows that
$Q_\alpha(\rho\|\sigma)$ is essentially a standard $f$-divergence and so
$D_\alpha(\rho\|\sigma)$ is a variant of standard $f$-divergences.

\begin{lemma}\label{L-5.2}
Define convex functions $f_\alpha$ on $[0,+\infty)$ by
$$
f_\alpha(t):=\begin{cases}t^\alpha & \text{if $\alpha\ge1$}, \\
-t^\alpha & \text{if $0<\alpha<1$}.
\end{cases}
$$
Then for every $\rho,\sigma\in M_*^+$, $Q_\alpha(\rho\|\sigma)$ is given as
\begin{align}\label{F-5.4}
Q_\alpha(\rho\|\sigma)=\begin{cases}
S_{f_\alpha}(\rho\|\sigma) & \text{if $\alpha\ge1$}, \\
-S_{f_\alpha}(\rho\|\sigma) & \text{if $0<\alpha<1$}.
\end{cases}
\end{align}
Moreover,
\begin{align}\label{F-5.5}
Q_\alpha(\rho\|\sigma)=\begin{cases}
\int_{(0,+\infty)}t^\alpha\,d\|E_{\rho,\sigma}(t)\xi_\sigma\|^2
& \text{if $0\le\alpha<1$ or $s_M(\rho)\le s_M(\sigma)$}, \\
+\infty & \text{if $\alpha>1$ and $s_M(\rho)\not\le s_M(\sigma)$}.
\end{cases}
\end{align}
\end{lemma}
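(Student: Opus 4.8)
The plan is to relate the spectral definitions of $Q_\alpha$ in Definition \ref{D-5.1} to the standard $f$-divergence $S_{f_\alpha}$ by unwinding Definition \ref{D-2.1}, treating the three regimes $0\le\alpha<1$, $\alpha=1$, and $\alpha>1$ separately. In each case the key is to compute the boundary contributions $f_\alpha(0^+)\sigma(1-s_M(\rho))$ and $f_\alpha'(+\infty)\rho(1-s_M(\sigma))$ appearing in \eqref{F-2.6} and to match the spectral integral $\int_{(0,+\infty)}t^\alpha\,d\|E_{\rho,\sigma}(t)\xi_\sigma\|^2$ with $\|\Delta_{\rho,\sigma}^{\alpha/2}\xi_\sigma\|^2$.

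First I would record the relevant boundary values. For $0<\alpha<1$ the function $f_\alpha(t)=-t^\alpha$ is operator convex on $(0,+\infty)$ with $f_\alpha(0^+)=0$ and $f_\alpha'(+\infty)=\lim_{t\to\infty}(-t^{\alpha-1})=0$, so by \eqref{F-2.6}, $S_{f_\alpha}(\rho\|\sigma)=\int_{(0,+\infty)}(-t^\alpha)\,d\|E_{\rho,\sigma}(t)\xi_\sigma\|^2$. Since $\xi_\sigma\in\cD(\Delta_{\rho,\sigma}^{1/2})$ we have $\int_{(0,+\infty)}t\,d\|E_{\rho,\sigma}(t)\xi_\sigma\|^2<+\infty$, hence $\int_{(0,+\infty)}t^\alpha\,d\|E_{\rho,\sigma}(t)\xi_\sigma\|^2<+\infty$ too (split the integral at $t=1$: on $(0,1]$ the integrand is bounded, on $(1,\infty)$ it is dominated by $t$), so this integral equals $\|\Delta_{\rho,\sigma}^{\alpha/2}\xi_\sigma\|^2=Q_\alpha(\rho\|\sigma)$ by the spectral theorem, giving both \eqref{F-5.4} and \eqref{F-5.5} with the value necessarily finite. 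For $\alpha=1$, $f_1(t)=t$ is linear, so by \eqref{F-2.7} we get $S_{f_1}(\rho\|\sigma)=\rho(1)=Q_1(\rho\|\sigma)$, and \eqref{F-5.5} reads $\int_{(0,+\infty)}t\,d\|E_{\rho,\sigma}(t)\xi_\sigma\|^2=\|\Delta_{\rho,\sigma}^{1/2}\xi_\sigma\|^2=\|s_M(\sigma)\xi_\rho\|^2=\rho(s_M(\sigma))$; this equals $\rho(1)$ precisely when $s_M(\rho)\le s_M(\sigma)$, which is consistent with the stated formula since the left regime of \eqref{F-5.5} holds automatically at $\alpha=1$ only under that support condition — I would note this matching carefully, using $J\Delta_{\rho,\sigma}^{1/2}\xi_\sigma=s_M(\sigma)\xi_\rho$ from \eqref{F-2.1}.

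The main case, and the main obstacle, is $\alpha>1$, where $f_\alpha(t)=t^\alpha$ is operator convex only for $1<\alpha\le2$; I would point out that outside that range $f_\alpha$ is still convex on $(0,+\infty)$, so $S_{f_\alpha}$ is well-defined by Lemma \ref{L-2.2} and Definition \ref{D-2.1} applies regardless. Here $f_\alpha(0^+)=0$ but $f_\alpha'(+\infty)=\lim_{t\to\infty}t^{\alpha-1}=+\infty$, so \eqref{F-2.6} gives
$$
S_{f_\alpha}(\rho\|\sigma)=\int_{(0,+\infty)}t^\alpha\,d\|E_{\rho,\sigma}(t)\xi_\sigma\|^2+(+\infty)\cdot\rho(1-s_M(\sigma)),
$$
with the convention $(+\infty)\cdot0=0$. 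Thus if $s_M(\rho)\not\le s_M(\sigma)$, i.e. $\rho(1-s_M(\sigma))>0$, the last term forces $S_{f_\alpha}(\rho\|\sigma)=+\infty=Q_\alpha(\rho\|\sigma)$, matching both \eqref{F-5.4} and the lower line of \eqref{F-5.5}. If instead $s_M(\rho)\le s_M(\sigma)$, then $\rho(1-s_M(\sigma))=0$ and $S_{f_\alpha}(\rho\|\sigma)=\int_{(0,+\infty)}t^\alpha\,d\|E_{\rho,\sigma}(t)\xi_\sigma\|^2$; by the spectral theorem this equals $\|\Delta_{\rho,\sigma}^{\alpha/2}\xi_\sigma\|^2$ when $\xi_\sigma\in\cD(\Delta_{\rho,\sigma}^{\alpha/2})$ and equals $+\infty$ otherwise — and in the latter situation $Q_\alpha(\rho\|\sigma)=+\infty$ by the "otherwise" clause of \eqref{F-5.2}, while in the former $Q_\alpha(\rho\|\sigma)=\|\Delta_{\rho,\sigma}^{\alpha/2}\xi_\sigma\|^2$ by definition. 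In every subcase the two sides of \eqref{F-5.4} and \eqref{F-5.5} agree, completing the proof. The only subtlety requiring care is bookkeeping the $(+\infty)\cdot0$ convention against the support condition and confirming that the spectral integral being $+\infty$ is exactly the failure of $\xi_\sigma\in\cD(\Delta_{\rho,\sigma}^{\alpha/2})$.
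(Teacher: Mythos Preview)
Your proposal is correct and follows essentially the same approach as the paper's proof: compute the boundary values $f_\alpha(0^+)$ and $f_\alpha'(+\infty)$ in each regime, plug into \eqref{F-2.6}, and match the resulting spectral integral with the definition of $Q_\alpha$ via the equivalence $\int_{(0,+\infty)}t^\alpha\,d\|E_{\rho,\sigma}(t)\xi_\sigma\|^2<+\infty\iff\xi_\sigma\in\cD(\Delta_{\rho,\sigma}^{\alpha/2})$. Your added care in justifying finiteness for $0<\alpha<1$, noting that only ordinary convexity (not operator convexity) of $f_\alpha$ is needed for Definition \ref{D-2.1} when $\alpha>2$, and checking the $\alpha=1$ case of \eqref{F-5.5} against the support condition, are all correct elaborations that the paper treats more tersely.
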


\begin{proof}
When $0<\alpha<1$, since $f_\alpha(0)=f_\alpha'(+\infty)=0$, we have by \eqref{F-5.1}
$$
S_{f_\alpha}(\rho\|\sigma)
=\int_{(0,+\infty)}(-t^\alpha)\,d\|E_{\rho,\sigma}(t)\xi_\sigma\|^2
=-Q_\alpha(\rho\|\sigma).
$$
When $\alpha=1$, \eqref{F-2.7} gives $S_{f_1}(\rho\|\sigma)=\rho(1)=Q_1(\rho\|\sigma)$.
When $\alpha>1$, since $f_\alpha(0)=0$ and $f_\alpha'(+\infty)=+\infty$,
$$
S_{f_\alpha}(\rho\|\sigma)
=\int_{(0,+\infty)}t^\alpha\,d\|E_{\rho,\sigma}(t)\xi_\sigma\|^2
+(+\infty)\rho(1-s_M(\sigma)).
$$
Note that $\rho(1-s_M(\sigma))=0$ $\iff$ $s_M(\rho)\le s_M(\sigma)$, and
$\int_{(0,+\infty)}t^\alpha\,d\|E_{\rho,\sigma}(t)\xi_\sigma\|^2<+\infty$ $\iff$
$\xi_\sigma\in\cD(\Delta_{\rho,\sigma}^{\alpha/2})$. Hence by \eqref{F-5.2} we see that
$S_{f_\alpha}(\rho\|\sigma)=Q_\alpha(\rho\|\sigma)$. Moreover, \eqref{F-5.5} immediately
follows from the above argument, where the case $\alpha=0$ is obvious.
\end{proof}

Some properties of $Q_\alpha$ and $D_\alpha$ are found in, e.g., \cite{Li,MH,Hi0,Mo} though
mostly in the finite-dimensional situation. More comprehensive summary of them are given in
the next proposition, mainly based on Theorem \ref{T-4.1}. Although (3) and (4) when
$\alpha\in[0,2]$ have been shown in \cite{BST}, we give their proofs as well for convenience
of the reader.

\begin{prop}\label{P-5.3}
Let $\rho,\sigma\in M_*^+$ with $\rho\ne0$.
\begin{itemize}
\item[(1)] If $s_M(\rho)\perp s_M(\sigma)$, then $D_\alpha(\rho\|\sigma)=+\infty$ for all
$\alpha\in[0,+\infty)\setminus\{1\}$.
\item[(2)] If $s_M(\rho)\not\perp s_M(\sigma)$, then $Q_\alpha(\rho\|\sigma)>0$ for all
$\alpha\ge0$ and the function $\alpha\in[0,+\infty)\mapsto\log Q_\alpha(\rho\|\sigma)$ is
convex.
\item[(3)] The limit $D_1(\rho\|\sigma):=\lim_{\alpha\nearrow1}D_\alpha(\rho\|\sigma)$ exists
and
\begin{align}\label{F-5.6}
D_1(\rho\|\sigma)={D(\rho\|\sigma)\over\rho(1)},
\end{align}
where $D(\rho\|\sigma)$ is the relative entropy. Moreover, if $D_\alpha(\rho\|\sigma)<+\infty$
for some $\alpha>1$, then $\lim_{\alpha\searrow1}D_\alpha(\rho\|\sigma)=D_1(\rho\|\sigma)$.
\item[(4)] The function $\alpha\in[0,+\infty)\mapsto D_\alpha(\rho\|\sigma)$ is monotone
increasing.
\item[(5)] Assume that $0<\alpha<1$. We have
\begin{align}\label{F-5.7}
Q_\alpha(\rho\|\sigma)=Q_{1-\alpha}(\rho\|\sigma),
\end{align}
and whenever $\rho,\sigma\ne0$,
\begin{align}\label{F-5.8}
{1\over\alpha}\,D_\alpha(\rho\|\sigma)={1\over1-\alpha}\,D_{1-\alpha}(\sigma\|\rho)
+{1\over\alpha(1-\alpha)}\log{\rho(1)\over\sigma(1)}.
\end{align}
Hence, if $\rho(1)=\sigma(1)$, then
$\lim_{\alpha\searrow0}{1\over\alpha}\,D_\alpha(\rho\|\sigma)=D_1(\sigma\|\rho)$.
\item[(6)] \emph{Joint lower semicontinuity:} For every $\alpha\in[0,2]$, the map
$(\rho,\sigma)\in(M_*^+\setminus\{0\})\times M_*^+\mapsto D_\alpha(\rho\|\sigma)\in(-\infty,+\infty]$ is
jointly lower semicontinuous in the $\sigma(M_*,M)$-topology.
\item[(7)] The map $(\rho,\sigma)\in M_*^+\times M_*^+\mapsto Q_\alpha(\rho\|\sigma)$ is
jointly concave and jointly superadditive for $0\le\alpha\le1$, and jointly convex and
jointly subadditive for $1\le\alpha\le2$. Hence, when $0\le\alpha\le1$,
$D_\alpha(\rho\|\sigma)$ is jointly convex on
$\{(\rho,\sigma)\in M_*^+\times M_*^+:\rho(1)=c\}$ for any fixed $c>0$.
\item[(8)] Let $\rho_i,\sigma_i\in M_*^+$ for $i=1,2$. If $0\le\alpha<1$, $\rho_1\le\rho_2$
and $\sigma_1\le\sigma_2$, then $Q_\alpha(\rho_1\|\sigma_1)\le Q_\alpha(\rho_2\|\sigma_2)$.
If $1\le\alpha\le2$ and $\sigma_1\le\sigma_2$, then
$Q_\alpha(\rho\|\sigma_1)\ge Q_\alpha(\rho\|\sigma_2)$. If $\sigma_1\le\sigma_2$, then
$D_\alpha(\rho\|\sigma_1)\ge D_\alpha(\rho\|\sigma_2)$ for all $\alpha\in[0,2]$.
\item[(9)] \emph{Monotonicity:} For each $\alpha\in[0,2]$, $D_\alpha(\rho\|\sigma)$ is
monotone under unital normal Schwarz maps, i.e.,
\begin{align}\label{F-5.9}
D_\alpha(\rho\circ\Phi\|\sigma\circ\Phi)\le D_\alpha(\rho\|\sigma)
\end{align}
for any unital normal Schwarz map $\Phi:N\to M$ as in Theorem \ref{T-4.1}\,(iv).
\item[(10)] \emph{Strict positivity:} Let $\alpha\in(0,+\infty)$ and $\rho,\sigma\ne0$. The
inequality
\begin{align}\label{F-5.10}
D_\alpha(\rho\|\sigma)\ge\log{\rho(1)\over\sigma(1)}
\end{align}
holds, and equality holds in \eqref{F-5.10} if and only if $\rho=(\rho(1)/\sigma(1))\sigma$.
If $\rho(1)=\sigma(1)$, then $D_\alpha(\rho\|\sigma)\ge0$, and $D_\alpha(\rho\|\sigma)=0$
$\iff$ $\rho=\sigma$.
\end{itemize}
\end{prop}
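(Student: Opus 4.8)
The plan is to derive essentially everything from Lemma~\ref{L-5.2}, which identifies $Q_\alpha(\rho\|\sigma)$ with $\pm S_{f_\alpha}(\rho\|\sigma)$, together with the spectral description \eqref{F-5.5} and the results of Sections~2--4; the normalizing factor $1/(\alpha-1)$ will be dealt with by straightforward sign-chasing, and the endpoints $\alpha=0,1$ by passing to limits. Throughout I write $\nu$ for the finite measure $d\|E_{\rho,\sigma}(t)\xi_\sigma\|^2$ on $(0,+\infty)$, and I record, as in the proof of Lemma~\ref{L-2.2}, that $\nu((0,+\infty))=\|s_M(\rho)\xi_\sigma\|^2=\sigma(s_M(\rho))$ and $\int_{(0,+\infty)}t\,d\nu(t)=\|\Delta_{\rho,\sigma}^{1/2}\xi_\sigma\|^2=\|s_M(\sigma)\xi_\rho\|^2=\rho(s_M(\sigma))$. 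For (1): if $s_M(\rho)\perp s_M(\sigma)$ then $\sigma(s_M(\rho))=0$, so $\nu=0$ on $(0,+\infty)$, whence $Q_\alpha(\rho\|\sigma)=0$ for $0\le\alpha<1$ while $Q_\alpha(\rho\|\sigma)=+\infty$ for $\alpha>1$ (as $s_M(\rho)\not\le s_M(\sigma)$); either way $D_\alpha(\rho\|\sigma)=+\infty$. For (2): if $s_M(\rho)\not\perp s_M(\sigma)$ then $\nu((0,+\infty))>0$, so $Q_\alpha(\rho\|\sigma)=\int_{(0,+\infty)}t^\alpha\,d\nu>0$ for all $\alpha\ge0$; convexity of $\alpha\mapsto\log Q_\alpha(\rho\|\sigma)$ on $[0,+\infty)$ is H\"older's inequality for $\nu$, the only subtle point being at $\alpha=1$, where $Q_1=\rho(1)\ge\int t\,d\nu$, but an upward jump of $\log Q_\alpha$ there can occur only when $s_M(\rho)\not\le s_M(\sigma)$, in which case $\log Q_\alpha=+\infty$ on $(1,+\infty)$ and the secant inequalities across $\alpha=1$ still hold thanks to $Q_1\ge\int t\,d\nu$.

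For part (3), the analytic core, I would treat the two support regimes separately. If $s_M(\rho)\le s_M(\sigma)$ then $Q_\alpha(\rho\|\sigma)=\int_{(0,+\infty)}t^\alpha\,d\nu$ for all $\alpha\ge0$ with $\int t\,d\nu=\rho(1)$, so $\frac{Q_\alpha(\rho\|\sigma)-\rho(1)}{\alpha-1}=\int_{(0,+\infty)}\frac{t^\alpha-t}{\alpha-1}\,d\nu(t)$, and since the secant slopes $\frac{t^\alpha-t}{\alpha-1}$ increase to $t\log t$ as $\alpha\uparrow1$ and decrease to it as $\alpha\downarrow1$, standard convergence theorems (monotone for $\alpha\uparrow1$ on $\{t\ge1\}$, dominated otherwise; for $\alpha\downarrow1$ one needs some $\alpha_0>1$ with $\int t^{\alpha_0}\,d\nu<+\infty$, i.e.\ $D_{\alpha_0}(\rho\|\sigma)<+\infty$) give $\frac{Q_\alpha(\rho\|\sigma)-\rho(1)}{\alpha-1}\to\int_{(0,+\infty)}t\log t\,d\nu=D(\rho\|\sigma)$; then $Q_\alpha(\rho\|\sigma)\to\rho(1)$ and $\log(1+x)/x\to1$ yield $D_\alpha(\rho\|\sigma)=\frac1{\alpha-1}\log\frac{Q_\alpha(\rho\|\sigma)}{\rho(1)}\to\frac{D(\rho\|\sigma)}{\rho(1)}$. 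If $s_M(\rho)\not\le s_M(\sigma)$ then $D(\rho\|\sigma)=+\infty$ and $Q_\alpha(\rho\|\sigma)\to\rho(s_M(\sigma))<\rho(1)$ as $\alpha\uparrow1$, so $D_\alpha\to+\infty$, and $D_\alpha=+\infty$ for $\alpha>1$; this settles (3). Part (4) is then formal: by (2), $g(\alpha):=\log Q_\alpha(\rho\|\sigma)$ is convex on $[0,+\infty)$ with $g(1)=\log\rho(1)$, so $D_\alpha=(g(\alpha)-g(1))/(\alpha-1)$ is non-decreasing on $[0,+\infty)\setminus\{1\}$ (the three cases $\alpha<\beta<1$, $1<\alpha<\beta$, $\alpha<1<\beta$), and combining with $D_\alpha\le D_1$ for $\alpha<1$ from (3) gives monotonicity on all of $[0,+\infty)$. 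For part (5), Proposition~\ref{P-2.4} applied to $f_\alpha(t)=-t^\alpha$ ($0<\alpha<1$), whose transpose is $f_{1-\alpha}$, gives the symmetry \eqref{F-5.7}; identity \eqref{F-5.8} is then a rearrangement of \eqref{F-5.3}, and the stated limit follows from $\frac1{1-\alpha}D_{1-\alpha}(\sigma\|\rho)\to D_1(\sigma\|\rho)$ (part (3)) when $\rho(1)=\sigma(1)$.

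Parts (6)--(9) I would obtain by transporting Theorem~\ref{T-4.1} through Lemma~\ref{L-5.2}. For $\alpha\in(1,2]$ the function $f_\alpha(t)=t^\alpha$ is operator convex and $Q_\alpha=S_{f_\alpha}$, so $(\rho,\sigma)\mapsto Q_\alpha(\rho\|\sigma)$ is jointly lower semicontinuous, jointly convex and subadditive, decreasing in $\sigma$ (Theorem~\ref{T-4.1}(iii), noting $f_\alpha(0^+)=0$), and monotone under unital normal Schwarz maps; for $\alpha\in(0,1)$ the function $f_\alpha(t)=-t^\alpha$ is operator convex and $Q_\alpha=-S_{f_\alpha}$, so the same inputs give joint upper semicontinuity, joint concavity and superadditivity, $\sigma_1\le\sigma_2\Rightarrow Q_\alpha(\rho\|\sigma_1)\le Q_\alpha(\rho\|\sigma_2)$, and $Q_\alpha(\rho\circ\Phi\|\sigma\circ\Phi)\ge Q_\alpha(\rho\|\sigma)$. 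Since $D_\alpha=\frac1{\alpha-1}\log(Q_\alpha/\rho(1))$ and $(\rho\circ\Phi)(1)=\rho(1)$, the corresponding statements for $D_\alpha$ with $\alpha\in(0,1)\cup(1,2]$ follow by applying $\log$ (concave increasing) and multiplying by $\frac1{\alpha-1}$, which is positive for $\alpha>1$ and negative for $\alpha<1$; for the convexity in (7) one works on the slice $\rho(1)=c$. The case $\alpha=0$ is checked by hand: $Q_0(\rho\|\sigma)=\sigma(s_M(\rho))$ is jointly concave and superadditive (from $s_M(\rho_1+\rho_2)=s_M(\rho_1)\vee s_M(\rho_2)$) and decreasing in $\sigma$, and $D_0(\rho\|\sigma)=-\log(\sigma(s_M(\rho))/\rho(1))$ is monotone under $\Phi$ because unitality and positivity force $\Phi(s_N(\rho\circ\Phi))\ge s_M(\rho)$. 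The case $\alpha=1$ uses $D_1(\rho\|\sigma)=\sup_{0\le\alpha<1}D_\alpha(\rho\|\sigma)$ from (3)--(4): a supremum of jointly lower semicontinuous functions is jointly lower semicontinuous, the monotonicities of (8)--(9) pass to the supremum, and joint convexity of $D_1$ on $\{\rho(1)=c\}$ is immediate from joint convexity of $D=S_{t\log t}$ (Theorem~\ref{T-4.1}(ii)).

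Finally, for part (10), when $1\le\alpha\le2$ or $0<\alpha<1$ the inequality \eqref{F-5.10} and its equality case come directly from the Peierls--Bogoliubov inequality \eqref{F-4.3} and its equality condition in Corollary~\ref{C-4.2}(1) applied to the non-linear operator convex $f_\alpha$: $S_{f_\alpha}(\rho\|\sigma)\ge\sigma(1)f_\alpha(\rho(1)/\sigma(1))$ rewrites via \eqref{F-5.4} as $Q_\alpha(\rho\|\sigma)\ge\rho(1)^\alpha\sigma(1)^{1-\alpha}$ for $\alpha\ge1$ and the reverse inequality for $0<\alpha<1$, which in either case is \eqref{F-5.10} after dividing $\log(Q_\alpha/\rho(1))$ by $\alpha-1$, with equality iff $\rho=(\rho(1)/\sigma(1))\sigma$. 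For $\alpha>2$, where $t^\alpha$ is not operator convex, I would argue directly: if $s_M(\rho)\not\le s_M(\sigma)$ then $D_\alpha=+\infty$, and otherwise, with $m:=\sigma(s_M(\rho))\le\sigma(1)$ and $\int t\,d\nu=\rho(1)$, Jensen's inequality for $t\mapsto t^\alpha$ and $m^{-1}\nu$ gives $Q_\alpha(\rho\|\sigma)=\int t^\alpha\,d\nu\ge m(\rho(1)/m)^\alpha=\rho(1)^\alpha m^{1-\alpha}\ge\rho(1)^\alpha\sigma(1)^{1-\alpha}$, i.e.\ \eqref{F-5.10}; equality forces, via the monotonicity (4), equality already at $\alpha=3/2$, hence $\rho=(\rho(1)/\sigma(1))\sigma$, the converse being the computation $\Delta_{\rho,\sigma}\xi_\sigma=(\rho(1)/\sigma(1))\xi_\sigma$, and the assertion for $\rho(1)=\sigma(1)$ is the specialization of \eqref{F-5.10}. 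The hard part will be part (3): one must both execute the limit of $Q_\alpha$ at $\alpha=1$ with the correct convergence theorems and separate the two support regimes, in the second of which $\log Q_\alpha$ is genuinely discontinuous at $\alpha=1$; a secondary difficulty is the $\alpha>2$ case of (10), where the machinery of Section~4 does not apply and one must revert to Jensen's inequality on the spectral measure.
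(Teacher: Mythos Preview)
Your approach is essentially the paper's own: both reduce everything to the spectral integral $\int t^\alpha\,d\nu$ (the paper's $F(\alpha)$), use H\"older for~(2), monotone/dominated convergence for~(3), deduce~(4) from convexity, Proposition~\ref{P-2.4} for~(5), transport Theorem~\ref{T-4.1} through Lemma~\ref{L-5.2} for~(6)--(9), and Corollary~\ref{C-4.2} for~(10). Two small remarks: in your $\alpha=0$ discussion you write that $Q_0(\rho\|\sigma)=\sigma(s_M(\rho))$ is ``decreasing in $\sigma$'', but of course it is \emph{increasing} in $\sigma$ (it is $D_0$ that decreases); and for the inequality in~(10) when $\alpha>2$ the paper simply invokes the monotonicity~(4) already proved, so your separate Jensen argument on the spectral measure, while correct, is unnecessary.
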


\begin{proof}
Write $F(\alpha):=\int_{(0,+\infty)}t^\alpha\,d\mu(t)$ for $\alpha>0$, where
$d\mu(t):=d\|E_{\rho,\sigma}(t)\xi_\sigma\|^2$ for $t\in(0,+\infty)$. Note that
$F(1)=\rho(s_M(\sigma))$. By \eqref{F-5.5} we note that
\begin{itemize}
\item[(A)] if $s_M(\rho)\le s_M(\sigma)$ then $Q_\alpha(\rho\|\sigma)=F(\alpha)$ for all
$\alpha\ge0$,
\item[(B)] if $s_M(\rho)\not\le s_M(\sigma)$ then
$$
Q_\alpha(\rho\|\sigma)=\begin{cases}
F(\alpha) & \text{for $0\le\alpha<1$}, \\
\rho(1)>F(1) &\text{for $\alpha=1$}, \\
+\infty & \text{for $\alpha>1$}.
\end{cases}
$$
\end{itemize}

(1)\enspace
If $s_M(\rho)\perp s_M(\sigma)$, i.e., $F(1)=\rho(s_M(\sigma))=0$, then we have $\mu=0$ so
that $F(\alpha)=0$ for all $\alpha\in[0,+\infty)$. Hence the conclusion follows from (B).

(2)\enspace
If $s_M(\rho)\not\perp s_M(\sigma)$, then we have $\mu\ne0$ so that $F(\alpha)>0$ for all
$\alpha\in[0,+\infty)$. Now by (A) and (B) we may show that $\log F(\alpha)$ is convex on
$[0,+\infty)$. Let $\alpha_1,\alpha_2\in(0,+\infty)$ and $0<\lambda<1$. H\"older's inequality
implies that
$$
\int t^{(1-\lambda)\alpha_1+\lambda t_2}\,d\mu(t)
\le\biggl[\int t^{\alpha_1}\,d\mu(t)\biggr]^{1-\lambda}
\biggl[\int t^{\alpha_2}\,d\mu(t)\biggr]^\lambda,
$$
which shows the convexity of $\log F(\alpha)$.

(3)\enspace
First, assume that $s_M(\rho)\not\le s_M(\sigma)$. As $0<\alpha\nearrow1$, since
$t^\alpha\nearrow t$ for $t\ge1$, by the monotone convergence theorem, we have
$F(\alpha)\to F(1)<\rho(1)$. Hence
$$
D_\alpha(\rho\|\sigma)={\log F(\alpha)-\log\rho(1)\over\alpha-1}
\ \longrightarrow\ +\infty={D(\rho\|\sigma)\over\rho(1)}.
$$
Second, assume that $s_M(\rho)\le s_M(\sigma)$, i.e., $F(1)=\rho(1)$. For any $t\in(0,+\infty)$,
since $\alpha\in(0,+\infty)\mapsto t^\alpha$ is convex, we see that as $0<\alpha\nearrow1$,
$$
t-1\le{t^\alpha-t\over\alpha-1}\ \nearrow\ t\log t,\qquad t\in(0,+\infty),
$$
so that the monotone convergence theorem gives
\begin{align}\label{F-5.11}
{F(\alpha)-F(1)\over\alpha-1}=\int{t^\alpha-t\over\alpha-1}\,d\mu(t)
\ \nearrow\ \int t\log t\,d\mu(t)=S_{t\log t}(\rho\|\sigma)=D(\rho\|\sigma).
\end{align}
This implies that as $0<\alpha\nearrow1$,
$$
F(\alpha)=\rho(1)+D(\rho\|\sigma)(\alpha-1)+o(1-\alpha)
$$
so that
$$
\log{F(\alpha)\over\rho(1)}={D(\rho\|\sigma)\over\rho(1)}(\alpha-1)+o(1-\alpha).
$$
Therefore, $D_\alpha(\rho\|\sigma)\to D(\rho\|\sigma)/\rho(1)$.

Next, assume that $D_{\alpha_0}(\rho\|\sigma)<+\infty$, i.e., $s_M(\rho)\le s_M(\sigma)$ and
$\int t^{\alpha_0}\,d\mu(t)<+\infty$ for some $\alpha_0>1$. As $\alpha_0\ge\alpha\searrow1$,
since
$$
{t^{\alpha_0}-1\over\alpha_0-1}\ge{t^\alpha-t\over\alpha-1}\ \searrow\ t\log t,
\qquad t\in(0,+\infty),
$$
the Lebesgue convergence theorem gives, as in \eqref{F-5.11},
$$
{F(\alpha)-F(1)\over\alpha-1}\ \searrow\ D(\rho\|\sigma),
$$
and hence the latter assertion is shown similarly to the above.

(4)\enspace
When $s_M(\rho)\perp s_M(\sigma)$, this is obvious from (1). Otherwise, this immediately
follows from convexity of $\alpha\mapsto\log Q_\alpha(\rho\|\sigma)$ in (2) (and from
definition of $D_\alpha$, $Q_1$ and $D_1$).

(5)\enspace
Let $0<\alpha<1$. Since $\widetilde f_\alpha=f_{1-\alpha}$, Proposition \ref{P-2.4} with
\eqref{F-5.4} gives $Q_\alpha(\rho\|\sigma)=Q_{1-\alpha}(\sigma\|\rho)$ and hence for
$\sigma\ne0$,
$$
D_\alpha(\rho\|\sigma)={1\over\alpha-1}\log{Q_{1-\alpha}(\sigma\|\rho)\over\rho(1)}
={\alpha\over1-\alpha}\,D_{1-\alpha}(\sigma\|\rho)
+{1\over1-\alpha}\log{\rho(1)\over\sigma(1)},
$$
implying \eqref{F-5.8}. From this and (3) the second assertion follows.

(6)\enspace
One can consider $\log$ as an continuous increasing function from $[0,+\infty]$ to
$[-\infty,+\infty]$. We see from (A) and (B) above that for every $\rho,\sigma\in M_*^+$,
$Q_\alpha(\rho\|\sigma)$ is in $[0,+\infty)$ for $0\le\alpha<1$ and in $(0,+\infty]$ for
$\alpha>1$. Hence $D_\alpha(\rho\|\sigma)\in(-\infty,+\infty]$ for any
$\alpha\in[0,+\infty)\setminus\{1\}$. Since $f_\alpha$ is operator convex on $(0,+\infty)$ if
$0\le\alpha\le2$, by \eqref{F-5.4} and Theorem \ref{T-4.1}\,(i) the map
$(\rho,\alpha)\in M_*^+\times M_*^+\mapsto Q_\alpha(\rho\|\sigma)$ is upper semicontinuous
for $0\le\alpha<1$ and lower semicontinuous for $1<\alpha\le2$ in the $\sigma(M_*,M)$-topology.
Hence $(\rho,\sigma)\in(M_*^+\setminus\{0\})\times M_*^+\mapsto D_\alpha(\rho\|\sigma)$ is
lower semicontinuous for any $\alpha\in[0,2]\setminus\{1\}$. For $\alpha=1$, the result reduces
to the case of the relative entropy due to \eqref{F-5.6}. 

(7)\enspace
The first part is a consequence of Theorem \ref{T-4.1}\,(ii) in view of \eqref{F-5.4}. Then
the second is clear since a non-negative concave function is log-concave.

(8)\enspace
The results for $Q_\alpha$ follow from Theorem \ref{T-4.1}\,(iii) and \eqref{F-5.4}. This gives
the assertion on $D_\alpha$ for $\alpha\in[0,2]\setminus\{1\}$. The case of $D_1$ follows from
(3) or it is a well-known fact of $D$.

(9) follows from Theorem \ref{T-4.1}\,(iv) in view of \eqref{F-5.4} and \eqref{F-5.6} for
$\alpha=1$.

(10)\enspace
When $\alpha\in[0,2]\setminus\{1\}$, inequality \eqref{F-5.10} is a special case of
\eqref{F-5.9} for $N=\bC1$, since for scalars $\rho(1)$ and $\sigma(1)$,
$$
D_\alpha(\rho(1)\|\sigma(1))={1\over\alpha-1}
\log{\rho(1)^\alpha\sigma(1)^{1-\alpha}\over\rho(1)}=\log{\rho(1)\over\sigma(1)}.
$$
By (4) the inequality holds for $\alpha>2$ as well. If
$\rho=k\sigma$ with $k=\rho(1)/\sigma(1)$, then
$\Delta_{\rho,\sigma}\xi_\sigma=k\Delta_\sigma\xi_\sigma=k\xi_\sigma$ and hence
$Q_\alpha(\rho\|\sigma)=k^\alpha\sigma(1)$. Therefore,
$D_\alpha(\rho\|\sigma)=\log(\rho(1)/\sigma(1))$ for all $\alpha>0$ (including $\alpha=1$).
Conversely, if equality holds for some $\alpha>0$, then by (4) the same holds for some
$\alpha\in(0,1)$. This means that equality \eqref{F-4.3} holds for $f=f_\alpha$, so that
$\rho=(\rho(1)/\sigma(1))\sigma$ follows from Corollary \ref{F-4.2}\,(1). Finally, the
second part of (10) is clear from the first.
\end{proof}

\begin{remark}\label{R-5.4}\rm
(1)\enspace
In Proposition \ref{P-5.3}\,(3), the assumption that $D_\alpha(\rho\|\sigma)<+\infty$ for some
$\alpha>1$ is essential to have
$\lim_{\alpha\searrow1}D_\alpha(\rho\|\sigma)=D_1(\rho\|\sigma)$. Indeed, it is not difficult
to find commuting density operators $\rho=\sum_{i=1}^\infty a_i|e_i\>\<e_i|$ and
$\sigma=\sum_{i=1}^\infty b_i|e_i\>\<e_i|$ on $\cH$ such that
$$
D(\rho\|\sigma)=\sum_{i=1}^\infty a_i\log{a_i\over b_i}<+\infty,\qquad
D_\alpha(\rho\|\sigma)=\sum_{i=1}^\infty a_i^\alpha b_i^{1-\alpha}=+\infty\quad
\mbox{for all $\alpha>1$}.
$$

(2)\enspace
The convexity of $Q_\alpha$ for $1\le\alpha\le2$ in Proposition \ref{P-5.3}\,(7) cannot extend
to $\alpha>2$ even in the finite-dimensional case and in separate arguments. This implies
that the monotonicity property of $D_\alpha$ in (9) fails to hold for $\alpha>2$, because the
monotonicity of $Q_\alpha$ under unital completely positive maps yields its joint convexity.
Also, the monotone decreasing of $\sigma\mapsto Q_\alpha(\rho\|\sigma)$ for $1\le\alpha\le2$
in (8) cannot extend to $\alpha>2$. But it seems possible that the joint lower semicontinuity
of $D_\alpha$ as in (6) (or in the norm topology) is true for $\alpha>2$ as well (this is easily
verified in the finite-dimensional case).

(3)\enspace
In Proposition \ref{P-5.3}\,(7), due to division by $\rho(1)$ in definition \eqref{F-5.3},
the map $\rho\mapsto D_\alpha(\rho\|\sigma)$ with $\sigma\in M_*^+$ fixed cannot be convex on
the whole $M_*^+$. However, in the finite-dimensional case it was shown \cite[Theorem II.1]{MH}
that $\sigma\mapsto D_\alpha(\rho\|\sigma)$ with $\rho\in M_*^+\setminus\{0\}$ fixed is convex
on $M_*^+$ for any $\alpha\in[0,2]$. It is natural to expect that this extends to the general
von Neumann algebra case.
\end{remark}

We end the main body of the paper with a remark on relations of $D_\alpha(\rho\|\sigma)$ with
other R\'enyi type divergences from recent papers \cite{BST,Je1}.

\begin{remark}\label{R-5.5}\rm
For every $\rho,\sigma\in M_*^+$, in view of Proposition \ref{P-5.3}\,(4) one can define
$$
D_\infty(\rho\|\sigma):=\lim_{\alpha\to+\infty}D_\alpha(\rho\|\sigma).
$$
The \emph{max-relative entropy} introduced in \cite{Da} is
$$
D_{\max}(\rho\|\sigma):=\log\inf\{t>0:\rho\le t\sigma\},
$$
where $\inf\emptyset=+\infty$ as usual. The \emph{sandwiched R\'enyi divergence}
$\widetilde D_\alpha(\rho\|\sigma)$ \cite{MDST,WWY} has recently been extended to the von
Neumann algebra setting by Berta, Scholz and Tomamichel \cite{BST} and
Jen\v cov\'a \cite{Je1,Je2}. From \cite{BST,Je1} we remark that for every
$\rho,\sigma\in M_*^+$,
\begin{itemize}
\item[(a)] $\widetilde D_\alpha(\rho\|\sigma)\le D_\alpha(\rho\|\sigma)$ for $\alpha>1$,
\item[(b)] $\lim_{\alpha\to+\infty}\widetilde D_\alpha(\rho\|\sigma)=D_{\max}(\rho\|\sigma)$,
\item[(c)] $D_2(\rho\|\sigma)\le D_{\max}(\rho\|\sigma)\le D_\infty(\rho\|\sigma)$.
\end{itemize}
\end{remark}

\section{Closing remarks}

In this paper we present a systematic treatment of standard $f$-divergences in the setting of
general von Neumann algebras and general operator convex functions $f$ on $(0,+\infty)$. The
main theorem is the variational expression of an arbitrary standard $f$-divergence
$S_f(\rho\|\sigma)$. We also present a comprehensive account on the quantum R\'enyi divergence
in von Neumann algebras on the basis of theory of standard $f$-divergences. There are some
other important quantum divergences; in particular, the maximal $f$-divergence (discussed
in \cite{HiMo} in the finite-dimensional case) is worth studying. The most significant problem
related to the standard $f$-divergence and other quantum divergences is the reversibility via
them, as explained in the Introduction. These should be our forthcoming research topics.

\subsection*{Acknowledgements}
This was supported in part by Grant-in-Aid for Scientific Research (C)17K05266. The author is
grateful to Mil\'an Mosonyi for suggestion on Theorem \ref{T-4.5} and Anna Jen\v cov\'a for
discussion about Remark \ref{R-A.7} below. 

\appendix

\section{R\'enyi divergences in terms of Haagerup's $L^p$-spaces}

\subsection{Haagerup's $L^p$-spaces and Connes' Radon-Nikodym cocycles}
We first give, for the convenience of the reader, a brief survey on the Haagerup $L^p$-spaces
(see \cite{Te} for details). Let us take a faithful normal semifinite weight $\ffi_0$ on $M$
and denote by $N$ the crossed product $M\rtimes_{\sigma^{\ffi_0}}\bR$ of $M$ by the modular
automorphism group $\sigma_t^{\ffi_0}=\Delta_{\ffi_0}^{it}\cdot\Delta_{\ffi_0}^{-it}$,
$t\in\bR$. Let $\theta_s$, $s\in\bR$, be the dual action of $N$ so that
$\tau\circ\theta_s=e^{-s}\tau$, $s\in\bR$, where $\tau$ is the canonical trace on $N$. Let
$\widetilde N$ denote the space of \emph{$\tau$-measurable operators} \cite{Ne,Te} affiliated
with $N$. For $0<p\le\infty$ \emph{Haagerup's $L^p$-space} $L^p(M)$ \cite{Ha2} is defined by
$$
L^p(M)=\{x\in\widetilde N: \theta_s(x)=e^{-s/p}x,\ s\in\bR\}.
$$
In particular, $L^\infty(M)=M$. Let $L^p(M)_+=L^p(M)\cap\widetilde N_+$ where $\widetilde N_+$
is the positive part of $\widetilde N$. Then $M_*$ is canonically order-isomorphic to
$L^1(M)$ by a linear bijection $\psi\in M_*\mapsto h_\psi\in L^1(M)$, so that the positive
linear functional $\tr$ on $L^1(M)$ is defined by $\tr(h_\psi)=\psi(1)$, $\psi\in M_*$.

For $0<p<\infty$ the $L^p$-(quasi-)norm $\|x\|_p$ of $x\in L^p(M)$ is given by
$\|x\|_p=\tr(|x|^p)^{1/p}$. Also $\|\cdot\|_\infty$ denotes the operator norm $\|\cdot\|$ on
$M$. When $1\le p<\infty$, $L^p(M)$ is a Banach space with the norm $\|\cdot\|_p$ and whose
dual Banach space is $L^q(M)$ where $1/p+1/q=1$ by the duality
$$
(x,y)\in L^p(M)\times L^q(M)\ \longmapsto\ \tr(xy)\ (=\tr(yx)).
$$
In particular, $L^2(M)$ is a Hilbert space with the inner product $\<x,y\>=\tr(x^*y)$
($=\tr(yx^*)$). Then $(M,L^2(M),J=\,^*,L^2(M)_+)$ becomes a standard form of $M$ where $M$
is represented on $L^2(M)$ by the left multiplication. By the uniqueness (up to unitary
equivalence) of a standard form we can always choose $(M,L^2(M),\,^*\,,L^2(M)_+)$ as a
standard form of $M$. This standard form has the advantage of enjoying the Haagerup
$L^p$-space technique. Each $\omega\in M_*^+$ is represented as
$$
\omega(x)=\tr(xh_\omega)=\<h_\omega^{1/2},xh_\omega^{1/2}\>,\qquad x\in M,
$$
with the vector representative $h_\omega^{1/2}\in L^2(M)_+$. Note that $s_M(\omega)$ and
$s_{M'}(\omega)$ are the left and right multiplications, respectively, on $L^2(M)$ of the
support projection $s(\omega)\in M$.

Note that $L^p(M)$ is independent (up to isomorphism) of the choice of $\ffi_0$ and that when
$M$ is semifinite, $L^p(M)$ coincides with the $L^p$-space in the sense of \cite{Di,Se}.

Next let us briefly recall the definition of Connes' Radon-Nikodym cocycles
(see \cite[\S3]{St}, \cite[\S VIII.3]{Ta}). For each $\ffi,\omega\in M_*^+$ the balanced
functional $\theta=\theta(\ffi,\omega)$ on $\bM_2(M)=M\otimes\bM_2(\bC)$ is given by
$$
\theta\Biggl(\sum_{i,j=1}^2x_{ij}\otimes e_{ij}\Biggr)
=\ffi(x_{11})+\omega(x_{22}),\qquad x_{ij}\in M,
$$
where $e_{ij}$ ($i,j=1,2$) are the matrix units of the $2\times2$ matrix algebra $\bM_2(\bC)$.
Note that the support projection of $\theta$ is
$s(\theta)=s(\ffi)\otimes e_{11}+s(\omega)\otimes e_{22}$ and
$s(\omega)s(\ffi)\otimes e_{21}\in s(\theta)\bM_2(M)s(\theta)$. Then
\emph{Connes' Radon-Nikodym cocycle} $[D\omega:D\ffi]_t$ ($\in s(\omega)Ms(\ffi)$) is
defined by
$$
\sigma_t^\theta(s(\omega)s(\ffi)\otimes e_{21})
=[D\omega:D\ffi]_t\otimes e_{21},\qquad t\in\bR,
$$
where $\sigma^\theta_t$ is the modular automorphism group defined on
$s(\theta)\bM_2(M)s(\theta)$. When $M$ is represented on $L^2(M)$ as above, we have
(see \cite{K1})
\begin{equation}\label{F-A.1}
[D\omega:D\ffi]_t=h_\omega^{it}h_\ffi^{-it},\qquad t\in\bR.
\end{equation}

\subsection{Lemmas}
For later use we state the following two lemmas, while it seems that they are known to
specialists in the subject matter. The first lemma generalizes \cite[Theorem 3]{C1} and
\cite[Lemma 3.13]{Co}. These lemmas can be shown by \eqref{F-A.1} and \cite[\S9.24]{StZs}
together with a usual argument in analytic function theory (see also \cite{K4}).

\begin{lemma}\label{L-A.1}
For each $\ffi,\omega\in M_*^+$ and $\delta>0$ the following conditions are equivalent:
\begin{itemize}
\item[(i)] $h_\omega^\delta\le\mu h_\ffi^\delta$, i.e.,
$\mu h_\ffi^\delta-h_\omega^\delta\in L^\delta(M)_+$ for some $\mu>0$;
\item[(ii)] $s(\omega)\le s(\ffi)$ and $[D\omega:D\ffi]_t$ extends to a weakly continuous
($M$-valued) function $[D\omega:D\ffi]_z$ on the strip $-\delta/2\le\Im z\le0$ which is
analytic in the interior.
\end{itemize}

If the above conditions hold, then $\|[D\omega:D\ffi]_z\|\le\mu^{1/2}$ and $[D\omega:D\ffi]_z$
is strongly continuous on $-\delta/2\le\Im z\le0$, and
$$
h_\omega^{p/2}=[D\omega:D\ffi]_{-ip/2}h_\ffi^{p/2},\qquad0<p\le\delta.
$$
\end{lemma}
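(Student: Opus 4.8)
\emph{Sketch of the approach.} The plan is to argue entirely inside the Haagerup $L^p$-spaces, using the identity \eqref{F-A.1}, i.e.\ $[D\omega:D\ffi]_t=h_\omega^{it}h_\ffi^{-it}$, together with the functional calculus for the positive $\tau$-measurable operators $h_\omega,h_\ffi\in L^1(M)_+$. The candidate for the analytic continuation is
\[
F(z):=h_\omega^{iz}h_\ffi^{-iz},\qquad -\delta/2\le\Im z\le0,
\]
where $h_\ffi^{-iz}$ is the generalized inverse power formed inside $s(\ffi)\widetilde Ns(\ffi)$. Two observations drive the whole argument. First, $F(z)$ is $\theta_s$-invariant for every $s\in\bR$, since the homogeneity exponents of $h_\omega^{iz}$ and $h_\ffi^{-iz}$ cancel; hence $F(z)$ lies in $L^\infty(M)=M$ as soon as it is bounded. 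Second, $F(t)=[D\omega:D\ffi]_t$ on the real line, and the modular/cocycle identity $F(s+t)=F(s)\,\sigma^\ffi_s(F(t))$ holds for real $s,t$ and then, by analyticity, with $t$ allowed to be a complex parameter.

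\emph{Proof of (i)$\Rightarrow$(ii).} From $h_\omega^\delta\le\mu h_\ffi^\delta$ one first gets $s(\omega)\le s(\ffi)$ by comparing range projections. The key estimate is on the lower edge of the strip: since $t\mapsto t^{r}$ is operator monotone for $0\le r\le1$, the hypothesis yields $h_\omega^{2y}\le\mu^{2y/\delta}h_\ffi^{2y}$ for $0\le y\le\delta/2$, and conjugating this inequality by the densely defined operator $h_\ffi^{-y}$ gives
\[
\|F(-iy)\|^2=\|h_\ffi^{-y}h_\omega^{2y}h_\ffi^{-y}\|\le\mu^{2y/\delta}.
\]
Writing $z=t-iy$ and using $F(t-iy)=F(t)\,\sigma^\ffi_t(F(-iy))$ together with $\|F(t)\|=\|[D\omega:D\ffi]_t\|\le1$ and the fact that $\sigma^\ffi_t$ is a $*$-automorphism of $s(\ffi)Ms(\ffi)$, one obtains $\|F(z)\|\le\mu^{y/\delta}\le\mu^{1/2}$ (taking $\mu\ge1$) throughout the strip, with $F$ $M$-valued there. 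Analyticity in the interior and weak continuity on the closed strip follow from the corresponding facts for $z\mapsto h^{iz}$ on the relevant $L^p$-spaces, combined with the principle that a locally bounded, weakly holomorphic operator function is norm-holomorphic; strong continuity on the closed strip is then deduced from analyticity plus boundedness in the interior and from the known strong continuity of $t\mapsto[D\omega:D\ffi]_t$ and of $\sigma^\ffi_t$ at the boundary. Finally, evaluating at $z=-ip/2$ and multiplying on the right by $h_\ffi^{p/2}$ gives $[D\omega:D\ffi]_{-ip/2}h_\ffi^{p/2}=h_\omega^{p/2}h_\ffi^{-p/2}h_\ffi^{p/2}=h_\omega^{p/2}s(\ffi)=h_\omega^{p/2}$ for $0<p\le\delta$, which is the displayed formula.

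\emph{Proof of (ii)$\Rightarrow$(i).} Set $x:=[D\omega:D\ffi]_{-i\delta/2}\in M$. The $M$-valued function $z\mapsto[D\omega:D\ffi]_z\,h_\ffi^{iz}$ coincides with $z\mapsto h_\omega^{iz}$ on the real axis (using $s(\omega)\le s(\ffi)$), and both admit analytic continuations to the strip; by uniqueness of analytic continuation they agree, and at $z=-i\delta/2$ this reads $x\,h_\ffi^{\delta/2}=h_\omega^{\delta/2}$. Since $h_\omega^{\delta/2}\ge0$,
\[
h_\omega^\delta=(x\,h_\ffi^{\delta/2})^*(x\,h_\ffi^{\delta/2})=h_\ffi^{\delta/2}x^*x\,h_\ffi^{\delta/2}\le\|x\|^2\,h_\ffi^\delta,
\]
so (i) holds with $\mu=\|x\|^2$; the bound $\|[D\omega:D\ffi]_z\|\le\mu^{1/2}$ is recovered exactly as above from the factorization $F(z)=F(t)\,\sigma^\ffi_t(F(-iy))$ and the maximum principle. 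I expect the main obstacle to be the careful handling of the unbounded $\tau$-measurable operators: making precise sense of $F(z)=h_\omega^{iz}h_\ffi^{-iz}$ and of its analyticity and strong continuity as an $M$-valued function on the closed strip; justifying the conjugation of operator inequalities by the unbounded powers $h_\ffi^{-y}$ and $h_\ffi^{\pm\delta/2}$; and executing the uniqueness-of-analytic-continuation step identifying $[D\omega:D\ffi]_z\,h_\ffi^{iz}$ with $h_\omega^{iz}$, which forces one to test against vectors (e.g.\ of the form $h_\ffi^{N}a$ with $N$ large) chosen so that all resulting scalar functions are genuinely holomorphic on the whole strip. The requisite machinery, namely the theory of analytic elements for $\sigma^\ffi$ and for Connes cocycles, is supplied by \cite[\S9.24]{StZs} and \cite{K4}, on which the argument rests.
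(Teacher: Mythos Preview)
Your approach is correct and matches the paper's own (very terse) indication: the paper does not give a detailed proof of this lemma, stating only that it ``can be shown by \eqref{F-A.1} and \cite[\S9.24]{StZs} together with a usual argument in analytic function theory (see also \cite{K4}),'' which is precisely the route you take via $F(z)=h_\omega^{iz}h_\ffi^{-iz}$, operator monotonicity of $t\mapsto t^r$, and the cocycle factorization $F(t-iy)=F(t)\,\sigma^\ffi_t(F(-iy))$. One small caveat: the uniform bound $\|F(z)\|\le\mu^{1/2}$ as stated requires $\mu\ge1$ (since $F(0)=s(\omega)$ has norm $1$); in general the interpolation gives $\|F(t-iy)\|\le\mu^{y/\delta}\le\max(1,\mu^{1/2})$, which you implicitly acknowledge.
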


\begin{lemma}\label{L-A.2}
For each $\ffi,\omega\in M_*^+$ and $\delta>0$ the following conditions are equivalent:
\begin{itemize}
\item[(i)] $\mu^{-1}h_\ffi^\delta\le h_\omega^\delta\le\mu h_\ffi^\delta$ for some $\mu>0$;
\item[(ii)] $s(\omega)=s(\ffi)$ and $[D\omega:D\ffi]_t$ extends to a weakly continuous
($M$-valued) function $[D\omega:D\ffi]_z$ on the strip $-\delta/2\le\Im z\le\delta/2$ which is
analytic in the interior.
\end{itemize}

If the above conditions hold, then $\|[D\omega:D\ffi]_z\|\le\mu^{1/2}$,
$$
[D\omega:D\ffi]_{\bar z}^*=[D\ffi:D\omega]_z=[D\omega:D\ffi]_z^{-1},
$$
and $[D\omega:D\ffi]_z$ is strongly* continuous on
$-\delta/2\le\Im z\le\delta/2$.
\end{lemma}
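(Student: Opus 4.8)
The plan is to reduce the statement to Lemma~\ref{L-A.1} by exploiting the symmetry between $\ffi$ and $\omega$, combined with reflection across the real axis. From \eqref{F-A.1} one records, for real $t$, the elementary identities
$$
[D\omega:D\ffi]_t^*=(h_\omega^{it}h_\ffi^{-it})^*=h_\ffi^{it}h_\omega^{-it}=[D\ffi:D\omega]_t,
\qquad [D\omega:D\ffi]_t\,[D\ffi:D\omega]_t=s(\omega),
$$
which, together with Lemma~\ref{L-A.1}, are the only inputs. One further needs the trivial reflection principle: if $F$ is a weakly continuous operator-valued function on a horizontal strip that is analytic (in the weak operator topology) in the interior, then $z\mapsto F(\bar z)^*$ has the same properties on the reflected strip, since $\<\xi,F(\bar z)^*\eta\>=\overline{\<\eta,F(\bar z)\xi\>}$ is analytic in $z$.

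To prove (i)$\Rightarrow$(ii), note that condition (i) means $h_\omega^\delta\le\mu h_\ffi^\delta$ and $h_\ffi^\delta\le\mu h_\omega^\delta$ with a common $\mu>0$. Applying Lemma~\ref{L-A.1} to the ordered pair $(\omega,\ffi)$ gives $s(\omega)\le s(\ffi)$, the analytic extension of $[D\omega:D\ffi]_z$ to $-\delta/2\le\Im z\le0$ with $\|[D\omega:D\ffi]_z\|\le\mu^{1/2}$ and strong continuity there; applying it to $(\ffi,\omega)$ gives the symmetric conclusions, in particular $s(\ffi)\le s(\omega)$, so $s(\omega)=s(\ffi)$, together with $\|[D\ffi:D\omega]_w\|\le\mu^{1/2}$ on $-\delta/2\le\Im w\le0$. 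Now set $G(z):=[D\ffi:D\omega]_{\bar z}^*$ for $0\le\Im z\le\delta/2$; by the reflection principle $G$ is weakly continuous on the closed upper half-strip, analytic inside, has norm $\le\mu^{1/2}$, and satisfies $G(t)=[D\ffi:D\omega]_t^*=[D\omega:D\ffi]_t$ on $\bR$. Thus $[D\omega:D\ffi]_z$ on the lower half-strip and $G$ on the upper half-strip are weakly continuous functions agreeing on the real line; a vector-valued Morera/Painlev\'e gluing argument (scalarise against $\xi,\eta\in L^2(M)$, apply the classical one-variable statement, and use the uniform norm bound to return to a bounded, weakly hence norm analytic, $M$-valued function) shows that the combined function is analytic on the whole strip $-\delta/2\le\Im z\le\delta/2$ with $\|[D\omega:D\ffi]_z\|\le\mu^{1/2}$ throughout. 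This is the required extension.

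For (ii)$\Rightarrow$(i): restricting the given extension to $-\delta/2\le\Im z\le0$ verifies Lemma~\ref{L-A.1}(ii) for $(\omega,\ffi)$, hence $h_\omega^\delta\le\mu_1h_\ffi^\delta$; setting $[D\ffi:D\omega]_z:=[D\omega:D\ffi]_{\bar z}^*$ on $-\delta/2\le\Im z\le0$ (analytic there by the reflection principle, agreeing with $[D\ffi:D\omega]_t$ on $\bR$) verifies Lemma~\ref{L-A.1}(ii) for $(\ffi,\omega)$, hence $h_\ffi^\delta\le\mu_2h_\omega^\delta$; take $\mu:=\max\{\mu_1,\mu_2\}$. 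For the supplementary identities, once $[D\omega:D\ffi]_z$ and $[D\ffi:D\omega]_z$ are analytic on the full strip, the functions $z\mapsto[D\omega:D\ffi]_{\bar z}^*$, $z\mapsto[D\ffi:D\omega]_z$ and $z\mapsto[D\omega:D\ffi]_z^{-1}$ (the last because $z\mapsto[D\omega:D\ffi]_z[D\ffi:D\omega]_z$ is analytic and equals $s(\omega)$ on $\bR$) all agree on $\bR$, so by the identity theorem they agree on the strip. Finally, Lemma~\ref{L-A.1} gives strong continuity of $[D\omega:D\ffi]_z$ and of $[D\ffi:D\omega]_z$ on the lower half-strip, and combined with $[D\omega:D\ffi]_z^*=[D\ffi:D\omega]_{\bar z}$ this upgrades to strong$^*$ continuity there; the reflection transports it to the upper half-strip, and on $\bR$ the cocycle takes unitary values in $s(\omega)Ms(\omega)$, where strong convergence is automatically strong$^*$ convergence, so the two pieces match along the gluing line.

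The step I expect to be the main obstacle is the gluing: one must check that weak operator continuity on the closed strip, together with analyticity off $\bR$ and the uniform norm bound, genuinely forces analyticity across $\bR$. This is handled by passing to matrix elements $\<\xi,[D\omega:D\ffi]_z\eta\>$, invoking Morera's theorem in the scalar case, and then using the $\mu^{1/2}$-bound to conclude that the resulting $M$-valued function is bounded and weakly (hence norm) analytic. Everything else is bookkeeping with Lemma~\ref{L-A.1} and the two line identities above.
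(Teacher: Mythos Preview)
Your reduction to Lemma~\ref{L-A.1} via the $\omega\leftrightarrow\ffi$ symmetry and reflection $z\mapsto\bar z$ is exactly the route the paper points to (it merely cites \eqref{F-A.1}, \cite[\S9.24]{StZs} and ``a usual argument in analytic function theory'' without spelling anything out). The equivalence (i)$\Leftrightarrow$(ii), the bound $\|[D\omega:D\ffi]_z\|\le\mu^{1/2}$, the Morera/Painlev\'e gluing, and the identities $[D\omega:D\ffi]_{\bar z}^*=[D\ffi:D\omega]_z=[D\omega:D\ffi]_z^{-1}$ are all handled correctly.

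There is, however, a genuine gap in your last paragraph on strong* continuity. You invoke $[D\omega:D\ffi]_z^*=[D\ffi:D\omega]_{\bar z}$ together with the strong continuity of $[D\ffi:D\omega]$ on the \emph{lower} half-strip to conclude strong* continuity of $[D\omega:D\ffi]$ on the lower half-strip. But for $z$ in the lower half-strip, $\bar z$ lies in the \emph{upper} half-strip, so Lemma~\ref{L-A.1} tells you nothing about $[D\ffi:D\omega]_{\bar z}$. What your argument actually yields is: $[D\omega:D\ffi]_z$ is strongly continuous on $\{-\delta/2\le\Im z\le0\}$ and $[D\omega:D\ffi]_z^*$ is strongly continuous on $\{0\le\Im z\le\delta/2\}$ --- each half of the strong* continuity on each half of the strip, never both at once. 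On the open interior this is harmless (weak analyticity plus the uniform bound gives norm-analyticity, hence norm and a fortiori strong* continuity), but at the boundary lines $\Im z=\pm\delta/2$ your argument does not close: you have not shown that $z\mapsto[D\omega:D\ffi]_z^*\xi$ is continuous as $z$ approaches $\Im z=-\delta/2$, nor that $z\mapsto[D\omega:D\ffi]_z\xi$ is continuous as $z$ approaches $\Im z=+\delta/2$. The remark about unitaries on $\bR$ only reconciles the two pieces along the real axis, not along the outer boundary. To close this one typically uses more than the abstract reflection; e.g., the cocycle identity $[D\omega:D\ffi]_{t+z}=[D\omega:D\ffi]_t\,\sigma_t^\ffi([D\omega:D\ffi]_z)$ reduces boundary continuity to continuity of $s\mapsto[D\omega:D\ffi]_{-is}$ and its adjoint on $[0,\delta/2]$, which can then be read off from the concrete relation $h_\omega^{s}=[D\omega:D\ffi]_{-is}\,h_\ffi^{s}$ in $L^{1/s}(M)$ (this is the content behind the reference to \cite[\S9.24]{StZs}).
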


For every $\rho,\sigma\in M_*^+$ and $x\in M$ recall the following well-known identity
\begin{align}\label{F-A.2}
\Delta_{\rho,\sigma}^p(xh_\sigma^{1/2})=h_\rho^pxh_\sigma^{1/2-p},
\qquad0\le p\le1/2,
\end{align}
with the convention that $h_\rho^0=s(\rho)$, $h_\sigma^0=s(\sigma)$ and
$\Delta_{\rho,\sigma}^0=s(\rho)Js(\sigma)J$. Indeed, this is seen from the uniqueness of
analytic continuation of $\Delta_{\rho,\sigma}^{it}(xh_\sigma^{1/2})
=h_\rho^{it}xh_\sigma^{1/2-it}$ for $t\in\bR$ (see \cite{K1}).

Another lemma we need is the following:

\begin{lemma}\label{L-A.3}
For every $\rho,\sigma\in M_*^+$ and $0\le p\le1/2$, the domain of $\Delta_{\rho,\sigma}^p$ is
\begin{align}\label{F-A.3}
\cD(\Delta_{\rho,\sigma}^p)=\bigl\{\xi\in L^2(M):
h_\rho^p\xi s(\sigma)=\eta h_\sigma^p\ \mbox{for some}\ \eta\in L^2(M)\bigr\}.
\end{align}
Moreover, if $\xi,\eta\in L^2(M)$ are given as in \eqref{F-A.3}, then
\begin{align}\label{F-A.4}
\Delta_{\rho,\sigma}^p\xi=\Delta_{\rho,\sigma}^p(\xi s(\sigma))=\eta s(\sigma).
\end{align}
\end{lemma}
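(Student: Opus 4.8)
The plan is to derive everything from identity \eqref{F-A.2} together with a core argument, so that the proof reduces to trace identities in Haagerup's $L^p$-spaces. First I would record two ingredients. (i) By \eqref{F-A.2}, for every $y\in M$ one has $yh_\sigma^{1/2}\in\cD(\Delta_{\rho,\sigma}^p)$ and $\Delta_{\rho,\sigma}^p(yh_\sigma^{1/2})=h_\rho^pyh_\sigma^{1/2-p}$. (ii) The subspace $\cD(S_{\rho,\sigma})=Mh_\sigma^{1/2}+(1-s_{M'}(\sigma))L^2(M)$ is a core for $\Delta_{\rho,\sigma}^p$ for every $p\in[0,1/2]$: being the domain of the closable operator $S_{\rho,\sigma}$, it is a core for $\overline{S_{\rho,\sigma}}$, hence (via \eqref{F-2.1}, which gives $\Delta_{\rho,\sigma}=|\overline{S_{\rho,\sigma}}|^2$) a core for $\Delta_{\rho,\sigma}^{1/2}$; since $t^{2p}\le1+t$ on $[0,+\infty)$ for $0\le p\le1/2$, the $\Delta_{\rho,\sigma}^p$-graph norm is dominated by the $\Delta_{\rho,\sigma}^{1/2}$-graph norm while $\cD(\Delta_{\rho,\sigma}^{1/2})$ is itself a core for $\Delta_{\rho,\sigma}^p$, so (ii) propagates from $p=1/2$ to all $p\in[0,1/2]$. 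I would also use that $\Delta_{\rho,\sigma}^p$ ($0\le p\le1/2$) is self-adjoint, maps into $PL^2(M):=s(\rho)L^2(M)s(\sigma)$, and annihilates $(1-s_{M'}(\sigma))L^2(M)$.

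For the inclusion ``$\subseteq$'' in \eqref{F-A.3}, take $\xi\in\cD(\Delta_{\rho,\sigma}^p)$ and set $\eta:=\Delta_{\rho,\sigma}^p\xi\in PL^2(M)$. The key computation is that, for every $y\in M$, self-adjointness of $\Delta_{\rho,\sigma}^p$, ingredient (i), and the trace property of $\tr$ on $L^1(M)$ give
$$
\tr\bigl(h_\sigma^{1/2}\eta^*y\bigr)=\<\eta,yh_\sigma^{1/2}\>
=\<\xi,\Delta_{\rho,\sigma}^p(yh_\sigma^{1/2})\>
=\tr\bigl(\xi^*h_\rho^pyh_\sigma^{1/2-p}\bigr)
=\tr\bigl(h_\sigma^{1/2-p}\xi^*h_\rho^p\,y\bigr).
$$
Since $y\in M$ is arbitrary and both $h_\sigma^{1/2}\eta^*$ and $h_\sigma^{1/2-p}\xi^*h_\rho^p$ lie in $L^1(M)$, the duality $L^1(M)^*=M$ yields $\eta h_\sigma^{1/2}=h_\rho^p\xi h_\sigma^{1/2-p}$; writing $h_\sigma^{1/2}=h_\sigma^ph_\sigma^{1/2-p}$ and cancelling the right factor $h_\sigma^{1/2-p}$ (whose right support projection is $s(\sigma)$) I get $h_\rho^p\xi s(\sigma)=\eta h_\sigma^p$, so $\xi$ satisfies the condition of \eqref{F-A.3} with this $\eta=\Delta_{\rho,\sigma}^p\xi$.

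For the reverse inclusion, suppose $\xi\in L^2(M)$ and $h_\rho^p\xi s(\sigma)=\eta h_\sigma^p$ for some $\eta\in L^2(M)$. Replacing $\eta$ by $s(\rho)\eta s(\sigma)$ (which alters neither the relation nor $\eta s(\sigma)$) one may assume $\eta\in PL^2(M)$, and then $h_\rho^p\xi h_\sigma^{1/2-p}=\eta h_\sigma^ph_\sigma^{1/2-p}=\eta h_\sigma^{1/2}$ in $L^1(M)$. Reversing the computation above gives $\<\eta,yh_\sigma^{1/2}\>=\<\xi,\Delta_{\rho,\sigma}^p(yh_\sigma^{1/2})\>$ for all $y\in M$, while for $\zeta\in(1-s_{M'}(\sigma))L^2(M)$ both sides vanish ($\eta\in PL^2(M)$ and $\zeta\in\ker\Delta_{\rho,\sigma}^p$). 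Hence $\<\eta,\zeta\>=\<\xi,\Delta_{\rho,\sigma}^p\zeta\>$ for every $\zeta$ in the core $\cD(S_{\rho,\sigma})$, so by self-adjointness $\xi\in\cD(\Delta_{\rho,\sigma}^p)$ with $\Delta_{\rho,\sigma}^p\xi=\eta$. Equation \eqref{F-A.4} then follows: $\Delta_{\rho,\sigma}^p\xi=\eta=\eta s(\sigma)$ since $\eta\in PL^2(M)$, and $\xi-\xi s(\sigma)=(1-s_{M'}(\sigma))\xi\in\ker\Delta_{\rho,\sigma}^p$ forces $\Delta_{\rho,\sigma}^p(\xi s(\sigma))=\Delta_{\rho,\sigma}^p\xi$.

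I expect the main obstacle to be ingredient (ii) -- verifying that the natural domain $\cD(S_{\rho,\sigma})$ of $S_{\rho,\sigma}$ remains a core after passing from $\Delta_{\rho,\sigma}^{1/2}$ to the smaller fractional powers $\Delta_{\rho,\sigma}^p$ -- together with keeping the support-projection bookkeeping (the roles of $P=s_M(\rho)s_{M'}(\sigma)$ and $s(\sigma)$, and the distinction between $\xi$ and $\xi s(\sigma)$) consistent throughout; the trace identities themselves are routine once the basic Haagerup $L^p$-space calculus (H\"older's inequality, the trace property, and $L^1(M)^*=M$) is in hand.
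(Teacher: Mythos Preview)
Your proof is correct and proceeds along the same general lines as the paper's: both use identity \eqref{F-A.2} on the dense set $Mh_\sigma^{1/2}$, the fact that $Mh_\sigma^{1/2}+L^2(M)(1-s(\sigma))$ is a core of $\Delta_{\rho,\sigma}^p$ for $0\le p\le1/2$ (the paper cites \cite[Lemma 4]{Ar1} for this, while you supply the standard graph-norm comparison argument), and the self-adjointness of $\Delta_{\rho,\sigma}^p$ to pass from the core to the full domain. For the inclusion ``$\supseteq$'' your argument is essentially identical to the paper's first paragraph.

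The genuine difference is in the inclusion ``$\subseteq$''. Given $\xi\in\cD(\Delta_{\rho,\sigma}^p)$ with $\eta:=\Delta_{\rho,\sigma}^p\xi$, the paper picks a sequence $x_n\in M$ with $x_nh_\sigma^{1/2}\to\xi s(\sigma)$ and $\eta_n:=\Delta_{\rho,\sigma}^p(x_nh_\sigma^{1/2})\to\eta$ in $L^2(M)$, rewrites \eqref{F-A.2} as $\eta_nh_\sigma^p=h_\rho^px_nh_\sigma^{1/2}$, and passes to the limit in $L^{2p/(p+2)}(M)$ via H\"older's inequality to obtain $\eta h_\sigma^p=h_\rho^p\xi s(\sigma)$. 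You instead test against $yh_\sigma^{1/2}$, $y\in M$, use self-adjointness and the trace property to obtain $\tr(h_\sigma^{1/2}\eta^*y)=\tr(h_\sigma^{1/2-p}\xi^*h_\rho^p\,y)$ for all $y\in M$, and invoke the separating duality $L^1(M)^*=M$ to conclude $\eta h_\sigma^{1/2}=h_\rho^p\xi h_\sigma^{1/2-p}$; the cancellation of $h_\sigma^{1/2-p}$ on the right (valid because its support projection is $s(\sigma)$) then gives the relation. Your route is slightly more direct, avoiding the approximation step and the auxiliary $L^{2p/(p+2)}$-norm; the paper's route is perhaps more elementary in that it uses only H\"older and norm convergence rather than the predual identification. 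One small point you gloss over: that replacing $\eta$ by $s(\rho)\eta s(\sigma)$ does not change $\eta s(\sigma)$ is not automatic---it follows because $(1-s(\rho))\eta h_\sigma^p=(1-s(\rho))h_\rho^p\xi s(\sigma)=0$ forces $(1-s(\rho))\eta s(\sigma)=0$ after cancelling $h_\sigma^p$.
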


\begin{proof}
Assume first that $\xi,\eta\in L^2(M)$ are given with $h_\rho^p\xi s(\sigma)=\eta h_\sigma^p$
and so $s(\sigma)\xi^*h_\rho^p=h_\sigma^p\eta^*$. For every $x\in M$ we have by \eqref{F-A.2}
\begin{align*}
\<\xi,\Delta_{\rho,\sigma}^p(xh_\sigma^{1/2})\>
&=\<\xi,h_\rho^pxh_\sigma^{1/2-p}\>=\tr(s(\sigma)\xi^*h_\rho^pxh_\sigma^{1/2-p}) \\
&=\tr(h_\sigma^p\eta^*xh_\sigma^{1/2-p})=\tr(s(\sigma)\eta^*xh_\sigma^{1/2})
=\<\eta s(\sigma),xh_\sigma^{1/2}\>.
\end{align*}
This equality immediately extends to
$$
\<\xi,\Delta_{\rho,\sigma}^p\zeta\>=\<\eta s(\sigma),\zeta\>,\qquad
\zeta\in Mh_\sigma^{1/2}+L^2(M)(1-s(\sigma)).
$$
Since $Mh_\sigma^{1/2}+L^2(M)(1-s(\sigma))$ is a core of $\Delta_{\rho,\sigma}^{1/2}$, it is
also a core of $\Delta_{\rho,\sigma}^p$ (since $0<p\le1/2$) by \cite[Lemma 4]{Ar1}. Hence
we find that $\xi\in\cD(\Delta_{\rho,\sigma}^p)$ and \eqref{F-A.4} holds.

Conversely, assume that $\xi\in\cD(\Delta_{\rho,\sigma}^p)$ and
$\Delta_{\rho,\sigma}^p\xi=\eta\in L^2(M)$, so $\Delta_{\rho,\sigma}^p(\xi s(\sigma))=\eta$
and $\eta s(\sigma)=\eta$. Since $Mh_\sigma^{1/2}+L^2(M)(1-s(\sigma))$ is a core of
$\Delta_{\rho,\sigma}^p$, there exists a sequence $\{x_n\}$ in $M$ such that
$$
\|x_nh_\sigma^{1/2}-\xi s(\sigma)\|\,\longrightarrow\,0,\qquad
\|\Delta_{\rho,\sigma}^p(x_nh_\sigma^{1/2})-\eta\|\,\longrightarrow\,0.
$$
Let $\eta_n:=\Delta_{\rho,\sigma}^p(x_nh_\sigma^{1/2})$; then
$\eta_n=h_\rho^px_nh_\sigma^{1/2-p}$ by \eqref{F-A.2}. We hence have
\begin{align}\label{F-A.5}
\eta_nh_\sigma^p=h_\rho^px_nh_\sigma^{1/2}.
\end{align}
By H\"older's inequality \cite{Te},
\begin{align}
\|\eta_nh_\sigma^p-\eta h_\sigma^p\|_{2p\over p+2}
&\le\|\eta_n-\eta\|_2\|h_\sigma^p\|_p\,\longrightarrow\,0, \label{F-A.6}\\
\|h_\rho^px_nh_\sigma^{1/2}-h_\rho^p\xi s(\sigma)\|_{2p\over p+2}
&\le\|h_\rho^p\|_p\|x_nh_\sigma^{1/2}-\xi s(\sigma)\|_2\,\longrightarrow\,0. \label{F-A.7}
\end{align}
Combining \eqref{F-A.5}--\eqref{F-A.7} yields $h_\rho^p\xi s(\sigma)=\eta h_\sigma^p$. Thus
\eqref{F-A.3} follows.
\end{proof}

\subsection{Description of R\'enyi divergences}
The following provides a useful description of the R\'enyi divergence $D_\alpha(\rho\|\sigma)$
in terms of $h_\rho,h_\sigma\in L^1(M)_+$.

\begin{prop}\label{P-A.4}
Let $\rho,\sigma\in M_*^+$.
\begin{itemize}
\item[(1)] When $0\le\alpha<1$,
\begin{align}\label{F-A.8}
Q_\alpha(\rho\|\sigma)=\tr(h_\rho^\alpha h_\sigma^{1-\alpha}).
\end{align}
\item[(2)] When $s(\rho)\le s(\sigma)$ and $1<\alpha\le2$, the following conditions are
equivalent:
\begin{itemize}
\item[(i)] $h_\sigma^{1/2}\in\cD(\Delta_{\rho,\sigma}^{\alpha/2})$;
\item[(ii)] $h_\rho^{1/2}\in\cD(\Delta_{\rho,\sigma}^{(\alpha-1)/2})$;
\item[(iii)] there exists an $\eta\in L^2(M)s(\sigma)$ such that
$h_\rho^{\alpha/2}=\eta h_\sigma^{(\alpha-1)/2}$.
\end{itemize}

If the above conditions hold, then $\eta$ in (iii) is unique and
$Q_\alpha(\rho\|\sigma)=\|\eta\|_2^2$.
\end{itemize}
\end{prop}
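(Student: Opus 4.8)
The plan is to carry out everything in the Haagerup standard form $(M,L^2(M),\,^*\,,L^2(M)_+)$, in which $\xi_\rho=h_\rho^{1/2}$ and $\xi_\sigma=h_\sigma^{1/2}$, and to reduce all the claims to the identity \eqref{F-A.2}, to Lemma \ref{L-A.3}, and to spectral calculus for the positive self-adjoint operator $\Delta_{\rho,\sigma}$ together with H\"older's inequality for Haagerup $L^p$-spaces \cite{Te}.

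For (1), since $\alpha/2\in[0,1/2)$, I would apply \eqref{F-A.2} with exponent $\alpha/2$ and $x=s(\sigma)$ (using $s(\sigma)h_\sigma^{1/2}=h_\sigma^{1/2}$) to obtain
$$
\Delta_{\rho,\sigma}^{\alpha/2}h_\sigma^{1/2}=h_\rho^{\alpha/2}h_\sigma^{(1-\alpha)/2},
$$
which lies in $L^2(M)$ by H\"older's inequality. Then \eqref{F-5.1} gives
$$
Q_\alpha(\rho\|\sigma)=\bigl\|h_\rho^{\alpha/2}h_\sigma^{(1-\alpha)/2}\bigr\|_2^2
=\tr\bigl(h_\sigma^{(1-\alpha)/2}h_\rho^{\alpha}h_\sigma^{(1-\alpha)/2}\bigr)=\tr(h_\rho^\alpha h_\sigma^{1-\alpha}),
$$
the last equality using that the relevant products lie in $L^1(M)$ and the tracial identity $\tr(yz)=\tr(zy)$. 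The degenerate case $\alpha=0$ is checked directly from $Q_0(\rho\|\sigma)=\sigma(s_M(\rho))=\tr(s(\rho)h_\sigma)$.

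For (2), assume $s(\rho)\le s(\sigma)$ and $1<\alpha\le2$, and set $p:=(\alpha-1)/2\in(0,1/2]$. I would first establish (i)$\Leftrightarrow$(ii) by spectral calculus: \eqref{F-A.2} with exponent $1/2$ and $x=s(\sigma)$, together with $s(\rho)\le s(\sigma)$, gives $\Delta_{\rho,\sigma}^{1/2}h_\sigma^{1/2}=h_\rho^{1/2}$, hence $d\|E_{\rho,\sigma}(t)h_\rho^{1/2}\|^2=t\,d\|E_{\rho,\sigma}(t)h_\sigma^{1/2}\|^2$ on $(0,+\infty)$; consequently both $h_\sigma^{1/2}\in\cD(\Delta_{\rho,\sigma}^{\alpha/2})$ and $h_\rho^{1/2}\in\cD(\Delta_{\rho,\sigma}^{p})$ are equivalent to $\int_{(0,+\infty)}t^\alpha\,d\|E_{\rho,\sigma}(t)h_\sigma^{1/2}\|^2<+\infty$. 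Next, for (ii)$\Leftrightarrow$(iii) I would apply Lemma \ref{L-A.3} with this $p$ and $\xi=h_\rho^{1/2}\in L^2(M)$: since $h_\rho^{p}\,h_\rho^{1/2}\,s(\sigma)=h_\rho^{\alpha/2}$ (using $s(\rho)\le s(\sigma)$), the condition \eqref{F-A.3} for $h_\rho^{1/2}$ reads exactly as the existence of $\eta\in L^2(M)$ with $h_\rho^{\alpha/2}=\eta h_\sigma^{p}$; and since $\eta h_\sigma^{p}=(\eta s(\sigma))h_\sigma^{p}$ we may take $\eta\in L^2(M)s(\sigma)$, which is (iii). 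Uniqueness of such $\eta$ follows because $\eta h_\sigma^{p}=0$ with $\eta=\eta s(\sigma)$ forces $\eta=0$ upon multiplying on the right by the generalized inverse $h_\sigma^{-p}\in\widetilde N$.

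Finally, assuming (i)--(iii), I would combine the composition rule for fractional powers,
$$
\Delta_{\rho,\sigma}^{\alpha/2}h_\sigma^{1/2}=\Delta_{\rho,\sigma}^{p}\bigl(\Delta_{\rho,\sigma}^{1/2}h_\sigma^{1/2}\bigr)=\Delta_{\rho,\sigma}^{p}h_\rho^{1/2},
$$
with the second conclusion \eqref{F-A.4} of Lemma \ref{L-A.3}, namely $\Delta_{\rho,\sigma}^{p}h_\rho^{1/2}=\eta s(\sigma)=\eta$, to conclude $Q_\alpha(\rho\|\sigma)=\|\Delta_{\rho,\sigma}^{\alpha/2}h_\sigma^{1/2}\|_2^2=\|\eta\|_2^2$. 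The routine parts are the spectral bookkeeping and the domain manipulations; the only point that requires real care is making sure all the products and cancellations of unbounded elements are legitimate, which is guaranteed by working inside the $*$-algebra $\widetilde N$ of $\tau$-measurable operators and is governed throughout by \eqref{F-A.2}, Lemma \ref{L-A.3} and H\"older's inequality. I do not expect a genuine obstacle beyond this.
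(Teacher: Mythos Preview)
Your proposal is correct and follows essentially the same route as the paper's proof: part~(1) via \eqref{F-A.2} and the trace identity, and part~(2) via $\Delta_{\rho,\sigma}^{1/2}h_\sigma^{1/2}=h_\rho^{1/2}$ to get (i)$\Leftrightarrow$(ii), followed by Lemma~\ref{L-A.3} with $p=(\alpha-1)/2$ and $\xi=h_\rho^{1/2}$ to get (ii)$\Leftrightarrow$(iii) and the value $Q_\alpha(\rho\|\sigma)=\|\eta\|_2^2$. The only cosmetic difference is that for (i)$\Leftrightarrow$(ii) the paper cites \cite[Theorem~9.20]{StZs}, whereas you unpack this directly via the spectral-measure relation $d\|E_{\rho,\sigma}(t)h_\rho^{1/2}\|^2=t\,d\|E_{\rho,\sigma}(t)h_\sigma^{1/2}\|^2$; both are equivalent.
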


\begin{proof}
(1)\enspace For $0<\alpha<1$ we have by \eqref{F-A.2}
$$
Q_\alpha(\rho\|\sigma)=\|\Delta_{\rho,\sigma}^{\alpha/2}h_\sigma^{1/2}\|^2
=\|h_\rho^{\alpha/2}h_\sigma^{(1-\alpha)/2}\|^2
=\tr(h_\sigma^{(1-\alpha)/2}h_\rho^\alpha h_\sigma^{(1-\alpha)/2})
=\tr(h_\rho^\alpha h_\sigma^{1-\alpha}).
$$

(2)\enspace
Assume that $s(\rho)\le s(\sigma)$ and $1<\alpha\le2$. Since
$h_\sigma^{1/2}\in\cD(\Delta_{\rho,\sigma}^{1/2})$ and
$\Delta_{\rho,\sigma}^{1/2}h_\sigma^{1/2}=h_\rho^{1/2}s(\sigma)=h_\rho^{1/2}$, it follows
that (i)\,$\iff$\,(ii) and in this case $\Delta_{\rho,\sigma}^{\alpha/2}h_\sigma^{1/2}
=\Delta_{\rho,\sigma}^{(\alpha-1)/2}h_\rho^{1/2}$ (see \cite[Theorem 9.20]{StZs} for details).
Hence Lemma \ref{L-A.3} with $p=(\alpha-1)/2$ implies that
(ii)\,$\iff$\,(iii) and in this case $Q_\alpha(\rho\|\sigma)=\|\eta\|_2^2$. The uniqueness of
$\eta$ in (iii) is obvious.
\end{proof}

\begin{remark}\label{R-A.5}\rm
In the above (2), if $h_\rho^{\alpha/2}=\eta h_\sigma^{(\alpha-1)/2}$ with
$\eta\in L^2(M)s(\sigma)$, then we may write $\eta=h_\rho^{\alpha/2}h_\sigma^{(1-\alpha)/2}$
in a formal sense, so that
$$
Q_\alpha(\rho\|\sigma)=\|h_\rho^{\alpha/2}h_\sigma^{(1-\alpha)/2}\|^2
=\tr(h_\rho^{\alpha/2}h_\sigma^{1-\alpha}h_\rho^{\alpha/2})
=\tr(h_\rho^\alpha h_\sigma^{1-\alpha}),
$$
which is the same expression as in (1). This is in the same form as the quantity $Q_\alpha$
in the matrix case if we consider $\tr$ as the usual trace and $h_\rho,h_\sigma$ as the density
matrices.
\end{remark}

The next proposition gives a strengthening of Proposition \ref{P-5.3}\,(8).

\begin{prop}\label{P-A.6}
Let $\rho_i,\sigma_i\in M_*^+$ for $i=1,2$, and $\mu>0$.
\begin{itemize}
\item[(1)] Assume that $0\le\alpha<1$. Then
$h_{\sigma_1}^{1-\alpha}\le\mu h_{\sigma_2}^{1-\alpha}$ if and only if
$Q_\alpha(\rho\|\sigma_1)\le\mu Q_\alpha(\rho\|\sigma_2)$ for all $\rho\in M_*^+$.
\item[(2)] Assume that $1<\alpha\le2$. If $h_{\rho_1}^\alpha\le\mu h_{\rho_2}^\alpha$, then
$Q_\alpha(\rho_1\|\sigma)\le\mu Q_\alpha(\rho_2\|\sigma)$ for all $\sigma\in M_*^+$. If
$h_{\sigma_1}^{\alpha-1}\le\mu h_{\sigma_2}^{\alpha-1}$, then
$Q_\alpha(\rho\|\sigma_1)\ge\mu^{-1}Q_\alpha(\rho\|\sigma_2)$ for all $\rho\in M_*^+$.
\end{itemize}
\end{prop}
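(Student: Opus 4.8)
The plan is to reduce the proposition to the two descriptions of $Q_\alpha$ in Proposition \ref{P-A.4} together with the analytic-continuation criterion for Connes' Radon--Nikodym cocycles in Lemma \ref{L-A.1}. Throughout, an inequality whose dominating side is $+\infty$ is vacuous, so one may always assume the larger divergence to be finite; one also uses repeatedly that $h_\omega^\delta\le\mu h_\ffi^\delta$ (for $\delta>0$) forces $s(\omega)\le s(\ffi)$, which makes the support hypotheses needed for Proposition \ref{P-A.4}\,(2) propagate.

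\emph{Part (1).} Here I would use $Q_\alpha(\rho\|\sigma)=\tr\bigl(h_\rho^\alpha h_\sigma^{1-\alpha}\bigr)=\tr\bigl(h_\rho^{\alpha/2}h_\sigma^{1-\alpha}h_\rho^{\alpha/2}\bigr)$ from Proposition \ref{P-A.4}\,(1), valid for $0\le\alpha<1$, all products lying in $L^1(M)_+$ by Hölder's inequality in Haagerup $L^p$-spaces. If $h_{\sigma_1}^{1-\alpha}\le\mu h_{\sigma_2}^{1-\alpha}$, then $h_\rho^{\alpha/2}h_{\sigma_1}^{1-\alpha}h_\rho^{\alpha/2}\le\mu\,h_\rho^{\alpha/2}h_{\sigma_2}^{1-\alpha}h_\rho^{\alpha/2}$ in $L^1(M)$, and applying the order-preserving functional $\tr$ gives $Q_\alpha(\rho\|\sigma_1)\le\mu Q_\alpha(\rho\|\sigma_2)$ for all $\rho$. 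For the converse, set $a:=h_{\sigma_1}^{1-\alpha}-\mu h_{\sigma_2}^{1-\alpha}$, a self-adjoint element of $L^{1/(1-\alpha)}(M)$, with Jordan decomposition $a=a_+-a_-$. Since $a_+\in L^{1/(1-\alpha)}(M)_+$, the element $a_+^{1/(1-\alpha)}$ lies in $L^1(M)_+$, so there is $\rho\in M_*^+$ with $h_\rho=a_+^{1/(1-\alpha)}$, whence $h_\rho^\alpha=a_+^{\alpha/(1-\alpha)}$ (with the convention $x^0=s(x)$ when $\alpha=0$). Because $a_+^{\alpha/(1-\alpha)}$ is a function of $a_+$, it commutes with $a_+$ and annihilates $a_-$, so the hypothesis yields $0\ge Q_\alpha(\rho\|\sigma_1)-\mu Q_\alpha(\rho\|\sigma_2)=\tr\bigl(a_+^{\alpha/(1-\alpha)}a\bigr)=\tr\bigl(a_+^{1/(1-\alpha)}\bigr)\ge0$; hence $a_+^{1/(1-\alpha)}=0$ by faithfulness of $\tr$ on $L^1(M)_+$, i.e.\ $a_+=0$ and $h_{\sigma_1}^{1-\alpha}\le\mu h_{\sigma_2}^{1-\alpha}$.

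\emph{Part (2).} The mechanism is that a one-sided operator inequality between powers of two densities is precisely a bound $\|[D\,\cdot\,{:}\,D\,\cdot\,]_z\|\le\mu^{1/2}$ for the corresponding cocycle, and that cocycle value acts as an $M$-multiplier linking the $L^2$-representing vectors of Proposition \ref{P-A.4}\,(2). Suppose first $h_{\rho_1}^\alpha\le\mu h_{\rho_2}^\alpha$ and $Q_\alpha(\rho_2\|\sigma)<+\infty$. Then $s(\rho_1)\le s(\rho_2)\le s(\sigma)$, and Lemma \ref{L-A.1} with $\delta=\alpha$ gives $u:=[D\rho_1{:}D\rho_2]_{-i\alpha/2}\in M$ with $\|u\|\le\mu^{1/2}$ and $h_{\rho_1}^{\alpha/2}=u\,h_{\rho_2}^{\alpha/2}$. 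Taking $\eta_2\in L^2(M)s(\sigma)$ as in Proposition \ref{P-A.4}\,(2) with $h_{\rho_2}^{\alpha/2}=\eta_2h_\sigma^{(\alpha-1)/2}$ and $Q_\alpha(\rho_2\|\sigma)=\|\eta_2\|_2^2$, we get $h_{\rho_1}^{\alpha/2}=(u\eta_2)h_\sigma^{(\alpha-1)/2}$ with $u\eta_2\in L^2(M)s(\sigma)$, so Proposition \ref{P-A.4}\,(2) applied to $(\rho_1,\sigma)$ gives $Q_\alpha(\rho_1\|\sigma)=\|u\eta_2\|_2^2\le\mu\|\eta_2\|_2^2=\mu Q_\alpha(\rho_2\|\sigma)$. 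For the second statement, suppose $h_{\sigma_1}^{\alpha-1}\le\mu h_{\sigma_2}^{\alpha-1}$ and $Q_\alpha(\rho\|\sigma_1)<+\infty$. Then $s(\rho)\le s(\sigma_1)\le s(\sigma_2)$, and Lemma \ref{L-A.1} with $\delta=\alpha-1\in(0,1]$ gives $v:=[D\sigma_1{:}D\sigma_2]_{-i(\alpha-1)/2}\in s(\sigma_1)Ms(\sigma_2)$ with $\|v\|\le\mu^{1/2}$ and $h_{\sigma_1}^{(\alpha-1)/2}=v\,h_{\sigma_2}^{(\alpha-1)/2}$. Writing $h_\rho^{\alpha/2}=\eta_1h_{\sigma_1}^{(\alpha-1)/2}$ with $\eta_1\in L^2(M)s(\sigma_1)$ and $Q_\alpha(\rho\|\sigma_1)=\|\eta_1\|_2^2$, we obtain $h_\rho^{\alpha/2}=(\eta_1v)h_{\sigma_2}^{(\alpha-1)/2}$ with $\eta_1v\in L^2(M)s(\sigma_2)$, hence $Q_\alpha(\rho\|\sigma_2)=\|\eta_1v\|_2^2\le\mu\|\eta_1\|_2^2=\mu Q_\alpha(\rho\|\sigma_1)$, i.e.\ $Q_\alpha(\rho\|\sigma_1)\ge\mu^{-1}Q_\alpha(\rho\|\sigma_2)$.

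\emph{Main obstacle.} The delicate point is the converse in part (1): one must exhibit a single test functional $\rho$ whose scalar inequality already detects the operator inequality, and the naive choice — the support projection of $a_+$ — fails, since a projection of infinite $\tau$-trace lies in no $L^p(M)$ with $p<+\infty$; choosing instead $h_\rho=a_+^{1/(1-\alpha)}$ circumvents this and makes $\tr(h_\rho^\alpha a)$ collapse to $\tr(a_+^{1/(1-\alpha)})$. Apart from that, part (2) is routine once Lemma \ref{L-A.1} is invoked: one only has to check that the support conditions required by Proposition \ref{P-A.4}\,(2) survive the operator inequalities and that $u\eta_2$ and $\eta_1v$ remain in the right ideals $L^2(M)s(\sigma)$ and $L^2(M)s(\sigma_2)$, both of which are immediate from $u\in M$ and $v\in Ms(\sigma_2)$.
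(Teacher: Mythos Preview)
Your proof is correct and follows essentially the same route as the paper. Part (2) is identical to the paper's argument (same use of Lemma \ref{L-A.1} to produce a bounded multiplier $u$ or $v$, same application of Proposition \ref{P-A.4}\,(2)). For part (1) the paper simply invokes the $L^p$-duality fact \cite[Proposition II.33]{Te} that $b\in L^{1/(1-\alpha)}(M)$ is positive iff $\tr(ab)\ge0$ for all $a\in L^{1/\alpha}(M)_+$, whereas you spell out the converse direction by hand via the Jordan decomposition and the test functional $h_\rho=a_+^{1/(1-\alpha)}$; this is exactly the standard proof of that duality statement, so the approaches coincide in substance.
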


\begin{proof}
(1)\enspace
Let $0\le\alpha<1$. Recall \cite[Proposition II.33]{Te} that for any $b\in L^{1/(1-\alpha)}(M)$,
$b\ge0$ if and only if $\tr(ab)\ge0$ for all $a\in L^{1/\alpha}(M)_+$ (with $1/\alpha=\infty$
for $\alpha=0$). Hence the assertion is immediate from \eqref{F-A.8}.

(2)\enspace
Let $1<\alpha\le2$. Assume that $h_{\rho_1}^\alpha\le\mu h_{\rho_2}^\alpha$. To prove the
asserted inequality, we may assume that $Q_\alpha(\rho_2\|\sigma)<+\infty$ so that
$s(\rho_2)\le s(\sigma)$ and $h_\sigma^{1/2}\in\cD(\Delta_{\rho_2,\sigma}^{\alpha/2})$. By
Proposition \ref{P-A.4}\,(2) there exists an $\eta\in L^2(M)s(\sigma)$ such that
$h_{\rho_2}^{\alpha/2}=\eta h_\sigma^{(\alpha-1)/2}$. By Lemma \ref{L-A.1} one has
$a:=[D\rho_1:D\rho_2]_{-i\alpha/2}\in M$, for which
$h_{\rho_1}^{\alpha/2}=ah_{\rho_2}^{\alpha/2}$ and $\|a\|\le\mu^{1/2}$. Therefore,
$h_{\rho_1}^{\alpha/2}=a\eta h_\sigma^{(\alpha-1)/2}$, so Proposition \ref{P-A.4}\,(2) gives
$Q_\alpha(\rho_1\|\sigma)=\|a\eta\|_2^2\le\mu\|\eta\|_2^2=\mu Q_\alpha(\rho_2\|\sigma)$.

Next, assume that $h_{\sigma_1}^{\alpha-1}\le\mu h_{\sigma_2}^{\alpha-1}$. We may assume that
$Q_\alpha(\rho\|\sigma_1)<+\infty$ so that
$s(\rho)\le s(\sigma_1)$ and $h_\sigma^{1/2}\in\cD(\Delta_{\rho,\sigma_1}^{\alpha/2})$. By
Proposition \ref{P-A.4}\,(2) one has $h_\rho^{\alpha/2}=\eta h_{\sigma_1}^{(\alpha-1)/2}$ for
some $\eta\in L^2(M)s(\sigma_1)$. By Lemma \ref{L-A.1} one has
$h_{\sigma_1}^{(\alpha-1)/2}=bh_{\sigma_2}^{(\alpha-1)/2}$ for some $b\in Ms(\sigma_2)$ with
$\|b\|\le\mu^{1/2}$. Therefore, $h_\rho^{\alpha/2}=\eta bh_{\sigma_2}^{(\alpha-1)/2}$ so that
$Q_\alpha(\rho\|\sigma_2)=\|\eta b\|_2^2\le\mu\|\eta\|_2^2=\mu Q_\alpha(\rho\|\sigma_1)$.
\end{proof}

\begin{remark}\label{R-A.7}\rm
Anna Jen\v cov\'a \cite{Je3} informed the author that she could prove Lemma \ref{L-A.3} for
every $p\ge0$ by using analyticity of $z\mapsto h_\rho^z$ in $\Re z>0$ \cite[Lemma II.18]{Te}
and a convergence argument. Then the equivalence of (i)--(iii) in Proposition \ref{P-A.4}\,(2)
is true for all $\alpha>1$, thus Propositions \ref{P-A.6}\,(2) holds for all $\alpha>1$. 
\end{remark}

\section{Relative hamiltonians in terms of Haagerup's $L^p$-spaces}

The theory of relative hamiltonians in the general von Neumann algebra setting was formerly
developed by Araki \cite{Ar3} and Donald \cite{Do} in close relation to the relative entropy.
Although the topic is not strongly related to the main body of this paper, it is worthwhile
to consider that in a more general framework in terms of Haagerup's $L^p$-spaces, as a sequel
of Appendix~A.

\subsection{Survey on relative hamiltonians}

Let $(M,\cH,J,\cP)$ be a standard form of $M$ and $\ffi\in M_*^+$ be faithful so that
$\ffi=\<\Phi,\,\cdot\,\Phi\>$ with the cyclic and separating vector $\Phi\in\cP$. For each
$h\in M_\sa$ Araki \cite{Ar3} defined the perturbed vector $\Phi^h$ by
\begin{align}\label{F-B.1}
\Phi^h:=\sum_{n=0}^\infty\int_0^{1/2}dt_1\int_0^{t_1}dt_2\cdots\int_0^{t_{n-1}}dt_n
\Delta_\ffi^{t_n}h\Delta_\ffi^{t_{n-1}-t_n}h\cdots\Delta_\ffi^{t_1-t_2}h\Phi,
\end{align}
where $\Phi$ is in the domain of $\Delta_\ffi^{z_1}h\Delta_\ffi^{z_2}h\cdots\Delta_\ffi^{z_n}h$
for $z=(z_1,\dots,z_n)\in\bC^n$ with
$\Re z\in\{(s_1,\dots,s_n)\in\bR^n:s_1,\dots,s_n\ge0,\,s_1+\cdots+s_n\le1/2\}$ and the
right-hand side of \eqref{F-B.1} absolutely converges. Then $\Phi^h$ is also a cyclic and
separating vector in $\mathcal{P}$. It is known \cite{Ar4} that $\Phi$ is in the domain of
$\exp{1\over2}(\log\Delta_\ffi+h)$ and $\Phi^h=\exp{1\over2}(\log\Delta_\ffi+h)\Phi$. A version
of the Trotter product formula was given in \cite[Remarks 1, 2]{Ar4} as follows:
$$
\Phi^h=\lim_{n\to\infty}(\Delta_\ffi^{1/2n}e^{h/2n})^n\Phi
=\lim_{n\to\infty}(e^{h/2n}\Delta_\ffi^{1/2n})^n\Phi\ \ \mbox{(strong limit)}.
$$
The perturbed functional $\ffi^h$ is defined by $\ffi^h=\<\Phi^h,\,\cdot\,\Phi^h\>$
independently of the choice of a standard form. In fact, we have (see
\cite[Proposition 4.3]{Ar3} and the proof of \cite[Theorem 6]{Pe1})
\begin{align}\label{F-B.2}
[D\ffi^h:D\ffi]_t
=\sum_{n=0}^\infty i^n\int_0^tdt_1\int_0^{t_1}dt_2\cdots\int_0^{t_{n-1}}dt_n
\sigma^\ffi_{t_n}(h)\cdots\sigma^\ffi_{t_1}(h).
\end{align}
When $\omega=\ffi^h$ with $h\in M_\sa$, $-h$ is called a \emph{relative hamiltonian} of
$\omega$ relative to $\ffi$.

Let $\fS(M)$ denote the set of normal states on $M$. If $h:\fS(M)\to(-\infty,+\infty]$ is a
weakly lower semicontinuous affine map whose range is lower bounded, then $h$ is called an
\emph{extended-valued lower-bounded operator} affiliated with $M$. We denote by $M_\ext$
the set of all such extended-valued operators. Note \cite[Proposition 2.13]{Do} (also
\cite[Theorem 1.5]{Ha3}) that for each $h\in M_\ext$ there exist a projection $p\in M$ and
a spectral resolution $\{e_\lambda:s\le\lambda<\infty\}$ in $M$ with $s\in\bR$ and
$e_\infty=1-p$ such that $h$ is represented as
$$
h(\rho)=\int_s^\infty\lambda\,d\rho(e_\lambda)+\infty\rho(p),\qquad\rho\in\fS(M),
$$
and that $h\in M_\ext$ if and only if there exists an increasing net $\{h_\alpha\}$ in $M_\sa$
such that $h(\rho)=\sup_\alpha\rho(h_\alpha)$ for $\rho\in\fS(M)$. Obviously, $h\in M_\ext$
can extend to a positively homogeneous map $h:M_*^+\to(-\infty,+\infty]$. We further set
$-M_\ext:=\{-h:h\in M_\ext\}$. Namely, $h\in-M_\ext$ is represented as
$$
h(\rho)=\int_{-\infty}^r\lambda\,d\rho(e_\lambda)+(-\infty)\rho(p),
\qquad\rho\in M_*^+,
$$
where $\{e_\lambda:-\infty<\lambda\le r\}$ is a spectral resolution in $M$ with $r\in\bR$ and
$e_r=1-p$. In this case we write $h\le r$. For each $\ffi\in M_*^+$ and $h\in M_\ext$ define
$c(\ffi,h)\in(-\infty,+\infty]$ by
$$
c(\ffi,h):=\inf\{h(\rho)+D(\rho\|\ffi):\rho\in\fS(M)\}.
$$
Donald \cite[Theorem 3.1]{Do} proved that if $c(\ffi,h)<+\infty$, that is,
$h(\rho)+D(\rho\|\ffi)<\infty$ for some $\rho\in\fS(M)$, then there exists a unique
$\omega\in\fS(M)$ such that $h(\omega)+D(\omega\|\ffi)=c(\ffi,h)$. We denote this $\omega$ by
$[\ffi^h]$; this was denoted in \cite{Do} by $\ffi^h$ but we prefer to use notation
consistently with Araki's one.

Note \cite[Proposition 1]{Pe2} (also \cite[Appendix]{PRV}) that when $\ffi\in M_*^+$ is
faithful and $h\in M_\sa$, $[\ffi^h]$ coincides with $\ffi^{-h}$ up to a normalization
constant; more precisely, $\ffi^{-h}=\ffi^{-h}(1)[\ffi^h]$ and $\ffi^{-h}(1)=e^{-c(\ffi,h)}$.
So, for any $\ffi\in M_*^+$ and $h\in-M_\ext$ such that $c(\ffi,-h)<+\infty$ we can define
the perturbed functional $\ffi^h$ by
$$
\ffi^h:=e^{-c(\ffi,-h)}[\ffi^{-h}]
$$
and call $-h$ a \emph{relative hamiltonian} of $\omega=\ffi^h$ relative to $\ffi$. But when
$c(\ffi,-h)=\infty$ let $\ffi^h:=0$ as convention. Moreover, it is worthwhile to recall Petz'
variational expression of the relative entropy \cite[Theorem 9]{Pe2}; namely, if $\ffi$ is faithful and
$\omega\in\fS(M)$, then
\begin{align}\label{F-B.3}
D(\omega\|\ffi)=\sup\{\omega(h)-\log\ffi^h(1):h\in M_\sa\}.
\end{align}
(Even when $\ffi$ is not faithful, \eqref{F-B.3} remains valid with $\omega(s(\ffi)hs(\ffi))$
in place of $\omega(h)$.)

\subsection{Theorems}

We here prove the next theorems as to the existence of relative hamiltonian, generalizing
\cite[Theorem 6.3]{Ar3} and \cite[Theorem 4.3]{Do}, respectively, in the framework of the
standard form $(M,L^2(M),\,^*\,,L^2(M)_+)$.

\begin{thm}\label{T-B.1}
If $\ffi,\omega\in M_*^+$ and $\nu h_\ffi^\delta\le h_\omega^\delta\le\mu h_\ffi^\delta$
for some $\delta,\mu,\nu>0$, then there exists an $h\in M_\sa$ such that $\omega=\ffi^h$ and
$\delta^{-1}\log\nu\le h\le\delta^{-1}\log\mu$.
\end{thm}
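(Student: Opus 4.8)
The plan is to manufacture $h$ from the Connes cocycle $[D\omega:D\ffi]_t=h_\omega^{it}h_\ffi^{-it}$, exploiting that the hypothesis forces this cocycle to extend analytically into a strip, and then to recognise the analytic cocycle as Araki's $[D\ffi^h:D\ffi]_t$. First I would dispose of the non-faithful case: by Lemma~\ref{L-A.2} the hypothesis implies $s(\omega)=s(\ffi)=:e$, and since the desired identity $\omega=\ffi^h$ and the desired bounds only involve $e\ffi e$, $e\omega e$ and $ehe$, it suffices to construct $h_0\in(eMe)_\sa$ with $\delta^{-1}(\log\nu)e\le h_0\le\delta^{-1}(\log\mu)e$ and $(e\ffi e)^{h_0}=e\omega e$ inside the reduced algebra $eMe$ (on which $\ffi$ is faithful), and then to put $h:=h_0+c(1-e)$ for any $c\in[\delta^{-1}\log\nu,\delta^{-1}\log\mu]$. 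So assume $\ffi$ faithful. Applying Lemma~\ref{L-A.2} with $\lambda:=\max(\mu,\nu^{-1})$, the function $z\mapsto u_z:=[D\omega:D\ffi]_z$ extends to an analytic $M$-valued function on the open strip $|\Im z|<\delta/2$, strong* continuous up to the boundary, with $u_t$ unitary for $t\in\bR$ and $u_{\bar z}^{\,*}=u_z^{-1}$. Since $0$ lies in the open strip, $u$ is norm-differentiable there, and $u_z\in M$ for all $z$ forces $u'(0)\in M$; set $h:=-i\,u'(0)$. Differentiating $u_t^*u_t=1$ at $t=0$ gives $u'(0)^*=-u'(0)$, so $h\in M_\sa$.

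Next I would prove $\omega=\ffi^h$ by showing $u_t=[D\ffi^h:D\ffi]_t$ for all $t\in\bR$; since the Connes cocycle relative to the faithful $\ffi$ determines a normal positive functional (including its total mass), this yields $\omega=\ffi^h$. Writing the cocycle identity as $u_{t+s}=u_t\,\sigma^\ffi_t(u_s)$ and differentiating in $s$ at $s=0$ --- legitimate because $\sigma^\ffi_t$ is a fixed normal automorphism, hence a norm-continuous linear map commuting with norm-derivatives --- one gets $u'(t)=u_t\,\sigma^\ffi_t(u'(0))=i\,u_t\,\sigma^\ffi_t(h)$. As $u$ is a norm-analytic curve and $u_0=1$, the fundamental theorem of calculus gives the Volterra equation $u_t=1+i\int_0^t u_s\,\sigma^\ffi_s(h)\,ds$. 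This equation has a unique continuous solution by a Gr\"onwall estimate (the kernel is bounded, $\|\sigma^\ffi_s(h)\|=\|h\|$), and peeling off the outermost integral in \eqref{F-B.2} shows that $[D\ffi^h:D\ffi]_t$ satisfies the same equation (equivalently, Picard iteration of it reproduces the series \eqref{F-B.2} term by term, with the ordering "smaller time to the left" matching); hence $u_t=[D\ffi^h:D\ffi]_t$.

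Finally the bounds follow by differentiating an operator inequality at $p=0$. By operator monotonicity of $x\mapsto x^{p/\delta}$ for $0<p\le\delta$, the hypothesis gives $\nu^{p/\delta}h_\ffi^p\le h_\omega^p\le\mu^{p/\delta}h_\ffi^p$. By Lemma~\ref{L-A.1}, $b_p:=u_{-ip/2}\in M$ satisfies $h_\omega^{p/2}=b_p\,h_\ffi^{p/2}$, and a standard core argument then yields $\nu^{p/\delta}\,1\le b_p^*b_p\le\mu^{p/\delta}\,1$. Since $u$ is norm-analytic near $0$, $b_p=1+\tfrac p2 h+o(p)$ and hence $b_p^*b_p=1+ph+o(p)$ in norm as $p\searrow0$, while $\nu^{p/\delta}=1+\tfrac p\delta(\log\nu)+o(p)$ and $\mu^{p/\delta}=1+\tfrac p\delta(\log\mu)+o(p)$; subtracting, dividing by $p>0$ and letting $p\searrow0$ gives $\delta^{-1}\log\nu\le h\le\delta^{-1}\log\mu$, as required.

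I expect the middle paragraph to be the crux. The delicate point is that Araki's series \eqref{F-B.2} is awkward to differentiate in $t$ directly (because $s\mapsto\sigma^\ffi_s(h)$ is only strong* continuous), so the identification of $h$ with the generator of $[D\ffi^h:D\ffi]_\cdot$ is routed through the Volterra equation; one should check carefully that the Picard iterates of that equation coincide with the partial sums of \eqref{F-B.2}, and that $\sigma^\ffi_t$ may indeed be moved through the $s$-derivative. The reduction to the faithful case and the differentiation-at-$p=0$ argument are routine by comparison.
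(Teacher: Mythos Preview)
Your argument is correct. The first two paragraphs --- construction of $h$ as $-i\,u'(0)$, self-adjointness from the unitarity of $u_t$, and the identification $u_t=[D\ffi^h:D\ffi]_t$ via the Volterra equation --- are exactly the paper's route; the paper simply cites Araki's expansional paper \cite{Ar0} for the uniqueness step that you spell out with Picard iteration, and your check of the time-ordering in \eqref{F-B.2} is right.

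The genuine difference is in the bounds $\delta^{-1}\log\nu\le h\le\delta^{-1}\log\mu$. The paper proceeds through relative entropy: from $\omega=\ffi^h$ one has Araki's identity $D(\rho\|\omega)=-\rho(h)+D(\rho\|\ffi)$ \cite[Theorem~3.10]{Ar2}, and Lemma~\ref{L-B.3} (proved via the R\'enyi divergences of Section~5) gives $D(\rho\|\omega)\ge D(\rho\|\ffi)-\rho(1)\delta^{-1}\log\mu$ and the symmetric lower estimate; combining and using density of $\{\rho:D(\rho\|\ffi)<\infty\}$ yields the bounds. Your approach is instead a direct first-order expansion: from $h_\omega^p\le\mu^{p/\delta}h_\ffi^p$ and $h_\ffi^p\le\nu^{-p/\delta}h_\omega^p$ (operator monotonicity of $x\mapsto x^{p/\delta}$) together with the density of the range of $h_\ffi^{p/2}$, one gets $\nu^{p/\delta}\le b_p^*b_p\le\mu^{p/\delta}$ (equivalently, this follows from the norm bounds in Lemmas~\ref{L-A.1}--\ref{L-A.2} applied with parameter $p$), and then $b_p^*b_p=1+ph+o(p)$ gives the result. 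Your argument is more elementary in that it never touches the relative entropy; the paper's argument, on the other hand, is tailored to the surrounding material, since Lemma~\ref{L-B.3} is needed anyway for Theorem~\ref{T-B.2} and the entropy identity \eqref{F-B.5} is of independent interest.
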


\begin{thm}\label{T-B.2}
If $\ffi,\omega\in M_*^+$ and $h_\omega^\delta\le\mu h_\ffi^\delta$ for some $\delta,\mu>0$,
then there exists an $h\in-M_\ext$ such that $\omega=\ffi^h$, $h\le\delta^{-1}\log\mu$, and
\begin{equation}\label{F-B.4}
D(\rho\|\omega)=-h(\rho)+D(\rho\|\ffi),\qquad\rho\in M_*^+.
\end{equation}
\end{thm}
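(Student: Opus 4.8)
The plan is to reduce Theorem~\ref{T-B.2} to its two-sided counterpart Theorem~\ref{T-B.1} by a regularization in Haagerup's $L^{1/\delta}$-space, to establish the perturbation formula for bounded perturbations, and then to pass to a decreasing limit. First some reductions: since $h_\omega^\delta\le\mu h_\ffi^\delta$ forces $s(\omega)\le s(\ffi)$ (immediate, or Lemma~\ref{L-A.1}), I pass to $s(\ffi)Ms(\ffi)$ and assume $\ffi$ faithful, extending the $h$ obtained there to $M$ by $h=-\infty$ on $1-s(\ffi)$, which is consistent with both $\omega=\ffi^h$ and \eqref{F-B.4}; the case $\omega=0$ being handled by $h\equiv-\infty$, assume $\omega\ne0$. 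By operator monotonicity of $t\mapsto t^{r}$ for $0<r\le1$, the hypothesis for $\delta$ implies $h_\omega^{\delta'}\le\mu^{\delta'/\delta}h_\ffi^{\delta'}$ for all $\delta'\in(0,\delta]$, with the same bound $\delta^{-1}\log\mu$ on $h$, so I assume $\delta\le1$.

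For $n\in\bN$ let $\omega_n\in M_*^+$ be defined by $h_{\omega_n}^\delta:=h_\omega^\delta+n^{-1}h_\ffi^\delta$, which is well defined since $h_\omega^\delta+n^{-1}h_\ffi^\delta\in L^{1/\delta}(M)_+$ and $\|(h_\omega^\delta+n^{-1}h_\ffi^\delta)^{1/\delta}\|_1=\|h_\omega^\delta+n^{-1}h_\ffi^\delta\|_{1/\delta}^{1/\delta}<+\infty$. Then $n^{-1}h_\ffi^\delta\le h_{\omega_n}^\delta\le(\mu+n^{-1})h_\ffi^\delta$, so Theorem~\ref{T-B.1} provides $h_n\in M_\sa$ with $\omega_n=\ffi^{h_n}$ and $-\delta^{-1}\log n\le h_n\le\delta^{-1}\log(\mu+n^{-1})$. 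For bounded perturbations there holds $D(\rho\|\ffi^{h})=-\rho(h)+D(\rho\|\ffi)$ for all $\rho\in M_*^+$ and $h\in M_\sa$: for a state $\rho$ this follows from Petz's variational expression \eqref{F-B.3} applied with reference $\ffi^{h}/\ffi^{h}(1)$, the cocycle identity $\ffi^{h+k}=(\ffi^{h})^{k}$, and $\ffi^{h}(1)=e^{-c(\ffi,-h)}$; for general $\rho$ by homogeneity ($D(\lambda\rho\|\cdot)=\lambda D(\rho\|\cdot)+\lambda\log\lambda$). Hence $D(\rho\|\omega_n)=-h_n(\rho)+D(\rho\|\ffi)$ for all $\rho\in M_*^+$.

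Now I let $n\to\infty$. Since $h_{\omega_n}=(h_{\omega_n}^\delta)^{1/\delta}\to h_\omega$ in $L^1(M)$ (continuity of the power map), $\omega_n\to\omega$ in $M_*$. I claim $D(\rho\|\omega_n)\nearrow D(\rho\|\omega)$ for every $\rho\in M_*^+$. Although $\{\omega_n\}$ need not decrease in the functional order when $\delta<1$, for $\alpha\in[1-\delta,1)$ the operator $h_{\omega_n}^{1-\alpha}=(h_{\omega_n}^\delta)^{(1-\alpha)/\delta}$ is decreasing in $n$ (operator monotonicity, as $(1-\alpha)/\delta\le1$), so $Q_\alpha(\rho\|\omega_n)=\tr(h_\rho^\alpha h_{\omega_n}^{1-\alpha})$ decreases while staying $\ge Q_\alpha(\rho\|\omega)$, and upper semicontinuity of $Q_\alpha=-S_{f_\alpha}$ (Theorem~\ref{T-4.1}\,(i)) gives $Q_\alpha(\rho\|\omega_n)\searrow Q_\alpha(\rho\|\omega)$, hence $D_\alpha(\rho\|\omega_n)\nearrow D_\alpha(\rho\|\omega)$ in $n$. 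As $D(\rho\|\sigma)=\sup_{\alpha\in[1-\delta,1)}\rho(1)D_\alpha(\rho\|\sigma)$ by Proposition~\ref{P-5.3}\,(3),(4), taking $\sup_\alpha$ proves the claim. Then for every state $\rho$ with $D(\rho\|\ffi)<+\infty$ the number $\rho(h_n)=D(\rho\|\ffi)-D(\rho\|\omega_n)$ is decreasing in $n$; such $\rho$ are $\sigma(M_*,M)$-dense in $\fS(M)$ and $\rho\mapsto\rho(h_n)$ is $\sigma(M_*,M)$-continuous, so $h_{n+1}\le h_n$ in $M_\sa$. Hence $h:=\inf_nh_n$ is a well-defined element of $-M_\ext$ (the decreasing sequence $\{h_n\}$ is bounded above by $\delta^{-1}\log(\mu+1)$) with $h\le\inf_n\delta^{-1}\log(\mu+n^{-1})=\delta^{-1}\log\mu$, and passing to the limit in $D(\rho\|\omega_n)=-h_n(\rho)+D(\rho\|\ffi)$, with $\rho(h)=\inf_n\rho(h_n)$ (the arithmetic being routine also when $D(\rho\|\ffi)=+\infty$ or $h(\rho)=-\infty$), yields \eqref{F-B.4}. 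Finally \eqref{F-B.4} shows that $\omega/\omega(1)$ minimizes $\rho\mapsto-h(\rho)+D(\rho\|\ffi)=D(\rho\|\omega)$ over $\fS(M)$ with value $c(\ffi,-h)=-\log\omega(1)<+\infty$, so $[\ffi^{-h}]=\omega/\omega(1)$ and $\ffi^{h}=e^{-c(\ffi,-h)}[\ffi^{-h}]=\omega$.

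The main obstacle is the limiting step when $\delta<1$. The regularization above is essentially forced: $\omega_n=\omega+n^{-1}\ffi$ would be monotone in the functional order but does not satisfy two-sided bounds against $h_\ffi^{\delta}$ (the map $t\mapsto t^{\delta}$ is not operator subadditive), whereas the present $\omega_n$ has the required bounds but is not functionally monotone and the $h_n$ are unbounded below. Both difficulties are overcome by transferring monotonicity to $Q_\alpha$ with $\alpha\in[1-\delta,1)$, where $t\mapsto t^{(1-\alpha)/\delta}$ is operator monotone, together with $D=\lim_{\alpha\nearrow1}D_\alpha$ and the weak-$*$ density of states of finite relative entropy; Lemma~\ref{L-A.1} is what puts the hypothesis of Theorem~\ref{T-B.1} into usable $L^p$-form. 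An alternative would be to rerun Donald's original argument of \cite{Do} for the relative entropy directly, now fed with the hypothesis in the clean form $h_\omega^\delta\le\mu h_\ffi^\delta$.
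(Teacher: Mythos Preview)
Your proof is correct and follows essentially the same route as the paper: regularize via $h_{\omega(\eps)}^\delta = h_\omega^\delta + \eps h_\ffi^\delta$, apply Theorem~\ref{T-B.1}, use the $Q_\alpha$/$D_\alpha$ argument (which the paper packages separately as Lemma~\ref{L-B.3}) to obtain monotonicity and convergence of $D(\rho\|\omega(\eps))$ to $D(\rho\|\omega)$, and pass to the limit. The only differences are cosmetic: a discrete versus continuous regularization parameter, your derivation of the bounded-perturbation identity from \eqref{F-B.3} rather than a direct citation of \cite[Theorem~3.10]{Ar2}, and an unnecessary preliminary reduction to $\delta\le1$.
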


To prove the theorems, we first give the following:

\begin{lemma}\label{L-B.3}
If $\ffi,\omega\in M_*^+$ and $h_\omega^\delta\le\mu h_\ffi^\delta$ for some $\delta,\mu>0$,
then
$$
D(\rho\|\omega)\ge D(\rho\|\ffi)-\rho(1){\log\mu\over\delta},\qquad\rho\in M_*^+.
$$
\end{lemma}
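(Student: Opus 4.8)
The plan is to obtain the estimate first for the R\'enyi divergences $D_\alpha(\rho\|\omega)$ and $D_\alpha(\rho\|\ffi)$ with $\alpha$ in a left neighbourhood of $1$, and then let $\alpha\nearrow1$, invoking the identity $D_1(\rho\|\sigma)=D(\rho\|\sigma)/\rho(1)$ of Proposition \ref{P-5.3}\,(3). The case $\rho=0$ is trivial: by Proposition \ref{P-2.3}\,(2) and $\lim_{t\searrow0}t\log t=0$ we have $D(0\|\omega)=D(0\|\ffi)=0$, while $\rho(1)=0$, so the inequality reads $0\ge0$. So assume $\rho\ne0$, hence $\rho(1)>0$.

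First I would convert the hypothesis into an inequality between the powers of $h_\omega$ and $h_\ffi$ that actually occur in $Q_\alpha$. Fix $\alpha$ with $\max\{0,1-\delta\}\le\alpha<1$ and put $\beta:=(1-\alpha)/\delta\in(0,1]$. Writing $\mu h_\ffi^\delta=(\mu^{1/\delta}h_\ffi)^\delta$ and using that $t\mapsto t^\beta$ is operator monotone (which remains valid for positive $\tau$-measurable operators, e.g.\ via the integral representation of $t^\beta$), the hypothesis $h_\omega^\delta\le\mu h_\ffi^\delta$ gives
$$
h_\omega^{1-\alpha}=(h_\omega^\delta)^\beta\le(\mu h_\ffi^\delta)^\beta=\mu^{(1-\alpha)/\delta}h_\ffi^{1-\alpha}.
$$
By Proposition \ref{P-A.6}\,(1), with $\mu$ there replaced by $\mu^{(1-\alpha)/\delta}$, this yields $Q_\alpha(\rho\|\omega)\le\mu^{(1-\alpha)/\delta}Q_\alpha(\rho\|\ffi)$ for every such $\alpha$. (Alternatively one argues directly from $Q_\alpha(\rho\|\sigma)=\tr(h_\rho^\alpha h_\sigma^{1-\alpha})$ of Proposition \ref{P-A.4}\,(1) and positivity of the trace on products.)

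Now take logarithms and divide by $\alpha-1<0$, reversing the inequality; recalling $D_\alpha(\rho\|\sigma)={1\over\alpha-1}\log\bigl(Q_\alpha(\rho\|\sigma)/\rho(1)\bigr)$ this gives
$$
D_\alpha(\rho\|\omega)\ge D_\alpha(\rho\|\ffi)-{\log\mu\over\delta},\qquad\max\{0,1-\delta\}\le\alpha<1,
$$
where the degenerate case $Q_\alpha(\rho\|\ffi)=0$ is harmless since then both sides are $+\infty$. Since $\max\{0,1-\delta\}<1$, we may let $\alpha\nearrow1$; by Proposition \ref{P-5.3}\,(3) the left side tends to $D(\rho\|\omega)/\rho(1)$ and the right side to $D(\rho\|\ffi)/\rho(1)-\log\mu/\delta$, in the extended reals. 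Multiplying by $\rho(1)>0$ gives exactly $D(\rho\|\omega)\ge D(\rho\|\ffi)-\rho(1)\log\mu/\delta$.

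The only step that is not pure bookkeeping is the operator-monotonicity passage from $h_\omega^\delta\le\mu h_\ffi^\delta$ to $h_\omega^{1-\alpha}\le\mu^{(1-\alpha)/\delta}h_\ffi^{1-\alpha}$; this is standard for positive $\tau$-measurable operators, so I anticipate no genuine obstacle, and all remaining points amount to careful tracking of the $+\infty$ cases and of the limit $D=\lim_{\alpha\nearrow1}D_\alpha$.
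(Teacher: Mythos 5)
Your proposal is correct and follows essentially the same route as the paper: derive $h_\omega^{1-\alpha}\le\mu^{(1-\alpha)/\delta}h_\ffi^{1-\alpha}$ for $\alpha\in(0,1)$ with $1-\alpha\le\delta$, apply Proposition \ref{P-A.6}\,(1) to get $Q_\alpha(\rho\|\omega)\le\mu^{(1-\alpha)/\delta}Q_\alpha(\rho\|\ffi)$, pass to $D_\alpha$, and let $\alpha\nearrow1$ via Proposition \ref{P-5.3}\,(3). Your explicit justification of the power-lowering step by operator monotonicity of $t\mapsto t^\beta$ is a detail the paper leaves implicit, but the argument is the same.
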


\begin{proof}
Since both sides of the asserted inequality are zero if $\rho=0$, we may assume that $\rho\ne0$.
By assumption we have $h_\omega^{1-\alpha}\le\mu^{(1-\alpha)/\delta}h_\ffi^{1-\alpha}$ for any
$\alpha\in(0,1)$ with $1-\alpha\le\delta$. Then Proposition \ref{P-A.6}\,(1) implies that
$Q_\alpha(\rho\|\omega)\le\mu^{(1-\alpha)/\delta}Q_\alpha(\rho\|\ffi)$ so that
$$
\log{Q_\alpha(\rho\|\omega)\over\rho(1)}\le\log{Q_\alpha(\rho\|\ffi)\over\rho(1)}
+{1-\alpha\over\delta}\log\mu.
$$
Therefore, $D_\alpha(\rho\|\omega)\ge D_\alpha(\rho\|\ffi)-\delta^{-1}\log\mu$. Letting
$\alpha\nearrow1$ gives the desired inequality due to Proposition \ref{P-5.3}\,(3).
\end{proof}

\noindent
{\it Proof of Theorem \ref{T-B.1}.}\enspace
Let $\ffi,\omega\in M_*^+$ satisfy the assumption of the theorem. We may suppose that
$\ffi$ is faithful (hence so is $\omega$). By Lemma \ref{L-A.2} and \cite[\S9.24]{StZs}
one can define $h\in M$ by $h=-i\,{d\over dz}[D\omega:D\ffi]_z|_{z=0}$. Then since
$[D\omega:D\ffi]_t$ is a $\sigma_t^\ffi$-unitary cocycle (see \cite{St,Ta}), one has
$h\in M_\sa$ and
$$
{d\over dt}[D\omega:D\ffi]_t=[D\omega:D\ffi]_t\sigma_t^\ffi(ih),\qquad t\in\bR,
$$
so that $[D\omega:D\ffi]_t=[D\ffi^h:D\ffi]_t$ for all $t\in\bR$ by \cite[Theorem 1]{Ar0}
and \eqref{F-B.2}. Therefore $\omega=\ffi^h$. For every $\rho\in M_*^+$ one has by
\cite[Theorem 3.10]{Ar2}
\begin{equation}\label{F-B.5}
D(\rho\|\omega)=-\rho(h)+D(\rho\|\ffi)
\end{equation}
and by Lemma \ref{L-B.3}
\begin{equation}\label{F-B.6}
D(\rho\|\omega)\ge D(\rho\|\ffi)-\rho(1){\log\mu\over\delta},
\end{equation}
\begin{equation}\label{F-B.7}
D(\rho\|\ffi)\ge D(\rho\|\omega)+\rho(1){\log\nu\over\delta}.
\end{equation}
When $D(\rho\|\ffi)<\infty$, estimates \eqref{F-B.5}--\eqref{F-B.7} imply that
$$
\rho(1){\log\nu\over\delta}\le\rho(h)\le\rho(1){\log\mu\over\delta}.
$$
But the set of $\rho\in M_*^+$ with $D(\rho\|\ffi)<\infty$ is dense in $M_*^+$ by the faithfulness of $\ffi$. Hence $\delta^{-1}\log\nu\le h\le\delta^{-1}\log\mu$.\qed

\bigskip\noindent
{\it Proof of Theorem \ref{T-B.2}.}\enspace
Assume that $\ffi,\omega\in M_*^+$ and $h_\omega^\delta\le\mu h_\ffi^\delta$ with
$\delta,\mu>0$. We may suppose that $\ffi$ is faithful and $\omega$ is nonzero (the case
$\omega=0$ is trivial). For each $\eps>0$ define $\omega(\eps)\in M_*^+$ by
$h_{\omega(\eps)}=(h_\omega^\delta+\eps h_\ffi^\delta)^{1/\delta}$. Since
$\eps h_\ffi^\delta\le h_{\omega(\eps)}^\delta\le(\mu+\eps)h_\ffi^\delta$, Theorem \ref{T-B.1}
implies that there exists an $h(\eps)\in M_\sa$ such that $\omega(\eps)=\ffi^{h(\eps)}$ and
$h(\eps)\le\delta^{-1}\log(\mu+\eps)$. When $0<\eps<\eps'$, it follows from Lemma \ref{L-B.3}
that $D(\rho\|\omega(\eps))\ge D(\rho\|\omega(\eps'))$ for all $\rho\in M_*^+$. Since by
\cite[Theorem 3.10]{Ar2}
\begin{align}\label{F-B.8}
D(\rho\|\omega(\eps))=-\rho(h(\eps))+D(\rho\|\ffi),\qquad\rho\in M_*^+,
\end{align}
one has $-\rho(h(\eps))\ge-\rho(h(\eps'))$ for all $\rho\in M_*^+$ with
$D(\rho\|\ffi)<\infty$, so that $-h(\eps)\ge-h(\eps')$ as in the proof of Theorem \ref{T-B.1}.
So $h\in-M_\ext$ can be defined by
$$
-h(\rho)=\sup_{\eps>0}\rho(-h(\eps))=\lim_{\eps\searrow0}\rho(-h(\eps)),
\qquad\rho\in M_*^+.
$$
Then one has $h\le\delta^{-1}\log\mu$ as the limit of $h(\eps)\le\delta^{-1}\log(\mu+\eps)$.

Next, let us prove \eqref{F-B.4}. Note \cite[Lemma II.16 and Corollary II.17]{Te} that
the norm topology on $L^p(M)$ ($\subset\widetilde N$) coincides with the measure topology
(induced by the canonical trace on $N$). We see by \cite{Ti} that
$h_{\omega(\eps)}\to h_\omega$ as $\eps\searrow0$ in the measure topology. Hence
$\|\omega(\eps)-\omega\|\to0$, so that we have for every $\rho\in M_*^+$
$$
D(\rho\|\omega)\le\liminf_{\eps\searrow0}D(\rho\|\omega(\eps))
$$
by the lower semicontinuity of relative entropy \cite[Theorem 3.7(1)]{Ar2}. On the other hand,
Lemma \ref{L-B.3} shows that $D(\rho\|\omega)\ge D(\rho\|\omega(\eps))$ for all $\eps>0$
because $h_\omega^\delta\le h^\delta_{\omega(\eps)}$. Therefore,
$$
D(\rho\|\omega)=\lim_{\eps\searrow0}D(\rho\|\omega(\eps)).
$$
Now \eqref{F-B.4} is immediate by taking the limit of \eqref{F-B.8}. Finally, \eqref{F-B.4}
implies that for every $\rho\in\fS(M)$
$$
-h(\rho)+D(\rho\|\ffi)=D(\rho\|\omega)\ge-\log\omega(1)
$$
and in particular, when $\rho=\omega/\omega(1)$,
$$
-h(\rho)+D(\rho\|\ffi)=D(\omega/\omega(1)\|\omega)=-\log\omega(1).
$$
Hence $c(\ffi,-h)=-\log\omega(1)<\infty$ and $\omega/\omega(1)=[\ffi^{-h}]$, so that
$\omega=e^{-c(\ffi,-h)}[\ffi^{-h}]=\ffi^h$, completing the proof.\qed

\begin{remark}\rm
Assume that $\ffi\in M_*^+$ be faithful and $\omega=\ffi^h$ for some $h\in M_\sa$.
Note \cite[Proposition 4.6]{Ar3} that $-h$ is a unique relative hamiltonian of $\omega$
relative to $\ffi$. This is seen also from \cite[Theorem 3.10]{Ar2} or more explicitly from
\cite[(4.28)]{Ar2}. Moreover, according to \cite[Theorem III.1]{Hs} and its remarks we have
$$
h_\omega^{1/2}=\exp\biggl({\log h_\ffi+h\over2}\biggr)
=\lim_{n\to\infty}(h_\ffi^{1/2n}e^{h/2n})^n\ \ \mbox{(weak limit)}
$$
and hence
\begin{align}\label{F-B.9}
h_\omega=\exp(\log h_\ffi+h).
\end{align}
(In fact, the results of \cite{Hs} in the spatial $L^p$-spaces can be automatically
transformed into those in the Haagerup $L^p$-spaces due to \cite[Theorem IV.12]{Te}.)
Since $D(\omega\|\ffi)=\omega(h)$, \eqref{F-B.9} yields the formula
\begin{align}\label{F-B.10}
D(\omega\|\ffi)=\tr(h_\omega(\log h_\omega-\log h_\ffi)),
\end{align}
which has a complete resemblance to Umegaki's relative entropy in the semifinite case (see
\eqref{F-1.1}). However, when $\omega=\ffi^h$ with a relative hamiltonian $-h\in M_\ext$
unbounded from above, it seems problematic to determine whether formulas \eqref{F-B.9} and
\eqref{F-B.10} remain to make sense (see \cite{Mu} for a related discussion).
\end{remark}


\begin{thebibliography}{99}

\bibitem{Ar0}
H. Araki, Expansional in Banach algebras,
{\it Ann. Sci. \'Ecole Norm. Sup. (4)} {\bf 6} (1973), 67--84.

\bibitem{Ar3}
H. Araki, Relative Hamiltonian for faithful normal states of a von Neumann algebra,
{\it Publ. Res. Inst. Math. Sci.} {\bf 9} (1973), 165--209.

\bibitem{Ar4}
H. Araki, Golden-Thompson and Peierls-Bogolubov inequalities for a general von Neumann
algebras, {\it Comm. Math. Phys.} {\bf 34} (1973), 167--178.

\bibitem{Ar1}
H. Araki, Some properties of modular conjugation operator of von Neumann algebras and
a non-commutative Radon-Nikodym theorem with a chain rule,
{\it Pacific J. Math.} {\bf 50} (1974), 309--354.

\bibitem{Ar5}
H. Araki, Relative entropy of states of von Neumann algebras,
{\it Publ. Res. Inst. Math. Sci.} {\bf 11} (1976), 809--833.

\bibitem{Ar2}
H. Araki, Relative entropy for states of von Neumann algebras II,
{\it Publ. Res. Inst. Math. Sci.} {\bf 13} (1977), 173--192.

\bibitem{AM}
H. Araki and T. Masuda, Positive Cones and $L_p$-Spaces for von Neumann Algebras,
{\it Publ. Res. Inst. Math. Sci.} {\bf 18} (1982), 339--411.

\bibitem{AD}
K. M. R. Audenaert and N. Datta, α-z-relative entropies,
{\it J. Math. Phys.} {\bf 56} (2015), 022202.

\bibitem{Bh}
R. Bhatia, {\it Matrix Analysis}, Springer-Verlag, New York, 1996.

\bibitem{BST}
M. Berta, V. B. Scholz and M. Tomamichel, R\'enyi divergences as weighted non-commutative
vector valued $L_p$-spaces, Preprint, arXiv:1608.05317 [math-ph].

\bibitem{C1}
A. Connes, Sur le th\`eor\'eme de Radon-Nikodym pour les poids normaux fid\`eles semi-finis,
{\it Bull. Sci. Math. (2)} {\bf 97} (1973), 253--258.

\bibitem{Co}
A. Connes, Caract\'erisation des espaces vectoriels ordonn\'es sous-jacents aux alg\`ebres
de von Neumann, {\it Ann. Inst. Fourier (Grenoble)} {\bf 24} (1974), 121--155.

\bibitem{Da}
N. Datta, Min- and max-relative entropies and a new entanglement monotone,
{\it IEEE Trans. Inform. Theory} {\bf 55} (2009), 2816--2826.

\bibitem{Di}
J. Dixmier, Formes lin\'eaires sur un anneau d'op\'erateurs,
{\it Bull. Soc. Math. France} {\bf 81} (1953), 9--39.

\bibitem{Do}
M. J. Donald, Relative hamiltonians which are not bounded from above,
{\it J. Funct. Anal.} {\bf 91} (1990), 143--173.

\bibitem{FHR}
U. Franz, F. Hiai and \'E. Ricard,
Higher order extension of L\"owner's theory: Operator $k$-tone functions,
{\it Trans. Amer. Math. Soc.} {\bf 366} (2014), 3043--3074.

\bibitem{Ha}
U. Haagerup, The standard form of von Neumann algebras,
{\it Math.\ Scand.}, {\bf 37} (1975), 271--283.

\bibitem{Ha3}
U. Haagerup, Operator valued weights in von Neumann algebras, I, II,
{\it J. Funct. Anal.} {\bf 32} (1979), 175--206; {\bf 33} (1979), 339--361.

\bibitem{Ha2}
U. Haagerup, $L^p$-spaces associated with an arbitrary von Neumann algebra,
Colloq. Internat. CNRS, no. 274, CNRS, Paris, 1979, pp. 175--184.

\bibitem{Hi0}
F. Hiai, Concavity of certain matrix trace and norm functions,
{\it Linear Algebra Appl.} {\bf 439} (2013), 1568--1589.

\bibitem{Hi}
F. Hiai, Matrix Analysis: Matrix Monotone Functions, Matrix Means, and Majorization,
{\it Interdisciplinary Information Sciences} {\bf 16} (2010), 139--248.

\bibitem{HiMo}
F. Hiai and M. Mosonyi, Different quantum $f$-divergences and the reversibility of
quantum operations,
{\it Rev. Math. Phys.} {\bf 29} (2017), 1750023, 80 pp.

\bibitem{HMPB}
F. Hiai, M. Mosonyi, D. Petz and C. B\'eny, Quantum $f$-divergences and error correction,
{\it Rev. Math. Phys.} {\bf 23} (2011), 691--747. Erratum: Quantum $f$-divergences
and error correction, {\it Rev. Math. Phys.} {\bf 29} (2017), 1792001, 2 pp.

\bibitem{Hs}
M. Hilsum, Les espaces $L^p$ d'une alg\`ebre de von Neumann d\'efinies par la
deriv\'ee spatiale, {\it J. Funct. Anal.} {\bf 40} (1981), 151--169.

\bibitem{Je0}
A. Jen\v cov\'a, Reversibility conditions for quantum operations,
{\it Rev. Math. Phys.} {\bf 24} (2012), 1250016.

\bibitem{Je1}
A. Jen\v cov\'a, R\'enyi relative entropies and noncommutative $L_p$-spaces,
arXiv:1609.08462 [quant-ph].

\bibitem{Je2}
A. Jen\v cov\'a, R\'enyi relative entropies and noncommutative $L_p$-spaces II,
arXiv:1707.00047 [quant-ph].

\bibitem{Je3}
A. Jen\v cov\'a, Private communication, 2017, September.

\bibitem{JP}
A. Jen\v cov\'a and D. Petz, Sufficiency in quantum statistical inference,
{\it Comm. Math. Phys.} {\bf 263} (2006), 259--276.

\bibitem{Ko0}
H. Kosaki, Interpolation theory and the Wigner-Yanase-Dyson-Lieb concavity,
{\it Comm. Math. Phys.} {\bf 87} (1982), 315--329.

\bibitem{K1}
H. Kosaki, Applications of the complex interpolation method to a von Neumann algebra:
non-commutative $L^p$-spaces, {\it J. Funct. Anal.} {\bf 56} (1984), 29--78.

\bibitem{Ko}
H. Kosaki, Relative entropy of states: a variational expression,
{\it J. Operator Theory} {\bf 16} (1986), 335--348.

\bibitem{K4}
H. Kosaki, A remark on Sakai's quadratic Radon-Nikodym theorem,
{\it Proc. Amer. Math. Soc.} {\bf 116} (1992), 783--786.

\bibitem{LR}
A. Lesniewski and M. B. Ruskai, Monotone Riemannian metrics and relative
entropy on noncommutative probability spaces,
{\it J. Math. Phys.} {\bf 40} (1999), 5702--5724.

\bibitem{Li}
E. Lieb, Convex trace functions and the Wigner-Yanase-Dyson conjecture,
{\it Adv. Math.} {\bf 11} (1973), 267--288.

\bibitem{Mo}
M. Mosonyi, Convexity properties of the quantum R\'enyi divergences, with applications
to the quantum Stein's lemma, In: 9th Conference on the Theory of Quantum Computation,
Communication and Cryptography (TQC 2014), pp. 88--98,
{\it LIPIcs. Leibniz Int. Proc. Inform.}, Vol. 27, 2014.
arXiv:1407.1067 [quant-ph]

\bibitem{MH}
M. Mosonyi and F. Hiai, On the quantum R\'enyi relative entropies and related
capacity formulas, {\it IEEE Trans. Inform. Theory} {\bf 57} (2011), 2474--2487.

\bibitem{MO}
M. Mosonyi and T. Ogawa, Quantum hypothesis testing and the operational interpretation
of the quantum R\'enyi relative entropies,
{\it Comm. Math. Phys.} {\bf 334} (2015), 1617--1648.

\bibitem{Mu}
E. E. M\"uller, Note on relative entropy and thermodynamical limit,
{\it Helv. Phys. Acta} {\bf 58} (1985), 622--632.

\bibitem{MDST}
M. M\"uller-Lennert, F. Dupuis, O. Szehr, S. Fehr and M. Tomamichel,
On quantum R\'enyi entropies: A new generalization and some properties,
{\it J. Math. Phys.} {\bf 54} (2013), 122203.

\bibitem{Ne}
E. Nelson, Notes on non-commutative integration,
{\it J. Funct. Anal.} {\bf 15} (1974), 103--116.

\bibitem{OP}
M. Ohya and D. Petz, {\it Quantum entropy and its use},
Springer-Verlag, Berlin, 1993; Second edition, 2004.

\bibitem{Pe}
D. Petz, Quasi-entropies for states of a von Neumann algebra,
{\it Publ. Res. Inst. Math. Sci.} {\bf 21} (1985), 787--800.

\bibitem{Pe0}
D. Petz, Quasi-entropies for finite quantum systems,
{\it Rep. Math. Phys.} {\bf 23} (1986), 57--65. 

\bibitem{Pe1}
D. Petz, Sufficient subalgebras and the relative entropy of states of a von Neumann algebra,
{\it Comm. Math. Phys.} {\bf 105} (1986), 123--131.

\bibitem{Pe3}
D. Petz, Sufficiency of channels over von Neumann algebras,
{\it Quart. J. Math. Oxford Ser. (2)} {\bf 39} (1988), 97--108.

\bibitem{Pe2}
D. Petz, A variational expression for the relative entropy,
{\it Comm. Math. Phys.} {\bf 114} (1988), 345--349.

\bibitem{PRV}
D. Petz, G. A. Raggio and A. Verbeure, Asymptotics of Varadhan-type and the Gibbs variational
principle, {\it Comm. Math. Phys.} {\bf 121} (1989), 271--282.

\bibitem{RS}
M. Reed and B. Simon, {\it Methods of Modern Mathematical Physics I: Functional Analysis},
Second edition, Academic Press, New York, 1980.

\bibitem{Se}
I. Segal, A non-commutative extension of abstract integration,
{\it Ann. of Math.} {\bf 57} (1953), 401--457.

\bibitem{St}
S. Str\u atil\u a, {\it Modular Theory in Operator Algebras},
Editura Academiei and Abacus Press, Tunbridge Wells, 1981.

\bibitem{StZs}
S. Str\u atil\u a and L. Zsid\'o, {\it Lectures on Von Neumann Algebras},
Editura Academiei and Abacus Press, Tunbridge Wells, 1979.

\bibitem{Ta}
M. Takesaki, {\it Theory of Operator Algebras II}, Encyclopaedia of Mathematical Sciences,
Vol. 125, Springer-Verlag, Berlin, 2003.

\bibitem{Te}
M. Terp, $L^p$ spaces associated with von Neumann algebras, Notes, Copenhagen Univ., 1981.

\bibitem{Ti}
O. E. Tikhonov, Continuity of operator functions in topologies connected with a trace on
a von Neumann algebra (Russian),
{\it Izv. Vyssh. Uchebn. Zaved. Mat.}, 1987, no.\,1, 77--79;
translated in {\it Soviet Math. (Iz. VUZ)} {\bf 31} (1987), 110--114.

\bibitem{Um}
H. Umegaki, Conditional expectation in an operator algebra, IV (entropy and information),
{\it K\= odai Math. Sem. Rep.} {\bf 14} (1962), 59--85.

\bibitem{WWY}
M. M. Wilde, A. Winter and D. Yang,
Strong converse for the classical capacity of entanglement-breaking
and Hadamard channels via a sandwiched R\'enyi relative Entropy,
{\it Comm. Math. Phys.} {\bf 331} (2014), 593--622.

\end{thebibliography}
\end{document}